%% file: main_arxiv.tex
\pdfoutput=1
\documentclass[11pt]{article}
\usepackage[margin=1in]{geometry}
\usepackage{booktabs}
\usepackage{array}
\usepackage{multirow}
\usepackage{multicol}

\usepackage{graphicx} % Required for inserting images

\usepackage{multirow,amssymb,amsmath,tabularx, amsthm, amsfonts,color}
\usepackage{import}
\usepackage{placeins}
\usepackage[ruled,linesnumbered]{algorithm2e}
\usepackage{chngcntr, apptools}
\usepackage{mathtools}
\usepackage{tikz}
\usepackage{pgfplots}

\definecolor{col1}{rgb}{0.192,0.510,0.741}
\definecolor{col2}{HTML}{E6550D}
\definecolor{col3}{rgb}{0.2470588, 0.5019608, 0.3803922}
\definecolor{gold}{RGB}{255,215,0}

\let\oldnl\nl% Store \nl in \oldnl
\newcommand{\nonl}{\renewcommand{\nl}{\let\nl\oldnl}}% Remove line number for one line

\newlength\mylen

\newtheorem{proposition}{Proposition}

\usepackage{natbib}
\usepackage[hidelinks]{hyperref}
\usepackage{url}
\usepackage{cleveref}

\title{A Convolution Process for Sea
Surface Temperature Hot-Spot Identification in the Mediterranean Sea}
\date{}
\author{Leonardo Marchesin\textsuperscript{1}, Alessandra Menafoglio\textsuperscript{1} and Piercesare Secchi\textsuperscript{1}
}

\begin{document}

\maketitle
\let\thefootnote\relax
\footnotetext{\hspace{-0.57cm}\textsuperscript{1}MOX - Department of Mathematics, Politecnico di Milano, Milan, Italy.}
%\title{Title} %%%%%%%%%%%%
%\author[Initial Surname]{Author}
%\address{Address}

\begin{abstract}
\input{abstract.tex}\\

\noindent \textbf{Keywords:} Network-based spatial processes; Flow-informed covariance modeling; Moving-average processes; Sea surface temperature; Extreme events; Hot-spot identification.
\end{abstract} %%%%%%%%%

\bigskip

\input{body.tex}

\section*{Acknowledgments}
L. Marchesin gratefully acknowledges the financial support of his PhD fellowship from Polis-Lombardia. P. Secchi acknowledges the PRIN2022 project CoEnv - Complex Environmental Data and modelling (CUP2022E3RY23) founded by the European Union - NextGenerationEU program and by the Italian Ministry for University and Research (MUR).  All authors acknowledge the support of MUR, grant Dipartimento di Eccellenza 2023–2027. The simulations discussed in this work were, in part, performed on the HPC Cluster of the Department of Mathematics of Politecnico di Milano which was funded by MUR grant Dipartimento di Eccellenza 2023-2027.
\bibliographystyle{Chicago}

\bibliography{bibliography.bib}

\newpage

\appendix
\input{appendix}

\end{document}

%% file: abstract.tex
Sea surface temperature (SST) is a fundamental determinant of global climate dynamics and economic activity. Reliable projections of future SST patterns depend critically on a rigorous characterization of the underlying spatial random field. In this study, we introduce a novel convolution-based covariance framework tailored to geostatistical domains constrained by physical barriers and influenced by vector-driven flows. By discretizing the continuous marine domain into a directed linear network that preserves the orientation of ocean currents, we construct a moving-average stochastic process whose dynamic is encoded via a Markovian transition‐probability matrix on the network’s vertices. The induced covariance structure emerges as a weighted combination of a spatial kernel and flow‐dependent weights, giving rise to a complex estimation problem. To stabilize inference, we propose a penalized estimator that regularizes covariance parameters while enforcing consistency with known hydrodynamic properties. We then embed this covariance model into a Monte Carlo simulation framework to refine RCP-based SST projections and to identify thermal “hot spots” of heightened ecological risk. Our approach delivers a statistically principled framework that prevents physical inconsistencies -- such as correlations across land barriers -- providing a robust basis for quantifying uncertainty in future SST forecasts and for guiding targeted environmental assessments.

%% file: body.tex
\section{Introduction}
\label{sec:intro}

Sea surface temperature (SST) is a vital component in sustaining life and global prosperity. Extreme values of SST can lead to the loss of part of the biodiversity in the sea \citep{heatwaves2, heatwaves3}. In particular, the Mediterranean sea is a hot-spot for climate change \citep{hotspot}: despite its limited extent it hosts a large part of the World's marine wildlife and plants. Accurately assessing the risks associated with rising water temperatures requires moving beyond point predictions, which fail to capture the full spectrum of variability necessary for such a complex and sensitive issue. Consequently, robust statistical methods are essential for evaluating the risk of exceeding critical thresholds and conducting insightful analyses \citep{extremeeventsHuser, ExtremeeventsBolin, ExtremeeventsFrench}.

Future SST projections are commonly studied using Representative Concentration Pathway (RCP) scenarios, which simulate future greenhouse gas concentrations and their impacts on global systems. These scenarios are implemented within complex climate models using ensemble methods that aim to capture the intricate interactions between greenhouse gas concentrations and Earth's features \citep{CMIP5}. However, such projections are limited to spatial point estimates of the mean SST, restricting their ability to provide comprehensive statistical insights into potential risks and uncertainties. This work introduces a new framework for spatial statistics which allows for physics-based predictive distribution of future SST, enabling assessment and uncertainty quantification of the risks associated with rising temperatures.

Spatial statistics concerns the analysis of spatially indexed data and the dependence relationships arising therein. Traditionally, dependence among observations has been characterized via covariance functions, with most methodologies formulated under the assumption of a Euclidean distance metric \citep{Cressie_1993}. However, when one substitutes a non‐Euclidean distance into these classical parametric covariance models, the resulting covariance function may fail to satisfy the requirements of positive definiteness, producing relations that are invalid for statistical inference \citep{Curriero_2006}.

There are many scenarios where an Euclidean framework cannot accurately represent the domain. Spatial connectivity, driven by underlying physical phenomena, may create stronger relationships between particular pairs of locations while weakening others. Additionally, discontinuities (for instance, a gap within the spatial domain) may render some connections infeasible. In such cases, the traditional Euclidean metric proves inadequate for capturing the complexities of the domain. Therefore, alternative metrics need to be employed to more effectively and comprehensively describe the relationships between locations. Water resources, in particular, represent a typical scenario where the challenges mentioned above emerge.

\Cref{img:whole} illustrates the extent of the area studied in this work, where a velocity field shapes the structure of a spatial domain. The figure displays the point estimates projections of the sea‐surface temperatures in the northern Thyrrenian Sea in August 2050, with arrows representing the prevailing marine currents. These currents impose directional connectivity across the domain, and any spatial modeling framework that aims to accurately capture relationships between locations must incorporate these flow-driven constraints. Accounting for the physical structure induced by the velocity field is essential for a more faithful representation of spatial dependence.

\begin{figure}[t]
    \centering
    \begin{minipage}{0.8\linewidth}
        \hspace{3 cm}
        \includegraphics[width=0.55\linewidth]{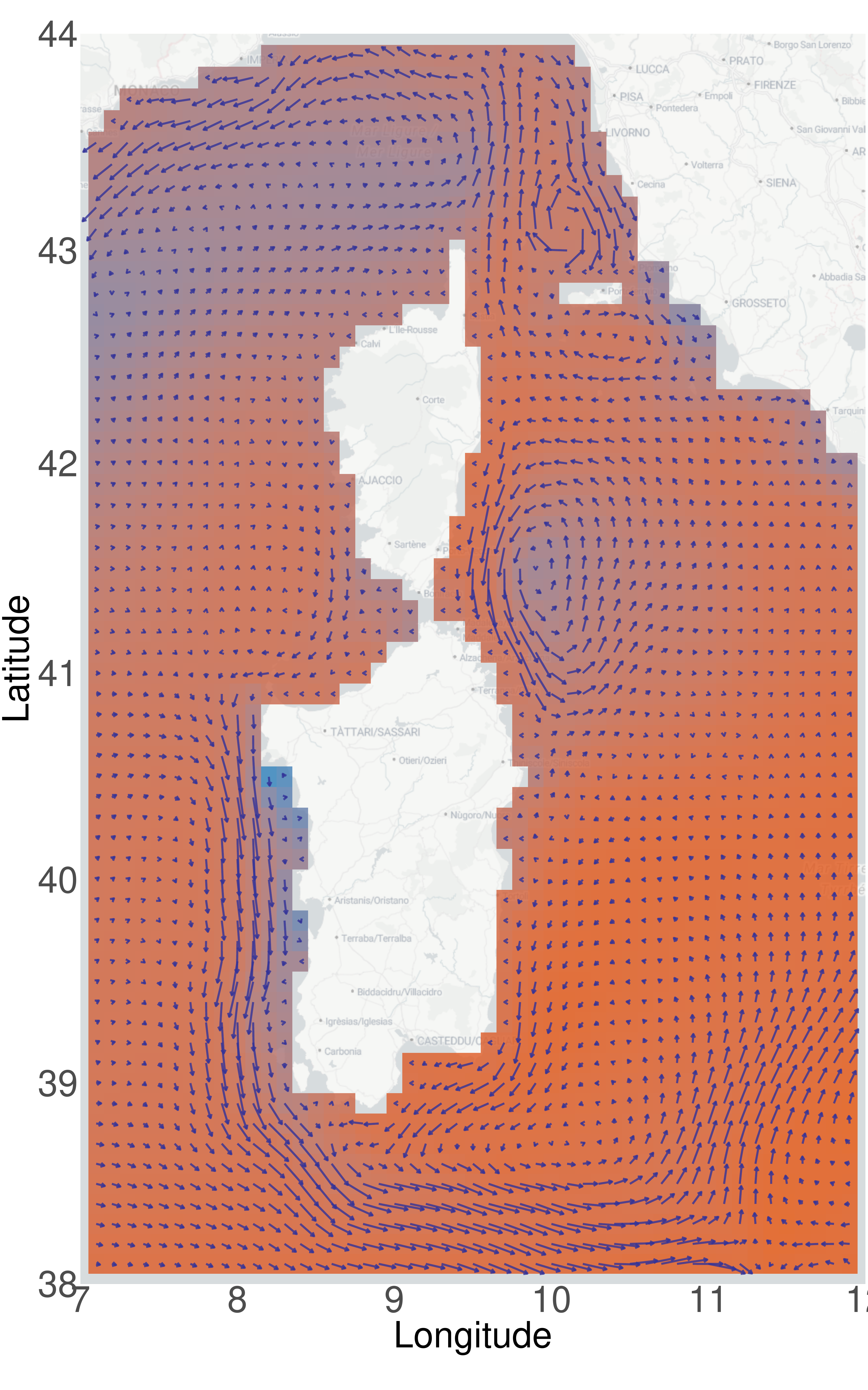}
    \end{minipage}%
    \begin{minipage}{0.15\linewidth}
        \hspace{-4 cm}
        \centering
        \begin{tikzpicture}
            \begin{axis}[
                hide axis,
                scale only axis,
                height=0pt,
                width=0pt,
                ymin=0, ymax=1, 
                xmin=0, xmax=1, 
                colorbar,
                colormap={col1-to-col2}{[1pt] color(0pt)=(col1); color(100pt)=(col2)},
                point meta min=24,
                point meta max=28,
                colorbar style={
                    title=Temperature $C^{\circ}$,
                    title style={font=\small},
                    xtick={24, 26, 28},
                    xticklabels={24,26,28},
                    width=0.3cm, 
                    height=6cm,
                    xticklabel style={font=\small}
                }
            ]
            \end{axis}
        \end{tikzpicture}
    \end{minipage}
    
    \caption{The spatial domain of the northern Tyrrhenian Sea is depicted with the point estimates projections of mean sea-surface temperatures in August 2050, provided by the RCP 4.5 emissions pathway. The arrows represent the velocity fields. Sardinia, Corsica, and the Italian peninsula coastline are clearly recognizable.}
    \label{img:whole}
\end{figure}

This work introduces a novel framework for developing spatial statistical models that explicitly incorporate an underlying velocity field governing the domain. The objective is to capture and represent domain connectivity through a covariance structure informed by the flow dynamics, thereby enabling spatial statistical analysis, uncertainty quantification, and simulation studies that reflect the physical interactions implied by geographic configuration.

The first step in the analysis involves defining a suitable representation of the spatial domain that accounts for the directionality and connectivity induced by the current field. To this end, the linear network representation is adopted, as it provides a natural and effective structure for modeling connected domains. This representation captures the directional and relational characteristics of the system, with the flow dynamics encoded directly through the network’s edges, preserving the physical coherence of the spatial dependencies.

Recent advancements in spatial statistics have rigorously formalized the theory of random fields on metric graphs \citep{anderes2020, bolin2024}. These foundational works provide valid covariance models based on the geodesic metric, effectively overcoming the limitations of Euclidean distances in non-convex domains. However, standard metric graph models are typically symmetrical, assuming that dependence relies solely on the distance along the graph, regardless of direction. While this assumption holds for many diffusive processes, it does not account for the strong advection characteristic of sea currents.

To model spatial dependence over complex, irregular and physics-driven domains, one widely used framework is based on stochastic partial differential equations (SPDEs) \citep{spdeLindgren, spdeLindgren-2, clarotto}. These methods construct Gaussian random fields with Matérn-type covariance structures by solving SPDEs, providing a principled bridge between continuous-domain models and sparse representations on triangulated meshes. This framework has proven effective in many applications, particularly due to its computational efficiency and scalability. However, classical SPDE approaches typically rely on symmetric differential operators, which naturally yield spatially symmetric covariance structures. This assumption presents challenges when modeling processes on directed networks, where edge asymmetry breaks operator self-adjointness. Moreover, extending SPDEs to accommodate nonstationarity, anisotropy, or complex boundary behavior often requires careful design of the differential operator and mesh structure. While there are emerging efforts to relax these assumptions, a general and practical framework for covariance modeling on directed networks remains underdeveloped.

Another closely related methodology is spatial smoothing with differential regularization proposed by \cite{clemente2026, Tomasetto_Arnone_Sangalli_2024}. This method applies spatial smoothing to the available data, considering a differential penalization that incorporates possible prior knowledge about the phenomenon, expressed through a partial differential equation \citep{ramsay2002, Azzimonti_Sangalli_Secchi_Domanin_Nobile_2015}. However, unlike geostatistical methods, spatial regularization does not provide a spatial covariance function, but rather captures the spatial variability through a deterministic function, estimated through smoothing. Having an (estimated) covariance model at our disposal is crucial not only for enabling spatial prediction but also for simulating multiple realistic scenarios—an essential requirement in studies focused on uncertainty quantification and scenario analysis. This ability to generate diverse and plausible scenarios captures the inherent variability in the data and gives geostatistical methods a significant advantage over purely smoothing-based approaches.

In this work, we propose a novel class of valid covariance models for directed linear networks, based on a new construction of a moving-average-type convolution process. Our approach is inspired by the moving average framework developed for stream network systems \citep{VerHoef2006, peterson2007, cressie2006}, where the process value at a given vertex is defined as a local average of random noise weighted by a spatial kernel. In the proposed framework the flow dynamics governing the moving average construction are modeled stochastically, through a Markov chain defined on the network. This stochastic encoding of the transport mechanism allows the model to naturally propagate the information about the velocity field into the covariance structure.

The remainder of this work is organized as follows. \Cref{sec:physics-driven} introduces the northern Tyrrhenian Sea domain of analysis and its approximation as a linear network. \Cref{sec:convprocess} defines the proposed convolution-based process and presents the functional form of the resulting covariance model. \Cref{sec:estimation} outlines the covariance estimation procedure. \Cref{sec:summarysimulation} summarizes the results of the simulation study. \Cref{sec:casestudy} applies the methodology to the spatial field defined by the sea surface temperature of the northern Tyrrhenian Sea, estimates and analyzes its spatial dependence structure and uses it to simulate multiple realizations of the sea surface temperature, leveraged for conducting an extreme event analysis. Proofs of all theoretical results, along with a simulation study and detailed algorithmic specifications, are provided in the Appendix.

\section{Representing a physics-driven spatial domain}
\label{sec:physics-driven}

\subsection{SST and current data}
\label{subsec:data}

The Copernicus Climate Change Service \citep[C3S,][]{copernicusprojections} provides Representative Concentration Pathway (RCP) water temperature projections up to 2099, evaluated using the European Regional Seas Ecosystem Model \citep[ERSEM, ][]{ERSEM}. These projections offer high-resolution data on various environmental variables, including sea surface temperature (SST) and sea currents, across the entire northern hemisphere.

Representative Concentration Pathways (RCPs)  are named according to the range of radiative forcing values they are expected to achieve by the year 2100 \citep{rcp}. In particular, we consider the RCP 4.5. It is described as an intermediate scenario. It is more optimistic than the current situation, as it assumes a moderate reduction of emissions.

For our analysis, we employ monthly mean SST for August, which represents the period of maximum thermal stress on marine ecosystems, both for the sea currents and the SST. We focus on the northern Mediterranean basin (specifically, the northern Tyrrhenian Sea), as this region may provide thermal refuge for marine organisms when other areas experience extreme warming during peak summer conditions. Understanding projected temperature changes in this region is essential for assessing habitat persistence under future climate scenarios \citep{watertemperature}.

\Cref{img:whole} illustrates the spatial extent of the study area. It presents projected August sea temperature and velocity fields for the mid-twenty‐first century (2050), serving as representative snapshots of future warming under the RCP 4.5 emissions pathway. The available temperature projections are point estimates, which, while informative, are insufficient on their own for deriving statistically robust conclusions about critical regions. To enhance inferential reliability, a detailed analysis of the associated spatial dependence -- and therefore, uncertainty -- is essential.

Specifically, let \(\{Y_s, \hspace{.1 cm} s \in D\}\) be a Gaussian random field representing sea surface temperature (SST) over the analysis domain $D$ \citep{SSTGaussian}. We observe this field at locations \(\{s_1, \dots, s_n\}\), yielding the partial realization \(\{Y_{s_1}, \dots, Y_{s_n}\}\). The field’s mean surface is prescribed by the Copernicus SST projections. The primary objective of this work is to characterize the covariance structure of SST. To this end, we compute yearly empirical residuals by comparing Copernicus projections with observed SST data. Indeed, the projection period begins in 2006 and overlaps with available satellite data that can be found in the Copernicus Marine Environment Monitoring Service \citep[CMEMS][]{copernicus}; the observations are provided until 2022. For this study, the selected CMEMS data set describes temperature and currents field over the Mediterranean Sea, and it is available on a regular grid. These data can be used to compute empirical residuals between RCPs and observations, which are then integrated with observed current-velocity measurements to inform the estimation of a physics-informed spatial covariance model.

Since the data are provided on a regular grid, it is natural to represent $D$ as a linear network, where each grid node is a vertex and the current field induces a directed edge structure between adjacent neighbors. This construction requires no boundary conditions, which are instead needed by PDE-based approaches and are particularly difficult to specify reliably over the irregular coastal geometries of the Mediterranean basin.

\subsection{Linear network domain}
\label{subsec:lin_net}

A directed linear network in $\mathbb{R}^2$ is defined as a pair $(\mathcal{L}, V)$ where $\mathcal{L} = \cup_i l_i$ is the finite set of all directed edges that make up the network and $V\subset\mathbb{R}^2$ is the finite set of vertices. Each edge is defined as $l = l_{[a, b]} = \{ u \in \mathbb{R}^2 : u = ta + (1-t)b;\, 0 \leq t \leq 1 ;\, a,b \in V \}$. The intersection of any two edges is either empty or a vertex. Let $\mathcal{N} = V \cup  \mathcal{L}$ to be the whole domain of points in $\mathbb{R}^2$ obtained as the union of all vertices and points on the edges of the linear network $(\mathcal{L}, V)$. Hence, any $u \in \mathcal{N}$ either lies on some unique edge $l_{\left[a,b\right]}$, or is a vertex.

In fact, a linear network may more generally be defined in any metric space, with vertices embedded in that space and edges representing connections between them. For what follows, a fundamental attribute of each edge is its length as measured by the chosen metric. The specific metric is problem-dependent -- any appropriate distance function may be employed, and multiple metrics can even coexist within the same ambient space. In this work, we adopt the edge length -- $length(l_{\left[a,b\right]}) = \|a-b\|_2$, $\|\cdot\|_2$ denoting the Euclidean distance -- as the distance metric between the vertices of an edge.

A path $ p=p_{(v,x)}$ between two vertices $v$ and $x$ in $V$ is an ordered finite sequence of connected edges in $\mathcal{L}$ such that $ p_{(v,x)}=\{l_i=l_{ \left[a_i, b_i\right]} \in \mathcal{L},\, i = 1,...,n: a_1 = v; \, b_n = x; b_i = a_{i+1}\}$. The length of the path is the sum of the lengths of all edges of the path, i.e.,  $length(p_{\left(v,x\right)}) = |p_{\left(v,x\right)}| = \sum_{i = 1}^{n} \|a_i - b_i\|_2$. In the following, we denote by $\mathcal{P}(v,x)$ the set of all possible paths from vertex \(v\) to \(x\). Note that in a directed network the set of paths in different directions can be different -- i.e., the paths from $v$ to $x$ might be different than those from $x$ to $v$. If there are no paths connecting \(v\) and \(x\), then $\mathcal{P}(v,x)$ is the empty set. Finally, observe that cyclic paths may occur. For instance, the set $\mathcal{P}(v,x)$ contains as distinct elements both the path $p_{(v,x)}$ and the path $p_{(v,x)} \cup p_{(x,x)}$.

More in general, the natural definition of the sub-path $p_{\left(u,x\right)}$ between a point in the network $u \in \mathcal{N}$ and a vertex $x \in V$ goes as follows. Let $u \in l_{[v,b_1]}$. If $b_1=x$, then $p_{(u,x)} = [u,x]$ is the sub-edge and its length is $|p_{(u,x)}|=lenght([u,x])=\|u-x\|_2$. If $b_1 \not = x,\) let $p_{\left(v,x\right)} = \left\{l_{[v,b_1]}, l_2 \dots, l_n\right\}$ and set $p_{\left(u,x\right)} = \left\{\left[u,b_1\right], l_2, \dots, l_n\right\}$. Denoting $\left\{l_2, \dots, l_n\right\}$ as $p_{(b_1,x)}$, we say that $p_{\left(b_1,x\right)}\subseteq p_{\left(u,x\right)} \subseteq p_{\left(v,x\right)}$. Finally, we define the length of the sub-path $p_{\left(u,x\right)}$ as $|p_{\left(u,x\right)}| = length(\left[u,b_1\right]) + |p_{(b_1,x)}|$. Note that, although individual edge lengths coincide with Euclidean distances, the distance between a point \(u \in \cal N\) and a vertex \(x \in V\) constrained to the network topology may differ substantially from the direct Euclidean distance of the two points in \(\mathbb{R}^2.\)

To construct a linear network $(\mathcal{L}, V)$ approximating the water domain \(D\) represented in \Cref{img:whole}, we outline a procedure for defining its vertices and edges, which leverages the regularity of the grid \({\cal G}\) on which both current and temperature data are collected. The vertices $V$ correspond to the nodes $\{s_1,...,s_n\}$ of \({\cal G}\) within the water domain \(D\). Note that only those nodes for which we observe a temperature value are included, thus explicitly excluding non-water regions of the domain. The edges in $\cal L$ are designed to uphold the directionality dictated by the velocity field, ensuring that the network faithfully represents the natural connectivity of locations linked by water flow. Specifically, on a regular grid each node has eight surrounding neighbors, as illustrated in \Cref{img:network_construction}a. An edge connects a vertex $a \in V$ to one of its eight adjacent nodes on the grid. Drawing inspiration from river topography \citep{Dinf, hydroProximityMeasure}, the non-null velocity vector $\textbf{v}_a$ observed at the vertex $a$ naturally suggests two likely flow directions -- those most closely aligned with its orientation. The two directed edges are defined as the vectors along these two directions, whose sum -- according to the parallelogram rule -- equals $\textbf{v}_a$. As illustrated in \Cref{img:network_construction}b, these connect $a$ to the two most adjacent nodes, $e$ and $f$ in the figure. We include the edge \(l_{[a,e]}\) in \({\cal L}\) as an edge of the network if and only if \(e \in V,\) and the same goes for \(l_{[a,f]}\). If \(e\) or \(f\) do not belong to \(V\), the vertex \(a\) is called an \textit{outlet}. If the vertex \(a\) has no incoming edges, it is called a \textit{source}. Note that a vertex can simultaneously be a source and a outlet. Repeating this procedure for all vertices in \(V,\) a complete directed linear network $(\cal{L}, V)$ is constructed, as represented in \Cref{img:network} where we use different colors to identify sources and outlets. The network corresponds to current data in \Cref{img:whole}.

\begin{figure}
    \centering
    \begin{minipage}{0.48\linewidth}
        \centering
        \includegraphics[width=0.65\linewidth]{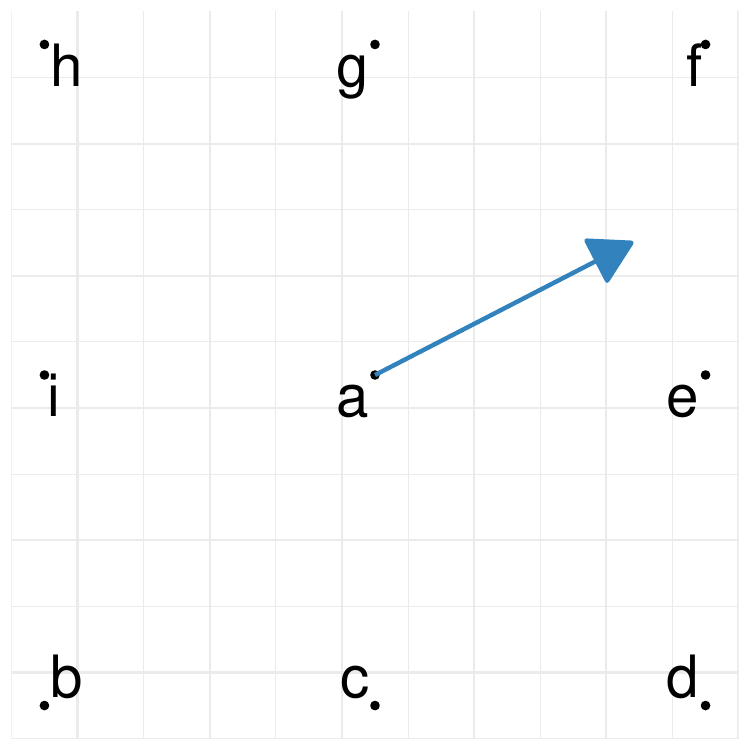}
        \vspace{0.1cm} % Small gap before the label
        \\ (a) 8 locations around the point of the grid.
    \end{minipage}
    \hfill
    \begin{minipage}{0.48\linewidth}
        \centering
        \includegraphics[width=0.65\linewidth]{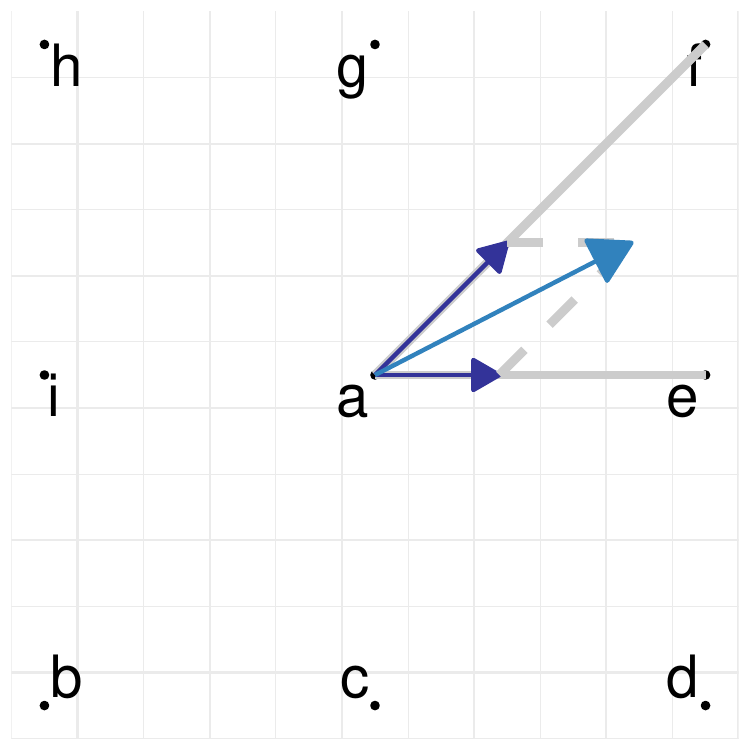}
        \vspace{0.1cm}
        \\ (b) Definition of the edges.
    \end{minipage}
    
    \caption{Construction of the edges.}
    \label{img:network_construction}
\end{figure}

In this work, velocity and temperature data are co-located on the same regular rectangular grid. We exploit this co-location in constructing the network representation. More generally, when observations are not co-located or are irregularly spaced, one could define a sufficiently fine grid such that the observations lie on it. Extension to such non-co-located settings will be explored in future work.

The connectivity patterns induced by the sea currents drive the network construction. We focus on the direction of the velocity field in each point, hence overlooking the magnitude of the velocity of these vectors. If this additional information were to be incorporated, we could account for both the Euclidean distance between vertices and the velocity of the water in that direction by setting the edge length as proportional to the time required for the water flow to travel the edge -- namely, proportional to the ratio between the Euclidean distance of the edge and the magnitude of the corresponding  water velocity vector. Note that this change would not affect the following construction.

We do not impose restrictive boundary conditions on the perimeter vertices; in particular, we allow water to both exit and enter the domain. This choice reflects the physical structure of the system: the study area, as illustrated in \Cref{img:whole}, is inside of a bigger system, comprised of the surrounding seas and the open ocean. Consequently, the domain \(D\) should be regarded as an open system, with water entering from external sources and necessarily flowing outward beyond its boundaries without any specific constraint. Thus the network has some specific points, the  \textit{sources}, from which water come in from outside, and some points from which the water flows outside \(D\), the \textit{outlets}.

It should be noted that, although the topology of the linear network has a clear influence on the subsequent covariance structure, the mathematical construction of the following sections does not rely on the specific procedure adopted here to encode through a graph the physics-driven knowledge of the case study under consideration. Indeed, any alternative procedure capable of producing a meaningful linear directed graph could be employed.

\begin{figure}[t]
    \centering
    \includegraphics[width=.45\linewidth]{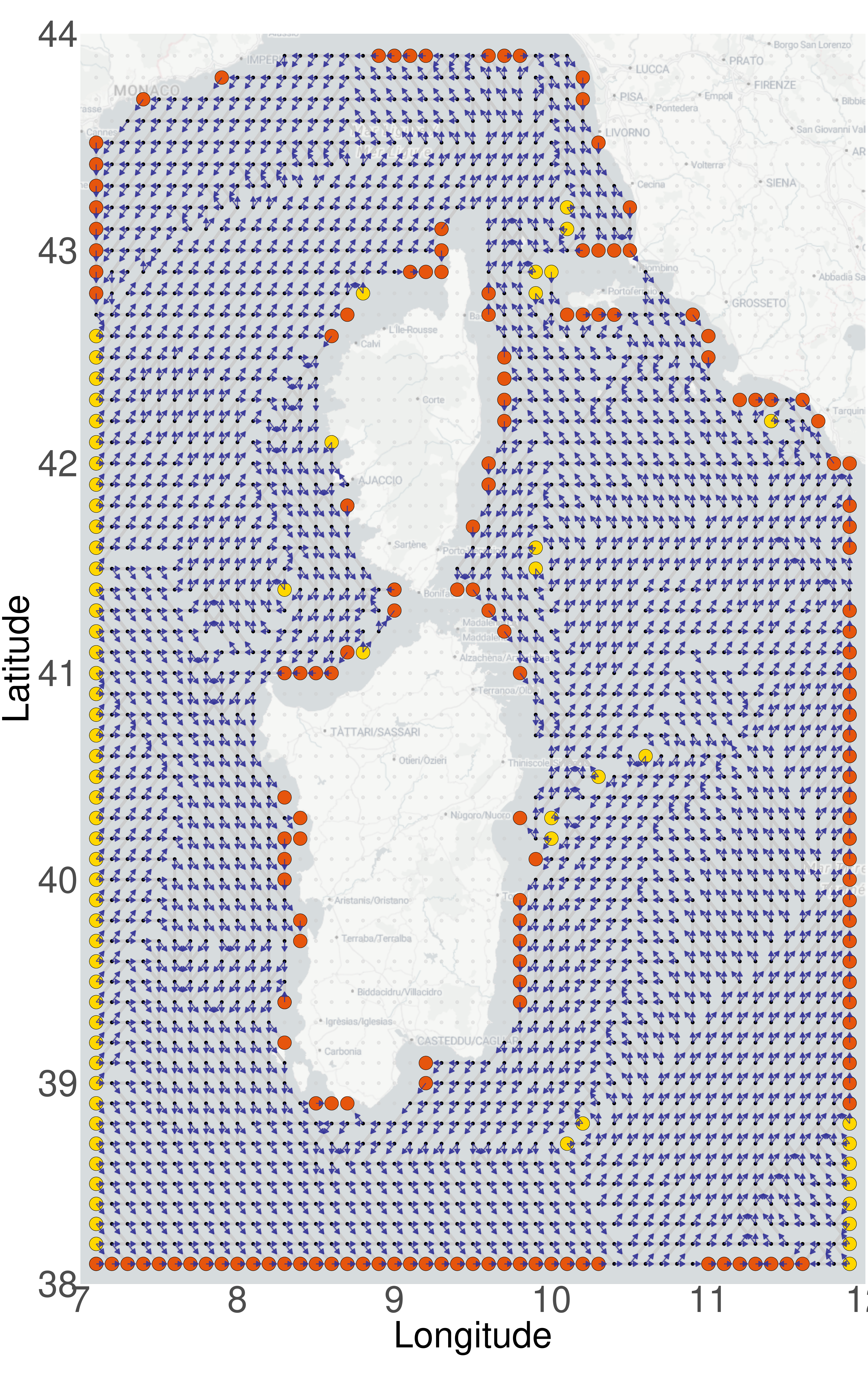}

    \vspace{0.3cm}

\begin{tikzpicture}[y=0.5cm, x=0.5cm]
\fill[gold] (1.8,0) circle [radius=0.5];
\node[below] at (2,-0.5) {Sources};

\fill[col2] (6,0) circle [radius=0.5];
\node[below] at (6,-0.5) {Outlets};

\end{tikzpicture}

    \caption{The linear network in the whole domain of analysis. In blue the directed edges composing the network are represented. Grey points unconnected with the others are non-water locations, explicitly excluded from the domain of analysis. The sources and the outlet points are highlighted.}
    \label{img:network}
\end{figure}

\subsection{A Markov chain representation}
\label{subsec:network_dynamic}

Crucial to the following construction is the definition of how the physical phenomenon underlying the spatial system influences the dynamics within the domain supporting the random process. In this work, the physics of water within the linear network is modeled as a Markov chain $(V \cup \{S\}, \pi)$, whose state space is the union of the set $V$ of vertices of the network with an external absorbing state $S$, called the \textit{sink}. At each vertex, multiple potential paths forward may exist; each edge of the network is thus assigned a transition probability by the transition matrix $\pi$.  The Markovian framework provides a simple yet powerful approach for modeling network dynamics, preserving the system's complexity and generality. Moreover, this approach allows an intuitive understanding of the network's behavior.

To construct the Markov chain, we start by defining its transition probability matrix $\pi$. For every vertex $a \in V$, the velocity $\textbf{v}_a$ is decomposed along the two directions identified by the rule illustrated in \Cref{img:network_construction}. The magnitudes of the resulting two components, normalized by their sum, are collected in a weighted adjacency matrix \(M.\) Details on the computation of \(M\) are reported in the Appendix. If a component identifies an edge \(l_{[a,e]} \in {\cal L}\) of the network, the transition probability \(\pi_{[a,e]}\) is defined as its normalized magnitude. If a component points to a node of the grid \({\cal G}\) that does not belong to \(D\), -- i.e. \(a\) is an outlet -- then its normalized magnitude is added to \(\pi_{[a,S]}\). Finally, we set \(\pi_{[S,S]}=1\). All remaining transition probabilities are set to zero. Note that each row of $\pi$ sums to one. This construction captures the directionality of the underlying velocity field in a manner consistent with the network's topology. Let $\{X_n\}_{n\geq 0}$ be a Markov chain on \(V \bigcup \{S\}\) with transition matrix $\pi$.

As detailed in \Cref{subsec:lin_net}, water enters the system from external sources and eventually flows out. This implies that all vertices of $V$ are transient states of $\{X_n\}_{n\geq 0}$, with $S$ as the unique absorbing state.

We aim at defining the random path connecting two vertices $v, x \in V$ in the network. Fix $v \in V$ and condition on $X_0 = v$. Then $\{X_n \mid X_0 = v\}_{n \geq 0}$ is the random trajectory of the Markov chain starting from $v$. Given the starting point \(v\), let $L_x = \sup\{n \geq 0 : X_n = x\}$ be the last hitting time of $x$. Since $x$ is transient, the chain visits $x$ at most finitely many times, with probability one. Now set $T(v,x)$
to be the random path that records the trajectory from $v$ up to the last visit to $x$. Formally, if \(L_x >0,\)
\[
    T(v,x) = \bigl(l_{\left[X_0=v, X_1\right]},\, \ldots,\,
    l_{\left[X_{L_x - 1},\, X_{L_x}\right]}\bigr),
\]
which is an element of  $\mathcal{P}(v,x)$, equipped with the discrete $\sigma$-algebra; naturally, \(T(v,x)=\emptyset,\) if \(L_x=0.\) Note that $T(v,x)$ may trace a cyclic path, consistently with the definition of $\mathcal{P}(v,x)$.

For \(p_{(v,x)} \in \mathcal{P}(v,x),\)
\begin{equation}
    \mathbb{P}\!\left(T(v,x) = p_{(v,x)}\right) =
    \left(\prod_{l_{\left[a,b\right]} \in p_{(v,x)}} \pi_{\left[a,b\right]}\right)
    U(x),
    \label{eq:probabilityT}
\end{equation}
where the product aggregates the transition probabilities $\pi_{\left[a,b\right]}$ along the edges composing the path $p_{(v,x)}$ (counted with multiplicity for cyclic
paths), and $U(x)$ denotes the probability of never returning to $x$ after arrival. Under transience, $U(x) > 0$ for all $x \in V$. If $x$ is an \textit{acyclic} vertex -- meaning no path connects $x$ to itself -- then the return probability is zero,
implying $U(x) = 1$.

We define a path-dependent distance, denoted by \(dist_{T{\left(v,x\right)}}(\cdot, \cdot)\), by evaluating the length along a specific realization of the random path connecting two vertices. Specifically, given two vertices $v,x \in V$, consider the realization \(T{\left(v,x\right)} = p_{\left(v,x\right)}=\left\{l_1, \dots, l_n\right\}\) and, for $u \in l_1$, the sub-path $p_{(u,x)} \subseteq p_{(v,x)}$. Then, set \(dist_{T{\left(v,x\right)}}(u, x)=|p_{(u,x)}|\). We set the distance to infinity in case the vertices are not connected according to $T(v,x)$ -- i.e. $dist_{T(v,x)}(\cdot, x) = +\infty$ if $T(v,x) = \emptyset$.

Finally, for ease of exposition, let us assume that the collection of Markov chains starting from different vertices \(v \in V\) are independent. In fact, this condition does not impose practical restrictions on the resulting covariance model, as further remarked in \Cref{sec:convprocess}.

\section{A convolution process on linear networks}
\label{sec:convprocess}

\subsection{Moving average construction}
\label{subsec:processDefinition}

The primary objective of this section is to derive a novel, valid covariance structure over the topologically complex domain of a linear network. It is important to note that our primary interest lies not in modeling the underlying generative process itself, but in obtaining a flexible and mathematically valid spatial dependence structure. To achieve this, we introduce a moving average convolution process building upon the stream network strategies of \cite{VerHoef2006} and \cite{rivers2010}. We utilize this process to deduce a closed-form, positive-definite covariance function.

Choosing a linear network $\mathcal{N}$ as the domain for this convolution process offers the key analytical advantage of evaluating spatial contributions via line integrals. By constraining the velocity field to the edges and vertices, the formulation naturally admits an integral representation along the network structure. While this moving average construction is inspired by the dendritic stream network models of \cite{rivers2010}, we integrate an additional source of randomness. This extension is necessary to accommodate the greater topological complexity—such as cycles and multidirectional connectivity—inherent to general ocean current networks compared to strictly directed streams.

In the following construction, we denote by $W$ a Wiener process independent of the Markov process \(\{X_n\}_{n \geq 0}\), introduced in the previous section, and by $\int_{A} [\cdot]\, W(du)$ the stochastic integral over $A\subset \mathbb{R}$. We let $g$ be a square-integrable function, which will play the role of moving average function. Finally, for \(v,x \in V\), we denote by $X_1$ the (random) right vertex of the first edge in the path \(T{\left(v,x\right)}\). 

Given a family of positive parameters \(\beta_p\) indexed by paths between vertices in \(V,\) we now define a process $\{Z_x , x \in V\}$ on the network $\mathcal{N}.$ For \(x \in V,\) set 
\begin{equation}
    Z_x = \sum_{v \in V} \int_{L_{\left[v, X_1\right]}} \frac{g\left(dist_{T{\left(v,x\right)}}\left(u,x\right)\right)}{\sqrt{\beta_{T{\left(v,x\right)}}}} W(du).
    \label{eq:myprocess}
\end{equation}
In \eqref{eq:myprocess}, the integral is constrained to the (random) segment $l_{\left[v, X_1\right]}$, ensuring that each vertex $v \in V$ contributes exactly once to the process, and only if it is connected to $x$ via the path $T(v,x)$. Moreover, the random paths $T{\left(v,x\right)}$'s influence both the interval of integration and the distance between points in the network. Square-integrability of $g$ guarantees the well-definition of \eqref{eq:myprocess}. The role of $\beta_{T(v,x)}$ will become clearer in the following subsections, where it plays a key part in ensuring a relaxed condition for stationarity.

For a more thorough interpretation of the process, the terminology of upstream and downstream points will be borrowed from works on stream networks, together with the specific type of images depicting the effects of the moving averages functions \citep{VerHoef2006, peterson2007, rivers2010}. A single realization of the process $Z_x$ can be viewed as representing an individual unit within a larger flow. The random quantity modeled by the process corresponds to this unit, which evolves according to the specific realizations of the random variables $T$'s. When the moments are evaluated, ``global'' measures for the whole flow are deducted.

For instance, refer now to \Cref{img:splitnetwork}. The dependency is represented in the picture by the colored areas, picturing the moving average kernels. A split $v^*$ in the linear network generates a dependence between the vertex $v$ before the split and the (possibly more than two) vertices after it, say $x_1$ and $x_2$ -- where the terms \emph{before} (resp. \emph{after}) and \emph{upstream} (resp. \emph{downstream}) are given according to the direction of the flow. The flow passing through $l_{[v,v^*]}$ influences the two downstream vertices $x_1$ and $x_2$ differently. According to the realizations of the random variables $T(v,\cdot)$'s, at each split, each individual unit of the flow in the network follows one and only one direction. In \Cref{img:splitnetwork}a, the unit selects path toward the vertex $x_1$, hence leading to the realizations $T_{\left(v,x_1\right)} = \{l_{\left[v,v^*\right]}, l_{\left[v^*x_1\right]}\}$, and $T_{\left(v,x_2\right)} = \emptyset$. Thus, the random value of the process in $x_1$, i.e. $Z_{x_1}$, depends on the Wiener process defined over $l_{\left[v,v^*\right]}$, while $Z_{x_2}$ does not. Conversely, in \Cref{img:splitnetwork}b a different realization is pictured, and $Z_{x_2}$ depends on the Wiener process of $l_{\left[v,v^*\right]}$, while $Z_{x_1}$ does not. When two points are not connected -- i.e., $T(v,x) = \emptyset$ -- their distance is set to infinite, which removes any contribution from the upstream vertex to the random value of the process at the downstream vertex.

\begin{figure}
    \centering
    \begin{minipage}{0.45\linewidth}
        \centering
        \begin{tikzpicture}[scale=1]
            \vspace{.5 cm}
            \draw[thick,-] (-3,0) -- (-1,0);
            \draw[thick,-] (-1,0) -- (2,1);
            \draw[thick,-] (-1,0) -- (2,-1);
            \fill (-3,0) circle (2pt) node[above left]{$v$};
            \fill[col3!100] (-2.6,0.05) -- (-1,0.05) -- (1.5,0.88) -- (1.2,1.5) -- (-1,0.4) -- cycle;
            \fill (1.5,0.83) circle (2pt) node[below right]{$x_1$};
            \fill (1.5,-0.83) circle (2pt) node[above right]{$x_2$};
            \fill (-1,0) circle (2pt) node[below left]{$v^*$};

            \draw[ultra thick, ->] (-2.2,-1.1) -- (-1, -1.1) node[below left] {Flow};
        \end{tikzpicture}
        \caption{(a)}
    \end{minipage}
    \hfill
    \begin{minipage}{0.45\linewidth}
        \centering
        \begin{tikzpicture}[scale=1]
            \vspace{.5 cm}
            \draw[thick,-] (-3,0) -- (-1,0);
            \draw[thick,-] (-1,0) -- (2,1);
            \draw[thick,-] (-1,0) -- (2,-1);
            \fill (-3,0) circle (2pt) node[above left]{$v$};
            \fill[col1!100] (-2.6,-0.05) -- (-1,-0.05) -- (1.5,-0.88) -- (1.2,-1.5) -- (-1,-0.4) -- cycle;
            \fill (1.5,0.83) circle (2pt) node[below right]{$x_1$};
            \fill (1.5,-0.83) circle (2pt) node[above right]{$x_2$};
            \fill (-1,0) circle (2pt) node[above left]{$v^*$};

            \draw[ultra thick, ->] (-2.2,-1.1) -- (-1, -1.1) node[below left] {Flow};
        \end{tikzpicture}
        \caption{(b)}
    \end{minipage}
    \caption{A split $v^*$ in the network. The two areas green (up) and blue (down) represent the moving average kernels of the two random values defined in the two downstream vertices ($x_1$ and $x_2$ respectively). At the split, the Markov chain will select either one of the two vertices downstream. In (a), $T_{\left(v,x_1\right)}(\omega) = p_{(v,x_1)}$ where $p_{(v,x_1)}$ is the path connecting $v$ and $x_1$, while $T_{\left(v,{x_2}\right)}(\omega) = \emptyset$. Conversely, in (b) the opposite happens: $T_{\left(v,x_1\right)}(\omega) = \emptyset$; $T_{\left(v,x_2\right)}(\omega) = p_{(v,x_2)}$.}
    \label{img:splitnetwork}
\end{figure}

The following propositions provides the expressions of the moments of the process $\{Z_x , x \in V\}$ defined in \Cref{eq:myprocess}. Proofs are reported in the Appendix.

\begin{proposition}
    The process $\{Z_x , x \in V\}$ is zero-mean: for $x \in V$,
    $
    \mathbb{E}\left[Z_x\right] = 0.
    $
\end{proposition}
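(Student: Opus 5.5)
The plan is to condition on the Markov chain and then use the zero-mean property of the Wiener integral. Let $\mathcal{F}_X$ denote the $\sigma$-algebra generated by the collection of Markov chains $\{X_n\}_{n\geq 0}$ started from every $v \in V$. Since $V$ is finite, $Z_x$ in \eqref{eq:myprocess} is a finite sum over $v \in V$. By linearity of expectation it therefore suffices to show that each summand
\[
I_v = \int_{L_{\left[v, X_1\right]}} \frac{g\left(dist_{T(v,x)}(u,x)\right)}{\sqrt{\beta_{T(v,x)}}}\, W(du)
\]
has zero mean.

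First I will compute $\mathbb{E}[I_v \mid \mathcal{F}_X]$ and then take the outer expectation via the tower property. The random path $T(v,x)$ takes values in the countable set $\mathcal{P}(v,x)\cup\{\emptyset\}$, equipped with the discrete $\sigma$-algebra. So I decompose
\[
I_v = \sum_{p} \mathbf{1}\{T(v,x)=p\}\, I_v^{(p)},
\]
where $I_v^{(p)}$ is the deterministic Wiener integral obtained by fixing the path $p$. In $I_v^{(p)}$ the segment is the first edge $l_1$ of $p$, and the integrand is $u\mapsto g(|p_{(u,x)}|)/\sqrt{\beta_p}$. When $p=\emptyset$ the distance is $+\infty$ and the summand contributes nothing, with the convention that $g(+\infty)=0$ or equivalently that the integral is empty.

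Because $W$ is independent of the Markov chains, conditioning on $T(v,x)=p$ leaves the law of $W$ unchanged. Hence $\mathbb{E}[I_v \mid \mathcal{F}_X] = \sum_p \mathbf{1}\{T(v,x)=p\}\,\mathbb{E}[I_v^{(p)}]$. For fixed $p$ the integrand is deterministic and square-integrable on the bounded segment $l_1$: it is square-integrable because $g$ is and $|p_{(u,x)}|$ is an affine, measure-preserving reparametrization of $u\in l_1$. The Wiener integral $I_v^{(p)}$ is then a centered Gaussian variable, so $\mathbb{E}[I_v^{(p)}]=0$, and consequently $\mathbb{E}[I_v\mid\mathcal{F}_X]=0$ almost surely.

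The main obstacle is justifying the exchange of the sum over paths with the expectation, and establishing that $I_v$ is integrable. For this I will use the conditional It\^o isometry. It gives
\[
\mathbb{E}[I_v^2 \mid \mathcal{F}_X] = \frac{1}{\beta_{T(v,x)}}\int_{l_1} g^2\big(dist_{T(v,x)}(u,x)\big)\,du \le \frac{\|g\|_{L^2}^2}{\beta_{T(v,x)}}.
\]
This bound is finite, and it has finite expectation provided the weights $\beta_p$ are bounded away from zero. If they are not, I would instead require that $\sum_p \mathbb{P}(T(v,x)=p)/\beta_p<\infty$, where the path probabilities are given by \eqref{eq:probabilityT}, which is the natural standing assumption for the second moments anyway. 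This yields $I_v\in L^2\subset L^1$, so the tower property applies and $\mathbb{E}[I_v]=\mathbb{E}\big[\mathbb{E}[I_v\mid\mathcal{F}_X]\big]=0$. Summing over the finitely many $v\in V$ gives $\mathbb{E}[Z_x]=0$.
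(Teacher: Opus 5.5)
Your proof is correct and follows the same route as the paper: condition on the random paths, use the independence of $W$ from the Markov chains so that each fixed-path Wiener integral is centered, and conclude by the tower property and linearity over the finitely many $v\in V$. The integrability check you add is a welcome refinement that the paper omits, but your bound $\|g\|_{L^2}^2/\beta_{T(v,x)}$ is too lossy on cyclic networks. For instance, with the paper's $\beta$'s, the cycle $a\to b\to a$ with $\pi_{[a,b]}=\pi_{[a,S]}=1/2$ and $\pi_{[b,a]}=1$ gives $\mathbb{P}(T(b,a)=p)/\beta_p=1$ for infinitely many paths $p$, so your condition fails even though the conclusion holds. Keep instead the exact conditional variance $\beta_{T(v,x)}^{-1}\int_{L_{[v,X_1]}} g^2\bigl(dist_{T(v,x)}(u,x)\bigr)\,du$: its expectation is the $v$-th summand of the paper's variance formula, and it is finite because under the paper's normalization these summands add up to $\int_0^{+\infty} g^2(r)\,dr$.
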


We evaluate the moments of the process in \Cref{eq:myprocess} by first exploiting the stochastic integral, and secondly evaluating the distribution of the $T(v,\cdot)$.

\begin{proposition}
    For $x, y$ in $V$, 
    \begin{equation}
    Cov(Z_x, Z_y) = \sum_{v \in V} \mathbb{E}\left[ \int_{L_{\left[v,X_1\right]}} \frac{g\left(dist_{T{\left(v,x\right)}} \left(u,x\right)\right)g\left(dist_{T{\left(v,y\right)}} \left(u,y\right)\right)}{\sqrt{\beta_{T{\left(v,x\right)}}\beta_{T{\left(v,y\right)}}}} du \right].
    \label{eq:moment_second}
    \end{equation}

    In particular, 
    \begin{equation}
        Var(Z_x) = Cov(Z_x, Z_x) = \sum_{v \in V} \mathbb{E}\left[ \int_{L_{\left[v,X_1\right]}} \frac{g^2\left(dist_{T{\left(v,x\right)}} \left(u,x\right)\right)}{\beta_{T{\left(v,x\right)}}} du \right].
    \label{eq:variance_moment}
    \end{equation}
    \label{prop:moments}
\end{proposition}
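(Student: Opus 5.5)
The plan is to condition on the Markov chains and then apply the It\^o isometry. Let $\mathcal{T}$ denote the $\sigma$-algebra generated by the whole collection of Markov chains started at the vertices $v \in V$. Equivalently, $\mathcal{T}$ is generated by all the random paths $T(v,\cdot)$ and the first steps $X_1$. Since $W$ is independent of $\mathcal{T}$, we can work conditionally on $\mathcal{T}$. Under that conditioning, every integration segment $l_{[v,X_1]}$, every path distance $dist_{T(v,x)}(u,x)$ and every weight $\beta_{T(v,x)}$ is fixed. Hence $Z_x$ is a finite sum of Wiener integrals of deterministic square-integrable integrands. By the previous Proposition, or directly from the zero mean of Wiener integrals, $\mathbb{E}[Z_x \mid \mathcal{T}] = 0$. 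The law of total covariance then reduces the problem to
\[
Cov(Z_x,Z_y)=\mathbb{E}\bigl[Cov(Z_x,Z_y\mid \mathcal{T})\bigr]=\mathbb{E}\bigl[\mathbb{E}[Z_xZ_y\mid\mathcal{T}]\bigr].
\]

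Next I will expand $Z_xZ_y$ as a double sum over $v,v' \in V$. The key point is which noise each vertex uses. The edge $l_{[v,X_1]}$ is determined by the single chain started at $v$, so the same segment appears in $Z_x$ and in $Z_y$, whichever target vertex is considered. The different integrands only multiply the same white noise on that segment.

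For $v \neq v'$, the segments $l_{[v,X_1^{(v)}]}$ and $l_{[v',X_1^{(v')}]}$ are distinct edges of the network. By the definition of a linear network, two edges intersect in at most a vertex, which has Lebesgue measure zero. Because $W$ has independent increments on disjoint sets, the It\^o isometry gives zero cross terms. For $v=v'$, the isometry applied to the common segment gives
\[
\mathbb{E}\Bigl[\int_{l_{[v,X_1]}} f_x\,dW\int_{l_{[v,X_1]}} f_y\,dW \,\Big|\, \mathcal{T}\Bigr]=\int_{l_{[v,X_1]}} f_x(u)f_y(u)\,du,
\]
where $f_x(u)=g(dist_{T(v,x)}(u,x))/\sqrt{\beta_{T(v,x)}}$. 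Taking the outer expectation and summing over $v$ yields \eqref{eq:moment_second}. Setting $y=x$ gives \eqref{eq:variance_moment}.

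When $T(v,x)=\emptyset$ the distance is $+\infty$, and I will adopt the natural convention $g(+\infty)=0$ so that the term vanishes.

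I expect the main obstacle to be justifying the exchange of expectations and the finiteness of everything involved, since the path space $\mathcal{P}(v,x)$ is countably infinite when there are cycles. I would handle this as follows:
\begin{itemize}
\item By Cauchy--Schwarz, each conditional integral is bounded by $\|g\|_{L^2}^2/\min\{\beta\}$, provided the $\beta_p$ are bounded below. Otherwise I would simply assume that the resulting expectation is finite, which is exactly the condition for $Z_x$ to be square-integrable.
\item Fubini/Tonelli for the discrete law of $T$, given by \eqref{eq:probabilityT}, then allows the outer expectation to be written as a sum over paths.
\item The assumed independence of the chains started at different vertices is not actually needed for the covariance. Only the independence of $W$ from the chains and the disjointness of the edges are used.
\end{itemize}
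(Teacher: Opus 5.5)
Your proposal is correct and follows essentially the same route as the paper: condition on the Markov chains (which are independent of $W$), expand $Z_xZ_y$ as a double sum over $v,v'$, drop the cross terms with $v\neq v'$ because the first edges are distinct, apply the white-noise isometry on the shared first edge $l_{[v,X_1]}$, and then take the outer expectation. Your extra remarks are refinements of details the paper leaves implicit, not a different argument: distinct edges can share a vertex but this has measure zero, the convention $g(+\infty)=0$, and the integrability caveat when the path space is countably infinite.
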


Note that the variance of the process in \Cref{eq:variance_moment} is finite. The sum is on a finite number of vertices, and each term is given by finite integrals -- because of the square-integrability of $g$.

Importantly, the assumption of path independence does not limit the flexibility of our model, which aims at specifying a valid second-order covariance structure on linear networks. In the convolution framework, integrating against the Wiener process effectively neutralizes any joint dependence between any two paths $T(v_1, x)$ and $T(v_2, x)$ during the evaluation of the second moment (see the proof of Proposition \ref{prop:moments}). The resulting spatial covariance depends exclusively on the marginal path probabilities. 

Given the non-Euclidean topology of the domain and the structure of the covariance in \eqref{eq:moment_second}, enforcing classical second-order stationarity is generally infeasible. Instead, inspired by approaches used in stream network models \citep{VerHoef2006}, we adopt a relaxed notion of stationarity based on spatial homogeneity of the first two moments--specifically, constant mean and constant variance.

\subsection{Weighted covariance model}
\label{subsec:stationary_model}

We enforce stationary variances following the approach of \citep{VerHoef2006, rivers2010}. The parameters $\beta_{p}$'s are used to this end. Indeed, for \(v,x \in V,\) we specify $\beta_{p_{(v,x)}}$ as depending on the total probability mass in the end points of the edges composing the path, and on the probability of not returning to the endpoint:
\begin{equation}
    \beta_{p_{(v,x)}} = \left(\prod_{l_{[a,b]}\in p_{(v,x)}} \Big[\sum_k \pi_{[k,b]}\Big]\right) U(x).
    \label{eq:betaexplicit}
\end{equation}
Note that, for any edge $l_{\left[a,b\right]}$ belonging to a path with a non zero probability of realization, the factor $\sum_k \pi_{[k,b]} > 0$. Moreover, as observed in \Cref{subsec:network_dynamic}, $U(x) > 0$ for each $x \in V$. 

The following results provide the expressions for the variance and covariance of the random process defined in \eqref{eq:myprocess}, under the specification of the normalization constants $\beta_{p(v,x)}$ set in \eqref{eq:betaexplicit}.

\begin{proposition}
Let $(\mathcal{L}, V)$ be a linear network, $\{Z_x, x \in V\}$ be the process defined in \Cref{eq:myprocess} and the normalization constants $\beta$'s be defined as in \Cref{eq:betaexplicit}. Then for every $x\in V$ the marginal variance is constant and equals

\begin{equation}
    Var(Z_x) \;=\; \int_0^{+\infty} g^2(r)\,dr.
    \label{eq:variance}
\end{equation}
\label{prop:variancestationary}
\end{proposition}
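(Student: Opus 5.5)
We start from the variance formula \eqref{eq:variance_moment} of Proposition~\ref{prop:moments} and compute the expectation by conditioning on the value of the discrete random path $T(v,x)$. Since $T(v,x)$ takes values in the countable set $\mathcal{P}(v,x)\cup\{\emptyset\}$, and the event $T(v,x)=\emptyset$ contributes nothing (the distance is $+\infty$ and $g$ vanishes there, or equivalently that vertex contributes no integral), we can write
\begin{equation*}
Var(Z_x)=\sum_{v\in V}\sum_{p\in\mathcal{P}(v,x)} \mathbb{P}\left(T(v,x)=p\right)\,\frac{1}{\beta_p}\int_{l_1(p)} g^2\left(|p_{(u,x)}|\right)du ,
\end{equation*}
where $l_1(p)$ is the first edge of $p$. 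All terms are nonnegative, so Tonelli justifies every interchange of sums and integrals below.

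Next we insert \eqref{eq:probabilityT} and \eqref{eq:betaexplicit}. The factor $U(x)$ cancels, and each path receives the weight
\begin{equation*}
w(p)=\prod_{l_{[a,b]}\in p}\frac{\pi_{[a,b]}}{\sum_k \pi_{[k,b]}}.
\end{equation*}
The key observation is that $q(b\to a):=\pi_{[a,b]}/\sum_k\pi_{[k,b]}$ defines a \emph{reversed} (upstream) transition kernel: for every vertex $b$ that has incoming edges, $\sum_a q(b\to a)=1$. Hence $w(p)$ is exactly the probability that a backward chain $\{\tilde X_n\}$, started at $\tilde X_0=x$ and moving with kernel $q$, traces $p$ in reverse. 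Along each first edge we parametrize $u$ by $r=|p_{(u,x)}|$, which ranges over $[\,|p|-\mathrm{length}(l_1),\,|p|\,]$.

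Now we regroup the double sum over $(v,p)$ by the number of edges $n$ of $p$. Paths with $n$ edges ending at $x$ correspond bijectively to the first $n$ steps of the backward trajectory, and $v$ is simply the $n$-th backward state. The $n$-th contribution is therefore
\begin{equation*}
\mathbb{E}\left[\int_{R_{n-1}}^{R_n} g^2(r)\,dr\right],
\end{equation*}
where $R_n$ is the cumulative length of the first $n$ backward edges and $R_0=0$. Summing over $n$ telescopes the intervals, giving
\begin{equation*}
Var(Z_x)=\mathbb{E}\left[\int_0^{R_\infty} g^2(r)\,dr\right].
\end{equation*}
Here $R_\infty$ is the total length of the backward trajectory.

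The main obstacle is to show that $R_\infty=+\infty$ almost surely, which gives \eqref{eq:variance}. Three points need care.
\begin{itemize}
\item The backward chain must never be stopped. A stop can only happen at a vertex with no incoming edge (a source), where $\sum_k\pi_{[k,b]}=0$ and $q$ is undefined.
\item Edge lengths are bounded below by the grid spacing, so infinitely many backward steps give infinite length.
\item One must decide how sources are handled. I would first check whether the paper's convention treats inflow at sources as an external upstream supply (for instance, a fictitious infinite upstream edge or a self-contribution attached to each source) that completes the tiling of $[0,\infty)$. If it does not, the claim needs that hypothesis stated explicitly; otherwise the argument yields only $\mathbb{E}\int_0^{R_\infty}g^2$, which is bounded by $\int_0^\infty g^2$.
\end{itemize}
With that settled, the remaining steps are the routine bookkeeping of the bijection between paths ending at $x$ and backward trajectories, including cyclic paths counted with multiplicity.
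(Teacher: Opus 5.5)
Your argument is correct and rests on the same identity as the paper's proof, organized differently. The paper performs the same reduction (cancel $U(x)$, substitute $r=|p_{(u,x)}|$) to reach
\[
Var(Z_x)=\int_0^\infty g^2(r)\,S_x(r)\,dr ,
\]
where $S_x(r)$ sums $\Pi(p)/B(p)$ over the paths whose first edge covers $r$. It then shows $S_x\equiv 1$ by a local argument. When $r$ crosses a junction $v_2$, the weight of the path from $v_2$ is redistributed exactly over its one-edge upstream extensions, because $\sum_a \pi_{[a,v_2]}/\sum_k\pi_{[k,v_2]}=1$. An ``induction on $r$'' over the breakpoints finishes the proof.

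You use the same identity globally, as stochasticity of an upstream kernel. In your language $S_x(r)=\mathbb{P}(R_\infty>r)$ for the backward walk, so the claim becomes $R_\infty=+\infty$ almost surely. This is more rigorous where the paper is sketchy. Cyclic paths and infinitely many breakpoints are handled by Tonelli and the bijection with backward trajectories. The positive lower bound on edge lengths, for which finiteness of the network suffices, is exactly what makes the paper's induction on $r$ legitimate.

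Your version also pinpoints where sources enter, and your caution there is warranted. Without a boundary convention the statement is false: at a source $x$ every $T(v,x)$ is empty, so $Z_x\equiv 0$. The paper resolves this as you anticipated. It opens its proof by attaching to the sources a virtual upstream node pushed to infinity. In your terms, once the backward walk reaches a source it continues along an edge of unbounded length, which gives $R_\infty=+\infty$ almost surely.
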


Note the the variance in \Cref{eq:variance} is finite thanks to the square-integrability of the moving average function $g$, as noted in \Cref{subsec:processDefinition}. More generally, constant marginal variance holds for any choice of the constants $\beta$'s inducing a \emph{unit-influx} normalization such that \(\sum_k \pi_{\left[k,x\right]}/\beta_{p(k,x)} = 1\), analogously to the river network literature \citep{VerHoef2006, cressie2006}.
 
For \(x,y \in V,\) we denote by $U(x,y)$ the probability of not returning to either $x$ or $y$, when the chain starts from $x$. In addition, let $\tilde{\mathcal{P}}(x,y)$ denote the set of paths from $x$ to $y$ that do not contain cycles at the initial vertex $x$. Notably, for an acyclic point $x^*$, $\tilde{\mathcal{P}}(x^*,y) = \mathcal{P}(x^*,y)$.

\begin{proposition}
Let $(\mathcal{L}, V)$ be a linear network, $\{Z_x, x \in V\}$ be the process defined in \Cref{eq:myprocess} and the normalization constants $\beta$'s be defined as in \Cref{eq:betaexplicit}. Let \(x,y \in V\), \(x \neq y.\) 

\begin{itemize}
\item[(i)] If $\, \, \tilde{\mathcal{P}}(x,y) \bigcup \tilde{\mathcal{P}}(y,x) \neq \emptyset$, 
    \begin{equation*}
        Cov(Z_x, Z_y) = \sum_{p \in \tilde{\mathcal{P}}(x,y) \bigcup \tilde{\mathcal{P}}(y,x)} w_{p} C(|p|),
    \end{equation*}
where, for any two vertices \(v_1,v_2 \in V,\)
    \begin{equation}
    w_{p_{(v_1,v_2)}} = \left(\prod_{l_{\left[a,b\right]} \in p_{(v_1,v_2)}} \frac{\pi_{\left[a,b\right]}}{\sqrt{\sum_{k} \pi_{\left[k,b\right]}}}\right) \frac{U(v_2,v_1)}{\sqrt{U(v_1)\, U(v_2)}},
    \label{eq:weightexplicit}
    \end{equation}
    and $C(\cdot)$ is the spatial covariance function defined, for $h \geq 0,$ as $$C(h) = \int_0^{+\infty} g(r)g(r + h) dr.$$    
\item[(ii)]
If $\, \, \tilde{\mathcal{P}}(x,y) \bigcup \tilde{\mathcal{P}}(y,x) = \emptyset,$
\begin{equation*}
        Cov(Z_x, Z_y) = 0.
    \end{equation*}
\end{itemize}    

\label{prop:covariancestationary}
\end{proposition}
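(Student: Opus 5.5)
We start from the general covariance expression in Proposition~\ref{prop:moments}, \eqref{eq:moment_second}, and evaluate the expectation term by term over the starting vertex $v \in V$. For a fixed $v$, both $T(v,x)$ and $T(v,y)$ come from the \emph{same} trajectory $\{X_n \mid X_0=v\}$, and the integrand is supported only on the common first segment $l_{[v,X_1]}$. It is nonzero only when both $T(v,x)\neq\emptyset$ and $T(v,y)\neq\emptyset$, that is, when the chain started at $v$ visits both $x$ and $y$.

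The first step is to organise this joint event by which of $x,y$ the trajectory visits last. Suppose the last visit to $y$ comes after the last visit to $x$. Then $T(v,y)$ is the concatenation of $T(v,x)$ with a segment from $x$ to $y$ that never returns to $x$. That segment is exactly a path in $\tilde{\mathcal P}(x,y)$ followed by no return to $x$ or $y$. In that case
\[
dist_{T(v,y)}(u,y)=dist_{T(v,x)}(u,x)+|p|, \qquad p\in\tilde{\mathcal P}(x,y).
\]
The symmetric case gives paths in $\tilde{\mathcal P}(y,x)$. If both sets are empty, the joint event has probability zero for every $v$, which proves (ii).

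Next we compute the probability of each joint realisation using the Markov property together with \eqref{eq:probabilityT}. The event $\{T(v,x)=q,\ \text{then } p \text{ from } x \text{ to } y,\ \text{then no return to } x,y\}$ has probability
\[
\Big(\prod_{q}\pi\Big)\Big(\prod_{p}\pi\Big)U(y,x),
\]
with the order of the arguments in $U$ chosen to match \eqref{eq:weightexplicit}. From \eqref{eq:betaexplicit}, $\beta_{T(v,y)}$ factorises as
\[
\beta_{T(v,y)}=\beta_q\cdot\frac{\prod_{p}\big[\sum_k\pi_{[k,b]}\big]\,U(y)}{U(x)}.
\]
Hence
\[
\sqrt{\beta_{T(v,x)}\beta_{T(v,y)}}=\beta_q\sqrt{\prod_p\Big[\sum_k\pi_{[k,b]}\Big]\,U(y)/U(x)}.
\]
Dividing, the contribution of this realisation equals
\[
\frac{\prod_q\pi\, U(x)}{\beta_q}\cdot w_p\cdot\int_{l_{[v,X_1]}} g\big(dist_q(u,x)\big)\,g\big(dist_q(u,x)+|p|\big)\,du .
\]
Here $w_p$ is as in \eqref{eq:weightexplicit}, and we have used
\[
\frac{U(y,x)}{U(x)}\cdot\sqrt{\frac{U(x)}{U(y)}}=\frac{U(y,x)}{\sqrt{U(x)U(y)}}.
\]
The factor $\prod_q\pi\,U(x)$ is $\mathbb P(T(v,x)=q)$, but the no-return factor $U(x)$ must be replaced by the no-return-to-$x$-and-$y$ event. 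We will need to track this carefully, and it is the part where the ordering convention in $U(v_2,v_1)$ has to be checked.

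Finally, for fixed $p$ we sum over $v$ and over all $q\in\mathcal P(v,x)$, holding the offset $h=|p|$ fixed. This is exactly the computation in the proof of Proposition~\ref{prop:variancestationary} with $g^2(r)$ replaced by $g(r)g(r+h)$. The unit-influx identity $\sum_k \pi_{[k,x]}/\beta_{p(k,x)}=1$ makes the weighted segments tile the half-line $[0,\infty)$ of distances to $x$ with total weight one. The sum therefore equals $\int_0^\infty g(r)g(r+h)\,dr=C(|p|)$. Summing over $p\in\tilde{\mathcal P}(x,y)\cup\tilde{\mathcal P}(y,x)$ gives (i).

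The main obstacle is the bookkeeping of the no-return probabilities. We must show that decomposing at the last visit to $x$ yields precisely the factor $U(y,x)/\sqrt{U(x)U(y)}$ while leaving a $q$-sum that is identical to the variance case. We must also check that cycles at $x$ are accounted for once, inside $q$, which justifies restricting to $\tilde{\mathcal P}$. We would first verify the argument on an acyclic network, where all $U$'s equal one, and then insert the cycle factors.
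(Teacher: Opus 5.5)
Your proposal follows the paper's proof essentially step for step. The shared steps are: splitting each $v$-term according to whether $L_x<L_y$ or $L_x>L_y$; concatenating $T(v,x)=q$ with a path $p\in\tilde{\mathcal{P}}(x,y)$, with joint probability $\big(\prod_q\pi\big)\big(\prod_p\pi\big)U(y,x)$; factorising $\beta_{T(v,y)}$ and extracting $w_p$; reducing the remaining $q$-sum to the function $S_x(r)\equiv 1$ from the proof of Proposition~\ref{prop:variancestationary}; and obtaining (ii) because no trajectory can then visit both $x$ and $y$.

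The bookkeeping you flag as the main obstacle is already settled by your own display. Since $\big(\prod_q\pi\big)U(x)/\beta_q=\prod_{l_{[a,b]}\in q}\frac{\pi_{[a,b]}}{\sum_k\pi_{[k,b]}}$, the factor $U(x)$ simply cancels. Nothing has to be replaced, because the no-return-to-$\{x,y\}$ event is fully carried by $U(y,x)$ inside $w_p$, and the $q$-sum carries exactly the weights of $S_x$.
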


The covariance structure is highly sensitive to the choice of the moving average function. In particular, specific forms of \( g \) result in classical parametric expressions for the spatial covariance function, analogous to those introduced in \cite{VerHoef2006, rivers20102, Barbi_Menafoglio_Secchi_2023}.
\Cref{table:typeofg} illustrates three common choices used as spatial covariance function $C(\cdot)$, with their corresponding moving average kernels $g$. These representations of \(C\) depend on two parameters which have a similar role as those used in classical geostatistics \citep{Cressie_1993}: the range $\theta_r>0$, which scales the distance between vertices, and the sill $\theta_s>0$, which sets the overall variance at any given vertex. Indeed, note that by setting \(\int_0^{+\infty}g(r)^2dr = \theta_s\) the unique parameter controlling the amount of variability in the covariance structure is the sill parameter $C(0) = \theta_s$. 

\begin{table}[b]
\caption{Traditional moving average and spatial covariance function.}
\label{table:typeofg}
\renewcommand{\arraystretch}{1.75}
    \begin{tabular}{@{}lrl@{}}
     & \multicolumn{1}{c}{\textbf{Moving average function}} & \multicolumn{1}{c}{\textbf{Spatial covariance function}} \\
    \hline
    \textbf{Linear with sill} & $g(r) = \sqrt{\theta_s} \, \mathbb{I}_{\left\{0 \leq r/\theta_r \leq 1 \right\}}$ & $C(h) = \theta_s \, (1-h/\theta_r) \mathbb{I}_{\left\{0 \leq h/\theta_r \leq 1 \right\}}$ \\
    \hline
    \textbf{Spherical} & $g(r) = \sqrt{3 \, \theta_s}(1 - r/\theta_r) \mathbb{I}_{\left\{0 \leq r/\theta_r \leq 1 \right\}}$ & $C(h) = \theta_s(1 - \frac{3}{2}h/\theta_r + \frac{1}{2}h^3/\theta_r) \mathbb{I}_{\left\{0 \leq h/\theta_r \leq 1 \right\}} $ \\
    \hline
    \textbf{Exponential} & $g(r) = \sqrt{2\, \theta_s}e^{-r/\theta_r} \mathbb{I}_{\left\{0 \leq r/\theta_r \right\}}$ & $C(h) = \theta_se^{-h/\theta_r}$ \\
    \hline
    \end{tabular}
\end{table}

With this notation, for any two vertices $x$ and $y$, the covariance model of the process defined in \Cref{eq:myprocess}, with the family of parameter $\beta$'s defined as in \Cref{eq:betaexplicit} can be represented as
\begin{equation}
    Cov(Z_x,Z_y) = 
    \begin{cases}
        C(0) = \theta_s & \text{if } x \equiv y, \\
        0 & \mbox{if} \, x\neq y; \, \tilde{\mathcal{P}}(x,y) \bigcup \tilde{\mathcal{P}}(y,x)= \emptyset,\\
        \sum_{p \in \tilde{\mathcal{P}}(x,y) \bigcup \tilde{\mathcal{P}}(y,x)} w_p C(|p|) & \mbox{if} \, x\neq y; \, \tilde{\mathcal{P}}(x,y) \bigcup \tilde{\mathcal{P}}(y,x) \neq \emptyset.
\end{cases}
    \label{eq:covariance}
\end{equation}

The formulation in \Cref{eq:covariance} recovers the weighting structure typical of stream network models \citep{VerHoef2006, rivers2010}. In particular, the model is consistent with the stream network construction when the domain is restricted accordingly, as shown in the Appendix. We note that we here assume the range \(\theta_r\) and the sill \(\theta_s\) to be spatially homogeneous. The extension to spatially varying parameters is left to future research. 

Note that the covariance between pairs of vertices depends solely on the set of possible paths connecting them within the linear network. In particular, it is determined by the path weights and their associated distances. This construction enables the direct encoding of classical geostatistical effects -- such as scale and marginal variance -- onto the network domain itself, without relying on Euclidean embeddings or latent spatial representations. As a result, the model naturally adapts to the topology of the network, preserving interpretability while remaining intrinsic to the domain.

To illustrate the distinction between our network-based approach and conventional Euclidean modeling, Figure \ref{img:contour} presents covariance heatmaps for both frameworks. Both use an exponential covariance with $\theta_s = 1$ and range set to ten times the maximum distance in each metric. Reference points are located on the western ($40.8^{\circ}$N, $7.1^{\circ}$E) and eastern ($41.1^{\circ}$N, $10.7^{\circ}$E) coasts of Sardinia and Corsica. The Euclidean model assumes isotropy, with covariance depending solely on straight-line distance. This produces unrealistic dependencies: points on opposite sides of Sardinia exhibit strong covariance despite having no hydrological connection due to the island barrier. In contrast, the network-based model respects directional constraints imposed by sea currents, yielding anisotropic covariance structures that reflect true marine connectivity. The resulting patterns are shaped by flow dynamics, capturing spatial dependencies consistent with underlying physical transport processes.

\begin{figure}
  \centering
  \begin{minipage}[b]{0.32\linewidth}
    \centering
    \includegraphics[width=0.75\linewidth]{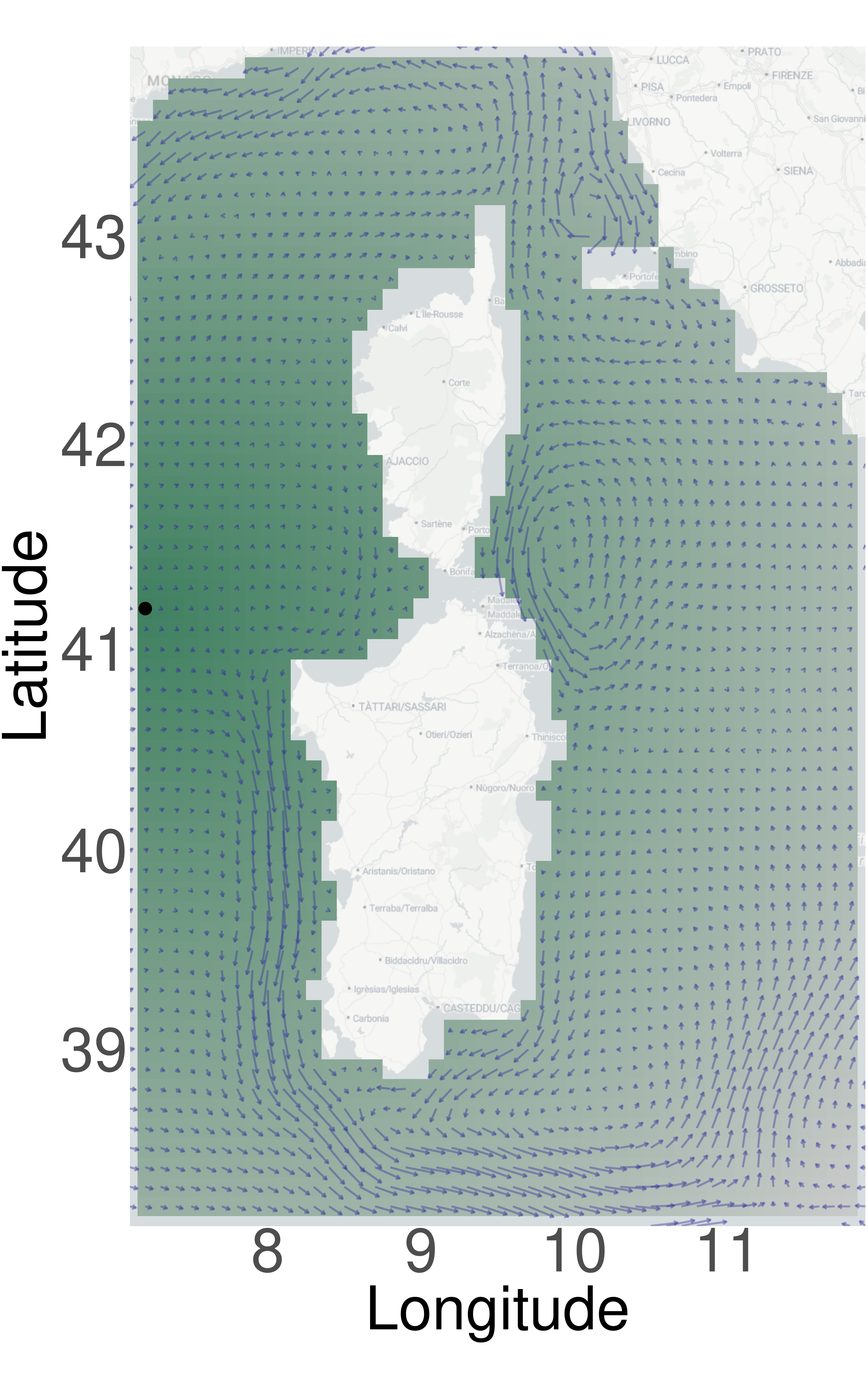}
    \vspace{0.2cm}
    \\ (a) Euclidean metric.\newline Point $41.1^{\circ} \text{N},  10.7^{\circ} \text{E}$
  \end{minipage}\hfill
  \begin{minipage}[b]{0.32\linewidth}
    \centering
    \includegraphics[width=0.75\linewidth]{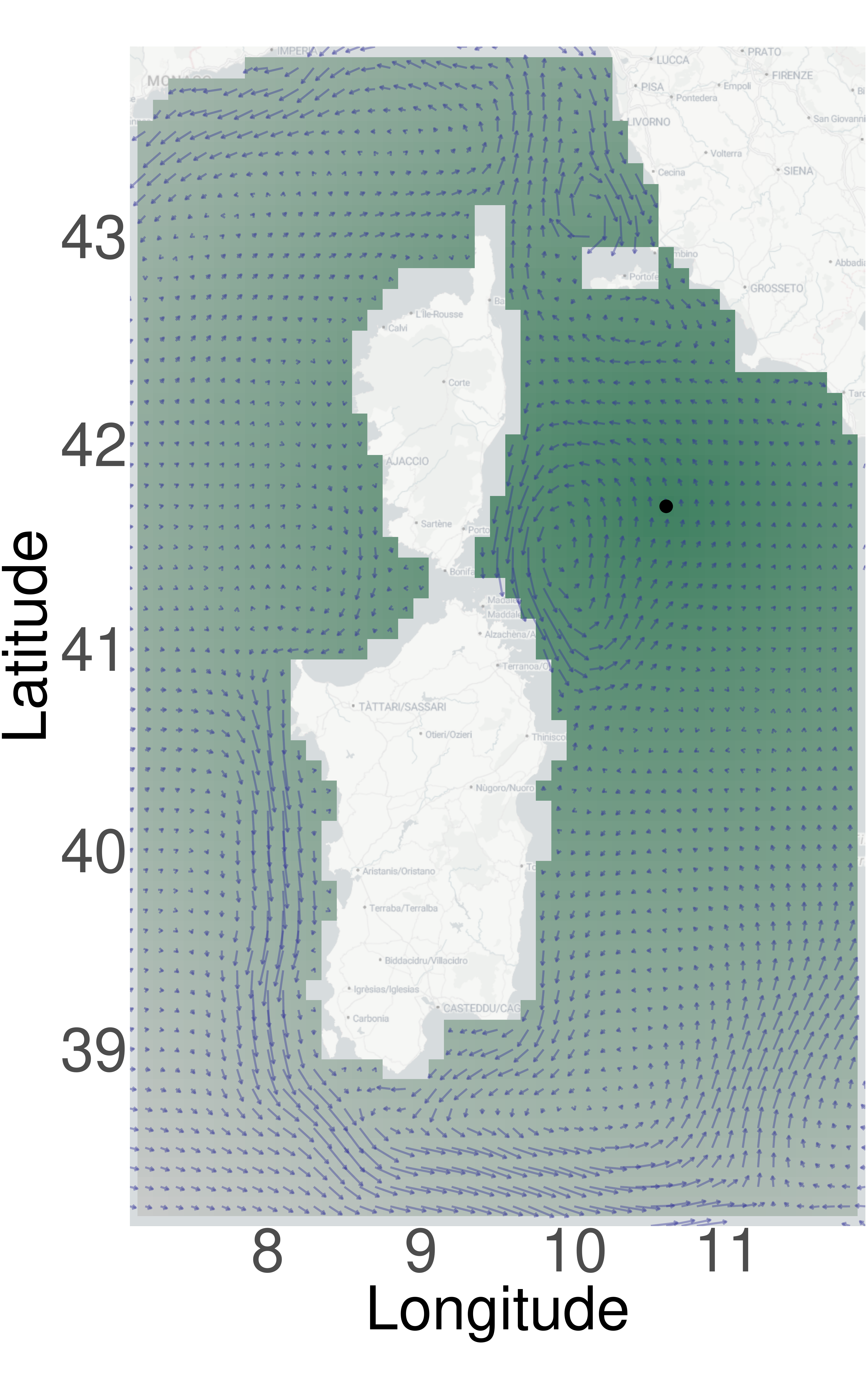}
    \vspace{0.2cm}
    \\ (b) Euclidean metric.\newline Point $40.8^{\circ} \text{N}, 7.1^{\circ} \text{E}$
  \end{minipage}\hfill
  \begin{minipage}[b]{0.32\linewidth}
    \centering
    \includegraphics[width=0.75\linewidth]{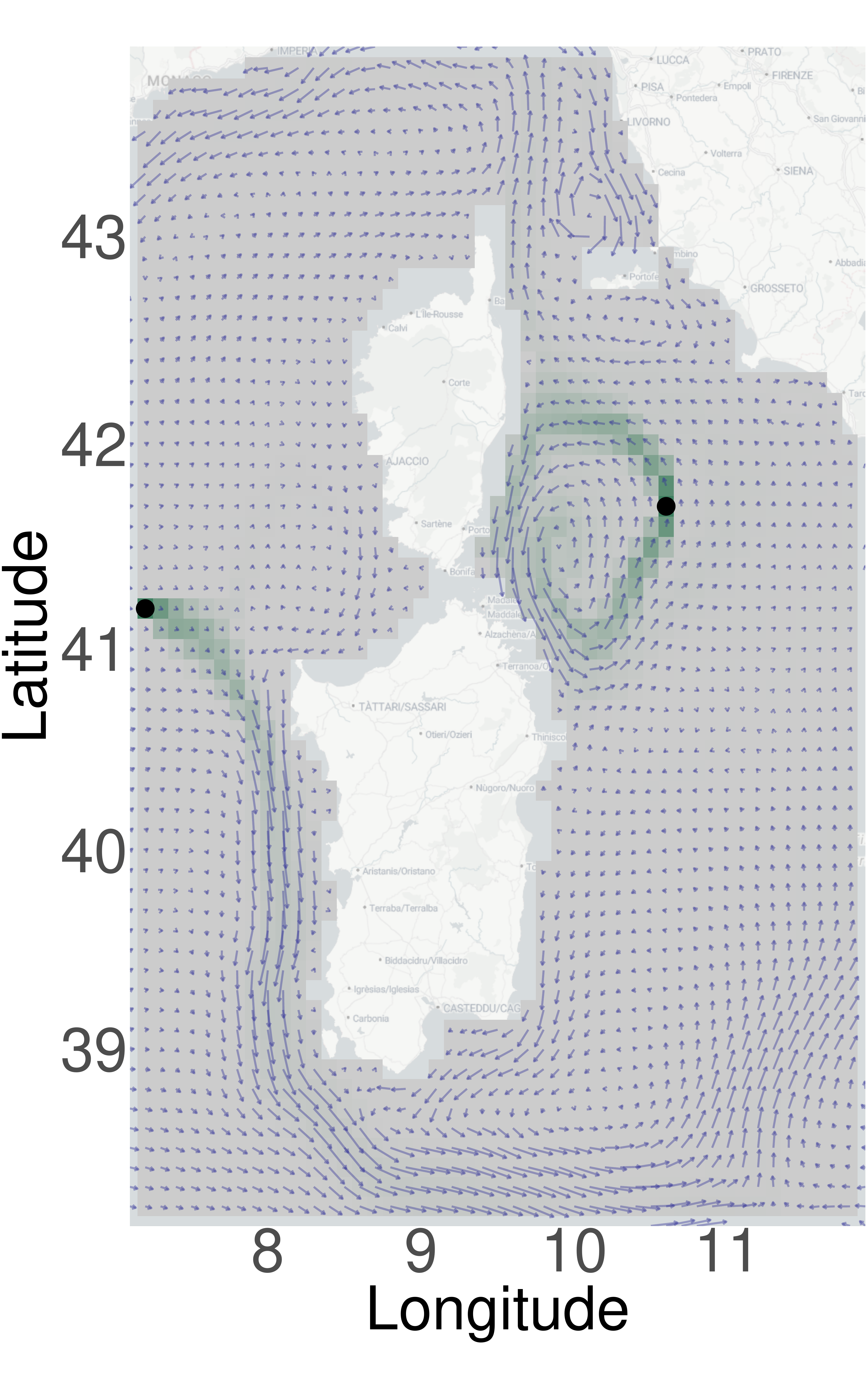}
    \vspace{0.2cm}
    \\ (c) Physics based metric.\newline Point $40.8^{\circ} \text{N}, 7.1^{\circ} \text{E}$ \newline Point $41.1^{\circ} \text{N},  10.7^{\circ} \text{E}$
  \end{minipage}

  \vspace{0.5cm}
  \begin{tikzpicture}
  \begin{axis}[
    hide axis,
    scale only axis,
    height=0pt,
    width=0pt,
    xmin=0, xmax=1,
    ymin=0, ymax=1,
    colorbar horizontal,
    colormap={bw-to-col3}{[1pt] color(0pt)=(white); color(100pt)=(col3)},
    point meta min=0,
    point meta max=1,
    colorbar style={
      xtick={0,1},
      xticklabels={0,1},
      width=6cm, 
      height=0.3cm,
      xticklabel style={font=\small}
    }
  ]
  \end{axis}
\end{tikzpicture}
  \caption{Spatial heatmaps for different covariance models. The color gradient indicates the covariance of each location with respect to the reference point (black dot). In the proposed framework, the effects of both influence points are superimposed in a single map, whereas the Euclidean framework requires two separate maps to avoid visual interference between the individual influence points.}
  \label{img:contour}
\end{figure}

\section{Estimation of the covariance structure}
\label{sec:estimation}

When data become available, the primary goal of a geostatistical analysis is to estimate the covariance structure of the underlying random field. However, the specific form introduced in \Cref{eq:covariance} presents challenges that make classical estimation procedures difficult to apply directly. In particular, the summation over paths complicates the use of standard empirical covariances based on point pairs, as it introduces ambiguity in attributing observed covariance to specific distances.

Weighted covariance structures arise in stream network geostatistics \citep{rivers2010}, where efforts to develop unbiased estimators incorporating these weights have faced significant difficulties. Such estimators tend to exhibit high variability and have proven unreliable in practice, leading the literature to favor biased but more stable unweighted alternatives \citep{torgegram}. In this work, we construct a novel penalized estimator that explicitly accounts for both the path summation in the covariance model and the associated weights, using penalization to ensure stability and reliability.

We develop an estimation algorithm assuming the stationary variance condition introduced in \Cref{subsec:stationary_model}. This assumption is implemented by defining the family of parameters $\beta$'s as in \Cref{eq:betaexplicit}. The weights $w_p$ are used in the sense of their definition given in \Cref{eq:weightexplicit}.

First notice that for a zero-mean stationary process $\{Z_x, x \in V\}$ whose covariance function has a finite sill \(\theta_s,\)
\begin{equation}
Cov(Z_x,Z_y) = \theta_s - \frac{Var(Z_x - Z_y)}{2}
\label{eq:semivariances}
\end{equation}
for all \(x,y \in V\).

We begin by estimating the sill parameter, \(\theta_s\), which represents the variance of the process. Consider pairs of vertices that are unconnected. For such pairs, the covariance is zero and the variance of their difference reduces to \(2\,\theta_s\). Define $\mathcal{H}_{\infty} = \{(x,y) \in V\times V : \tilde{\mathcal{P}}(x,y) \bigcup \tilde{\mathcal{P}}(y,x) = \emptyset\}$ as the set of all such pairs. This yields the unbiased estimator
\begin{equation}
    \hat{\theta}_s = \frac{1}{2|\mathcal{H}_{\infty}|} \sum_{(x,y) \in \mathcal{H}_{\infty}} (Z_x - Z_y)^2.
    \label{eq:varianceprocessestimator}
\end{equation}

Let now $\boldsymbol{\gamma} \in \mathbb{R}^n$ denote the vector of empirical semi-variances computed from all distinct pairs of vertices, where $n$ is the number of pairs in the linear network. That is, the $i$-th component of $\boldsymbol{\gamma}$ is 
$$
\gamma_i= \frac{(Z_{x_i}-Z_{y_i})^2}{2}
$$ 
if $(x_i,y_i)$ is the $i$-th pair among the distinct pairs of vertices of the linear network.
Because of \Cref{eq:semivariances}, we consider the method-of-moments system of estimating equations 
\begin{equation}\label{eq:mom}
\hat\theta_s - \gamma_i=\sum_{p \in \tilde{\mathcal{P}}(x_i,y_i) \cup \tilde{\mathcal{P}}(y_i,x_i)} w_p C(|p|).
\end{equation}
where each equation is indexed by a pair of distinct vertices $(x_i,y_i).$

Following standard geostatistical practice \citep{Cressie_1993}, we now reduce the dimensionality of the system (\ref{eq:mom}) by introducing distance classes $\mathcal{H}_1,\ldots,\mathcal{H}_l$ that partition all path lengths into $l$ disjoint bins. We approximate the covariance function \(C\) via $l$ representative values $C(h_j)$ for $j = 1,\ldots,l$, where $h_j$ denotes the mean path distance within bin $\mathcal{H}_j$.
Let \(\boldsymbol{C} = (C(h_1) \ldots C(h_l))^\top\) be the vector of unknown covariance values. Then, the system of equations (\ref{eq:mom}) is approximated by the linear system:
\begin{equation}\label{eq:mom_approx}
\hat{\theta}_s\boldsymbol{1}- \boldsymbol{\gamma}= W\,\boldsymbol{C},
\end{equation}
where $\boldsymbol{1} \in \mathbb{R}^n$ and the weight matrix $W \in \mathbb{R}^{n \times l}$ aggregates the path weights. Specifically, the entry $W_{\left[i,j\right]}$ corresponds to the $i$-th vertex pair and the $j$-th distance lag:

\begin{equation}
    W_{\left[i,j\right]} =  \sum_{\substack{p \in \tilde{\mathcal{P}}\left(x_i,y_i\right) \cup \tilde{\mathcal{P}}\left(y_i,x_i\right) \\ |p| \in \mathcal{H}_j}} w_p.
    \label{eq:constructionW}
\end{equation}
Thus, $W_{\left[i,j\right]}$ represents the total weight of all paths connecting the $i$-th pair that fall within the $j$-th distance class.

Inverting \Cref{eq:mom_approx} poses challenges analogous to those encountered in stream network geostatistics \citep{torgegram}. The weight matrix $W$ is frequently ill-conditioned, rendering ordinary least-squares estimates unstable. To enforce stability, we adopt a penalized least-squares approach. The estimator is obtained by minimizing the regularized objective function:
\begin{equation}
\mathcal{L}(\boldsymbol{C}) = \frac{1}{2}\| W\boldsymbol{C} - (\hat{\theta}_s \boldsymbol{1} - \boldsymbol{\gamma}) \|^2 + \frac{1}{2} \lambda\|\boldsymbol{C}\|^2
\label{eq:minimizationproblem}
\end{equation}
where $\lambda > 0$ is a regularization parameter. The ridge-type penalty is a standard choice in inverse problems; it stabilizes the solution by reducing the variance of the estimator.

\begin{proposition}
    The penalized estimator for the covariance function of the process is

    \begin{equation}
    \hat{\boldsymbol{C}} = \left(W^TW + \lambda I\right)^{-1}W^T\left(\hat{\theta}_s\boldsymbol{1} - \boldsymbol{\gamma}\right).
    \label{eq:penalizedestimator}
    \end{equation}

    Moreover, $\lambda \geq \frac{\|W^T\left(\boldsymbol{\hat{\theta}}_s - \boldsymbol{V}\right)\|_{\infty}}{\hat{\theta}_s} - \min_i \delta_i$, where \(\delta_i = \left|\left(W^TW\right)_{\left(i,i\right)}\right| - \sum_{j \neq i}\left|\left(W^TW\right)_{\left(i,j\right)}\right|\), guarantees that $\|\hat{\boldsymbol{C}}\|_{\infty} \leq \hat{\theta}_s$.
    \label{prop:penalizedestimator}
\end{proposition}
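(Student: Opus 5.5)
The closed form comes from the first-order conditions of \eqref{eq:minimizationproblem}. The objective $\mathcal{L}$ is a strictly convex quadratic in $\boldsymbol{C}$, since its Hessian is $W^TW+\lambda I$. This matrix is positive definite for every $\lambda>0$, so the minimizer is unique. Setting the gradient
$$\nabla\mathcal{L}(\boldsymbol{C}) = W^T\bigl(W\boldsymbol{C}-(\hat\theta_s\boldsymbol{1}-\boldsymbol{\gamma})\bigr)+\lambda\boldsymbol{C}$$
to zero gives the normal equations $(W^TW+\lambda I)\boldsymbol{C}=W^T(\hat\theta_s\boldsymbol{1}-\boldsymbol{\gamma})$. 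Inverting yields \eqref{eq:penalizedestimator}.

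For the sup-norm bound, I read the vector $\boldsymbol{\hat\theta}_s-\boldsymbol{V}$ in the statement as $\hat\theta_s\boldsymbol{1}-\boldsymbol{\gamma}$, and set $\boldsymbol{b}=W^T(\hat\theta_s\boldsymbol{1}-\boldsymbol{\gamma})$ and $A=W^TW+\lambda I$. The plan is to bound $\|A^{-1}\|_\infty$ by the Varah bound for strictly diagonally dominant matrices: if $\alpha=\min_i\bigl(|A_{ii}|-\sum_{j\neq i}|A_{ij}|\bigr)>0$, then $\|A^{-1}\|_\infty\le 1/\alpha$.

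To check that bound in place, I take $\boldsymbol{C}=A^{-1}\boldsymbol{b}$ and pick an index $i$ with $|C_i|=\|\boldsymbol{C}\|_\infty$. From the $i$-th equation,
$$|b_i|\ \ge\ |A_{ii}||C_i|-\sum_{j\ne i}|A_{ij}||C_j|\ \ge\ \Bigl(|A_{ii}|-\sum_{j\ne i}|A_{ij}|\Bigr)\|\boldsymbol{C}\|_\infty.$$
Since $(W^TW)_{ii}\ge 0$, we have $|A_{ii}|=(W^TW)_{ii}+\lambda$, and the off-diagonal entries of $A$ equal those of $W^TW$. So the $i$-th row margin of $A$ is exactly $\delta_i+\lambda$, and $\alpha=\lambda+\min_i\delta_i$.

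Combining, whenever $\lambda+\min_i\delta_i>0$,
$$\|\hat{\boldsymbol{C}}\|_\infty\le \frac{\|\boldsymbol{b}\|_\infty}{\lambda+\min_i\delta_i}.$$
The hypothesis $\lambda\ge \|\boldsymbol{b}\|_\infty/\hat\theta_s-\min_i\delta_i$ rearranges to $\lambda+\min_i\delta_i\ge\|\boldsymbol{b}\|_\infty/\hat\theta_s$. Substituting gives $\|\hat{\boldsymbol{C}}\|_\infty\le\hat\theta_s$.

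The main obstacle is this positivity requirement. The hypothesis only ensures $\lambda+\min_i\delta_i\ge \|\boldsymbol{b}\|_\infty/\hat\theta_s\ge 0$; it is strictly positive only when $\boldsymbol{b}\neq 0$ (using $\hat\theta_s>0$). If $\boldsymbol{b}=0$, the estimator is $\hat{\boldsymbol{C}}=A^{-1}\boldsymbol{0}=\boldsymbol{0}$ because $A$ is invertible, so the bound holds trivially. I will handle this degenerate case separately, and note that $\hat\theta_s>0$ almost surely.
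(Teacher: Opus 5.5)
Your proof is correct and complete. The ridge normal equations give the closed form. The in-place Varah (diagonal-dominance) bound, using row margins $\delta_i+\lambda$ of $W^TW+\lambda I$, then yields $\|\hat{\boldsymbol{C}}\|_\infty\le\|W^T(\hat\theta_s\boldsymbol{1}-\boldsymbol{\gamma})\|_\infty/(\lambda+\min_i\delta_i)\le\hat\theta_s$, and you correctly handle the degenerate case $\boldsymbol{b}=\boldsymbol{0}$ separately.

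The paper does not actually contain a proof of this proposition; its appendix proves only the four moment and covariance propositions, so there is nothing to compare against directly. The form of the condition, built from the diagonal-dominance margins $\delta_i$, indicates that your normal-equations-plus-Varah argument is the intended one. Your reading of $\boldsymbol{\hat\theta}_s-\boldsymbol{V}$ as $\hat\theta_s\boldsymbol{1}-\boldsymbol{\gamma}$ is the right repair of that typo.
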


The regularization parameter $\lambda$ governs the bias-variance trade-off: larger values enforce stability at the expense of bias, while smaller values approach the non regularized solution, exhibiting higher variance. In our framework, we select the minimal $\lambda$ that satisfies the admissibility condition established in Proposition \ref{prop:penalizedestimator}.

Once the discretized covariance estimates $\hat{\boldsymbol{C}}$ are obtained, we fit a parametric model (e.g., see \Cref{table:typeofg}) to capture the continuous spatial dependence. Specifically, we estimate the range parameter $\theta_r$ by minimizing the distance between the theoretical model and the non-parametric estimates. This two-stage approach results in a fully specified covariance structure for the spatial process.

We remark that a variogram estimator could theoretically be derived via the identity
\[
\hat{\gamma}(h) = \hat{\theta}_s - \hat{C}(h).
\]

\section{Summary of the supporting simulation studies}
\label{sec:summarysimulation}

We briefly describe the behavior of the proposed covariance structure in a simulation setting; a detailed account is provided in the Appendix.

We consider multiple realizations from a Gaussian field whose covariance structure is specified through a network-based exponential model as built in Section \ref{sec:convprocess}. We assess the ability of our approach to estimate this covariance structure from the simulated data by comparing our framework with the classical Euclidean approach. The comparison is conducted across five values of the range parameter, while keeping the sill parameter fixed. The domain is a less refined version of the network in \Cref{img:network}, constructed from real data following the procedure in \Cref{subsec:lin_net}. The covariance matrix between vertices is then built in a train-test setting.

The first step concerns estimation of the covariance parameters $\theta_s$ and $\theta_r$. The sill is consistently recovered, while the range parameter is systematically underestimated. This behavior is consistent with findings from analogous studies in the stream-network setting \citep{torgegram, Barbi_Menafoglio_Secchi_2023}. Nevertheless, the Euclidean framework fails to retain meaningful spatial dependence, producing semi-variogram estimates close to a flat line, whereas our estimator recovers a meaningful spatial structure even under range underestimation. The result of this comparison is confirmed quantitatively via mean squared estimation error.

We then assess covariance reconstruction, measuring how closely the matrix built from estimated parameters approximates the true one. Our framework clearly outperforms the Euclidean baseline on both considered metrics.

Finally, we evaluate out-of-sample prediction via ordinary kriging. Our approach reconstructs the spatial field more accurately, while the Euclidean framework fails to capture the dependence induced by the velocity field.

The simulation results confirm the superior performance of our construction and estimation procedures compared with Euclidean benchmarks. They highlight how physics-driven analyses allow for a more faithful representation of the spatial dependence structure, which in turn has a significant impact on the predicted patterns.

\section{Case Study: quantifying uncertainty for RCP scenarios}
\label{sec:casestudy}

\subsection{Estimation of the residual's covariance structure}
\label{subsec:estimationvariogram}

To generate simulations that capture realistic spatial patterns, we first characterize the residual covariance structure of the water temperature projections (\Cref{subsec:data}). We define residuals as the point-wise difference between RCP projections and satellite observations over the overlapping period 2006–2022. These residuals serve as the basis for selecting and estimating the parametric covariance model. Under a time-invariant covariance assumption, residuals and covariance structure are evaluated and estimated independently for each year.

Data alignment is achieved by assigning each satellite observation to its nearest RCP grid point. Given the fine resolution of the RCP grid, hereafter we neglect the spatial discretization error resulting from this assignment. We first estimate the bias occurring in the projection. Then, using the velocity field derived from the satellite data (\cite{copernicus}), we build, for each year, the linear network according to the procedure described in \Cref{subsec:lin_net}. Based on the unbiased residual observations at the network vertices, we estimate the residuals' covariance structure following the estimation procedure presented in \Cref{sec:estimation}, separately for each year from 2006 to 2022.

\Cref{img:covarianceresiduals} displays the annual empirical covariance functions using functional box-plots \citep{funcboxplot}. The estimates are computed over $l = 15$ distance bins. The visualization highlights the functional depth of the curves: the dark shaded area represents the 50\% central region (the "bag"), while the lighter area delineates the 90\% envelope. The black solid line indicates the point-wise mean, which is used for parameter estimation. We selected the exponential kernel as it best describes the observed spatial decay among the candidates in \Cref{table:typeofg}.

\begin{figure}
    \centering
    \includegraphics[width=0.5\linewidth]{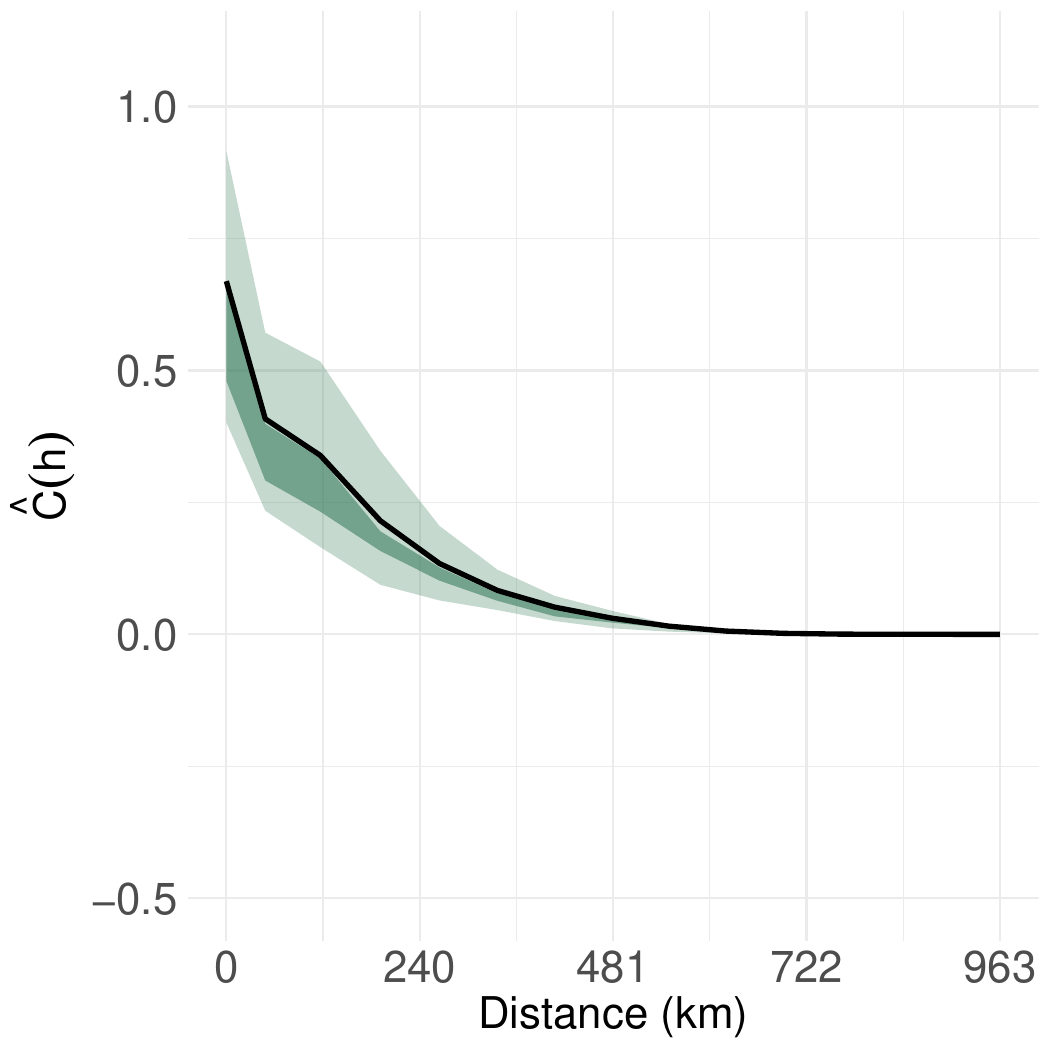}
    \caption{Empirical covariance functions for the period 2006--2022 and point-wise mean. The estimated parameters from the point-wise mean are $\hat{\theta}_s = 0.66$ and $\hat{\theta}_r = 154$ km.}
    \label{img:covarianceresiduals}
\end{figure}

Leveraging the estimated residual covariance structure, we simulate realizations of Sea Surface Temperature (SST) within the Gaussian framework established in \Cref{subsec:data}. Formally, we treat the SST field $\{Y_s, s \in D\}$ as a Gaussian Random Field observed in $\{s_1, \dots, s_n\} \in D$, hence $\{Y_{s_1}, \dots, Y_{s_n}\} \sim \mathcal{N}_n(\boldsymbol{\mu}, \Sigma)$. The mean is estimated by $\hat{\boldsymbol{\mu}}$, the RCP projections corrected by the bias estimated in the observational period 2006-2022. The covariance is estimated by $\hat{\Sigma}$ which follows the construction detailed in \Cref{sec:convprocess}, leveraging the latter estimated parameters. The network used to evaluate $\hat{\Sigma}$ is the one depicted in \Cref{img:network}, describing the physical trend in the prediction year of 2050.

To perform the extreme event analysis, we generate $M = 500$ Monte Carlo realizations of the SST field $\{Y_{s_1}, \dots, Y_{s_n}\}$, enabling a probabilistic assessment of future climate scenarios and the identification of high-risk regions.

\subsection{Extreme event analysis}
\label{subsec:extreme_event}

As a first analysis, we evaluate joint exceedance probabilities \citep{extremeeventsHuser}. We investigate joint exceedance probabilities for the Elba Island region using RCP 4.5 SST projections through August 2050. Specifically, for a given temperature $t$, we estimate the probabilities of two distinct spatial events within a Euclidean neighborhood $D_0(r)$ of radius $r$ centered on the island: the union event, $\cup_{s \in D_0(r)} \{Y_s > t\}$, which happens when in at least one location in $D_0(r)$ the SST exceeds the threshold $t$; and the intersection event, $\cap_{s \in D_0(r)} \{Y_s > t\}$, which happens when the temperature in all locations in $D_0(r)$ exceed $t$. We examine radii $r \in \{0, 10, 15, 20, 30, 50\}$ km. Note that for $r=0$, the two probabilities collapse to the marginal exceedance probability.

A critical distinction underpins our analysis: while the region of interest $D_0(r)$ is defined geometrically using Euclidean distance, the probabilistic assessment relies on the flow-directed network structure. The Euclidean metric serves only to identify the monitoring area; but the joint probabilities are governed by the anisotropic influence of ocean currents. By incorporating directionality, our framework captures flow-driven connectivity that a purely Euclidean dependence model would miss, leading to a more realistic representation of spatial risk.

\Cref{img:jointexceedance} displays these probabilities for varying radii. The shaded area between the lower curve (union event) and the upper curve (intersection event) corresponds to the probability of partial exceedance—where extreme temperatures occur within the region but do not saturate it. The opacity of these bands increases with the radius $r$, visually encoding the effect of spatial extent. For the year 2050, the probability of exceedance remains notably high for thresholds up to $27^{\circ}C$, signaling a significant shift in the thermal baseline. Even at the extreme threshold of $29^{\circ}C$, the probability drops substantially but remains non-negligible.

\begin{figure}[t]
    \centering
    \begin{minipage}[b]{0.45\linewidth}
        \centering
        \includegraphics[width=0.6\linewidth]{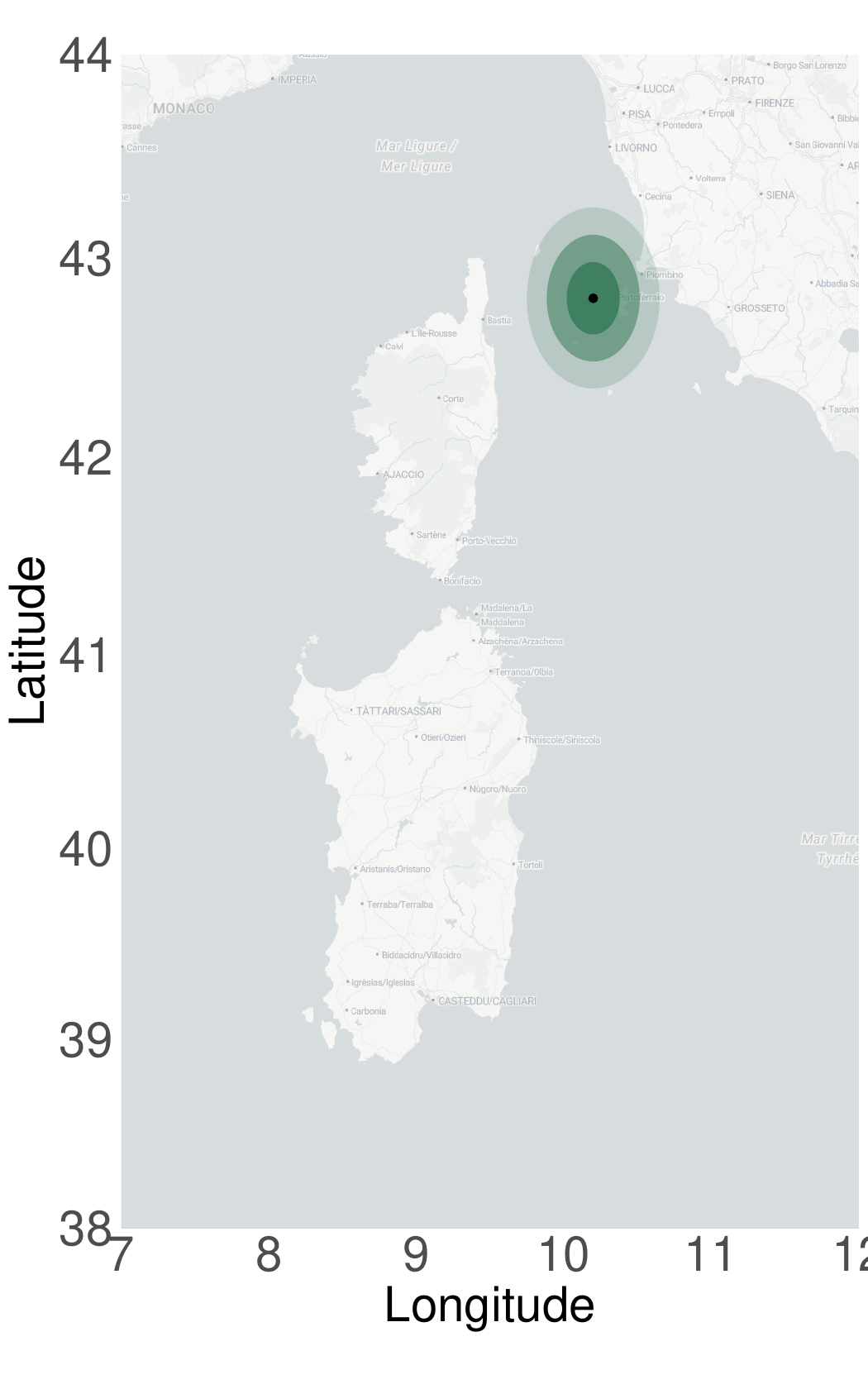}
        \vspace{0.2cm}
        \\ (a) Extension of the set $D_0(r)$.
    \end{minipage}
    \begin{minipage}[b]{0.45\linewidth}
        \centering
        \includegraphics[width=0.6\linewidth]{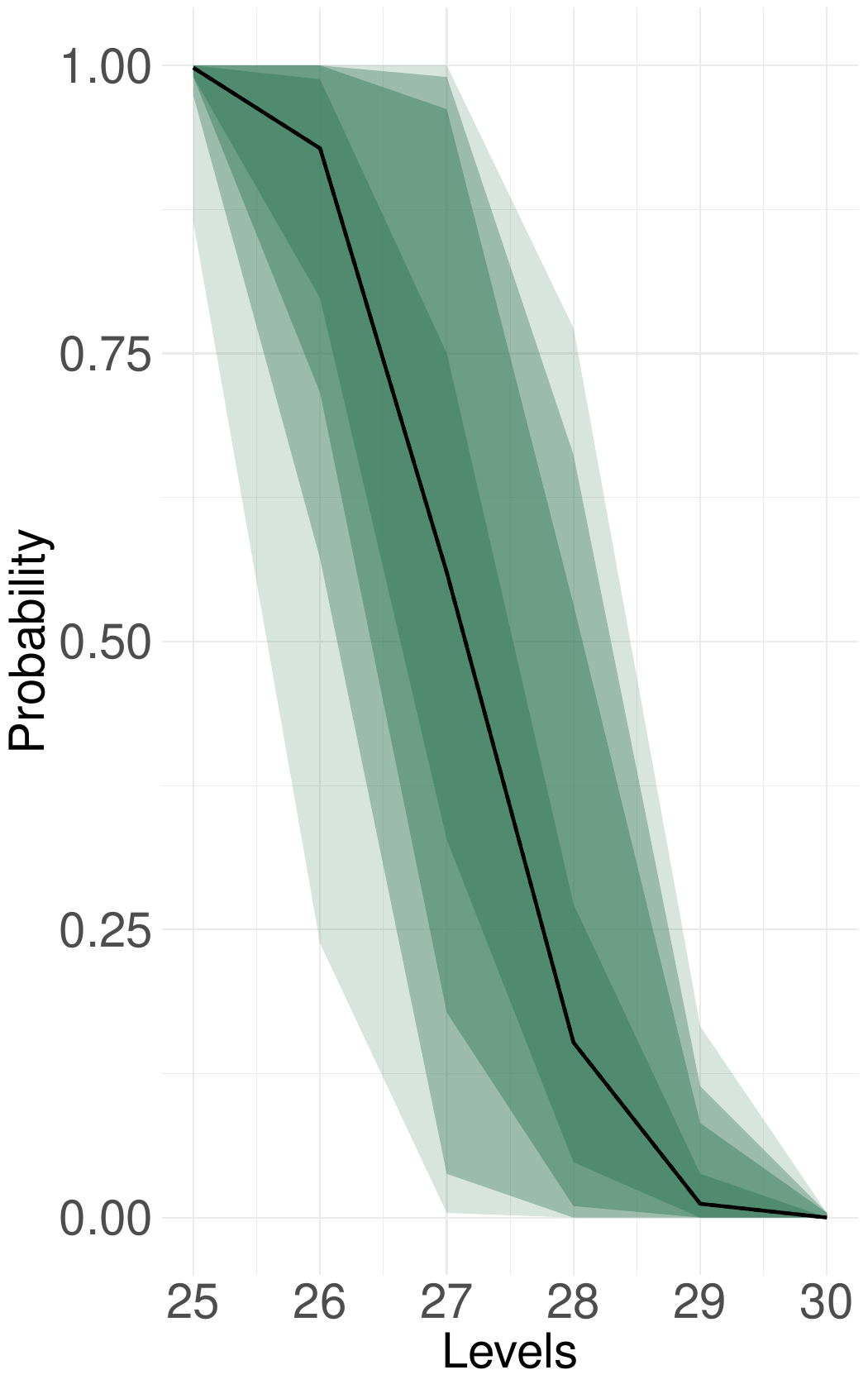}
        \vspace{0.2cm}
        \\ (b) Year 2050.
    \end{minipage}

    \vspace{0.5cm}

    \begin{tikzpicture}[y=0.5cm, x=0.5cm]
        \fill[white!0!col3] (1.8,0) rectangle (2.8,1);
        \node[below] at (2.3,-0.2) {10\,km};

        \fill[white!20!col3] (3.6,0) rectangle (4.6,1);
        \node[below] at (4.1,-0.2) {15\,km};

        \fill[white!40!col3] (5.4,0) rectangle (6.4,1);
        \node[below] at (5.9,-0.2) {20\,km};

        \fill[white!60!col3] (7.2,0) rectangle (8.2,1);
        \node[below] at (7.7,-0.2) {30\,km};

        \fill[white!80!col3] (9.0,0) rectangle (10.0,1);
        \node[below] at (9.5,-0.2) {50\,km};
    \end{tikzpicture}

    \caption{Joint exceedance probabilities for neighborhoods $D_0(r)$ centered on the Elba Island for varying radii $r$ (in km). For each radius, the shaded band is bounded below by the probability that \emph{all} vertices within $D_0(r)$ exceed the threshold, and bounded above by the probability that \emph{at least one} vertex exceeds it.}
    \label{img:jointexceedance}
\end{figure}

We further characterize extreme behavior by identifying regions where the random field exceeds a threshold $t$ with a specified probability $1 - \alpha$ \citep{ExtremeeventsBolin, extremeeventsHuser}. Formally, we study the random excursion set defined as $E_{t^+} = \{s \in D : Y_s > t\}$. Following the methodology of \cite{ExtremeeventsFrench} and \cite{extremeeventsHuser}, we construct two specific credible regions: the outer excursion set $S_{t^+}$, which contains the random set $E_{t^+}$ with high probability ($\mathbb{P}(E_{t^+} \subseteq S_{t^+}) = 1 - \alpha$); and the inner excursion set $S_{t^-}^C$, which is contained within $E_{t^+}$ with high probability ($\mathbb{P}(S_{t^-}^C \subseteq E_{t^+}) = 1 - \alpha$). In our analysis, we set the credibility level at $1-\alpha = 0.95$. \Cref{img:exceedance} displays these regions for the year 2050 and the threshold temperatures $t \in \{25^{\circ}C, 27^{\circ}C, 30^{\circ}C\}$, overlaid with the velocity field.

The analysis reveals distinct risk dynamics as the threshold increases. At $t=25^{\circ}C$, the inner set $S_{t^-}^C$ already encompasses specific areas in the southeastern domain—particularly east of Sardinia—indicating a high-confidence of exceedance. Conversely, the outer set $S_{t^+}$ covers the entire domain, implying that no location can be confidently excluded from the risk of surpassing $25^{\circ}C$. At the more extreme threshold of $30^{\circ}C$, the risk becomes highly localized. The outer set identifies potential critical areas in southern and eastern Sardinia, as well as the Tyrrhenian sector between Sardinia and mainland Italy (within the Italian EEZ).

These results provide a robust quantitative basis for risk assessment. The inner set $S_{t^-}^C$ demarcates zones of imminent hazard. Meanwhile, the outer set $S_{t^+}$ defines the maximum potential extent of the event, guiding the spatial allocation of monitoring resources to ensure that no exceedance goes undetected.

The bottom row compares our results with the classical Euclidean framework. Specifically, we apply the same yearly analysis on the empirical residuals from 2006-2022. The covariance parameters (sill and range) are estimated by fitting a parametric model to the point-wise mean of the empirical variograms. The two frameworks yield notably different hot-spot estimates, particularly at the $25^{\circ}C$ threshold. By incorporating velocity field data, the proposed framework narrows the extent of the high-risk hot-spot in southeastern Sardinia—a direct result of the pronounced currents in that area. Additionally, at the $30^{\circ}$ threshold, the spatial pattern of the low-risk hot-spot along the eastern coast of Corsica differs visibly between the two frameworks.

\begin{figure}
  \centering
  
  \begin{minipage}[b]{0.32\linewidth}
    \centering
    \includegraphics[width=0.7\linewidth]{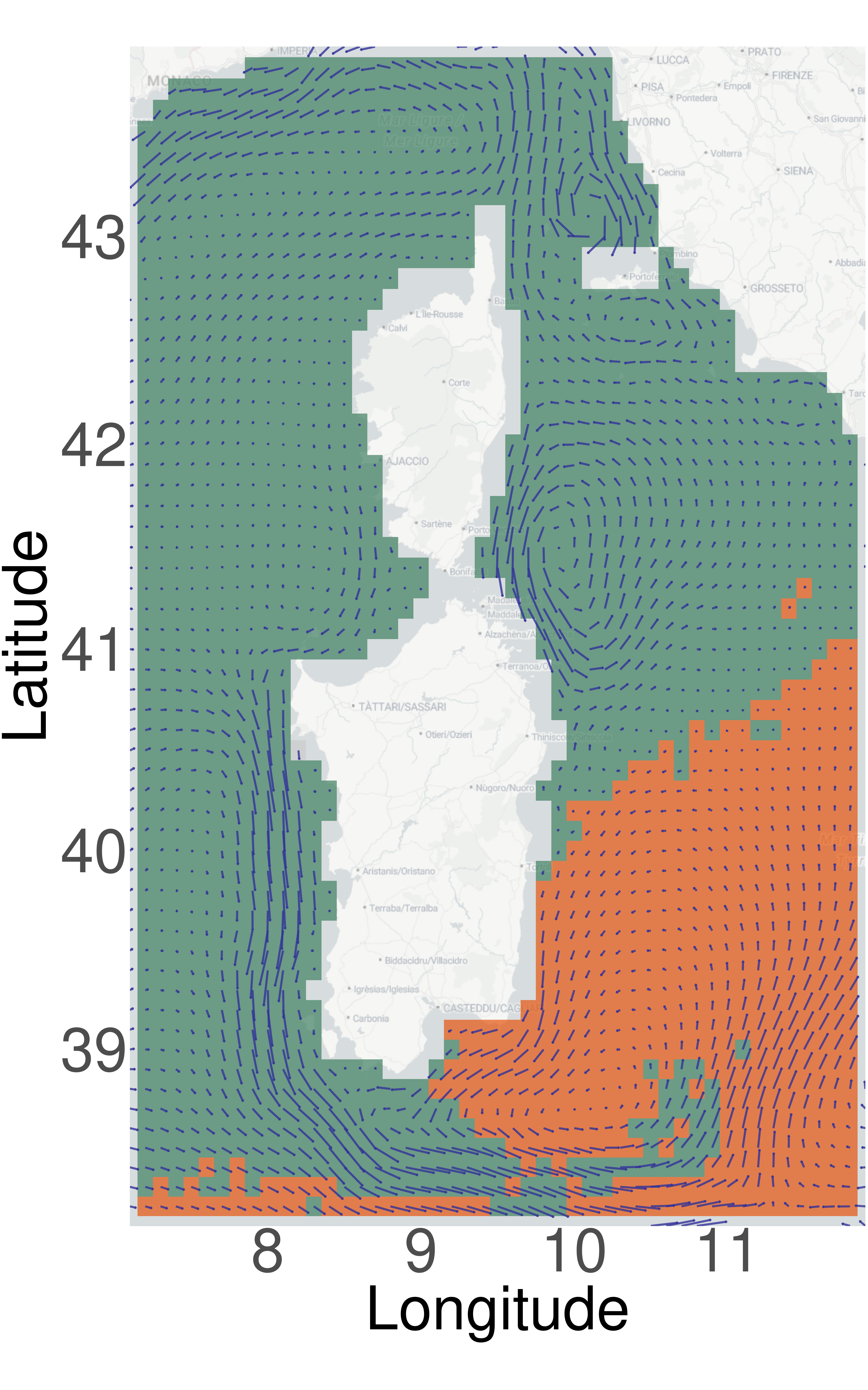}
    \vspace{0.2cm}
    \\ (a) $t = 25^{\circ}$
  \end{minipage}
  \hfill
  \begin{minipage}[b]{0.32\linewidth}
    \centering
    \includegraphics[width=0.7\linewidth]{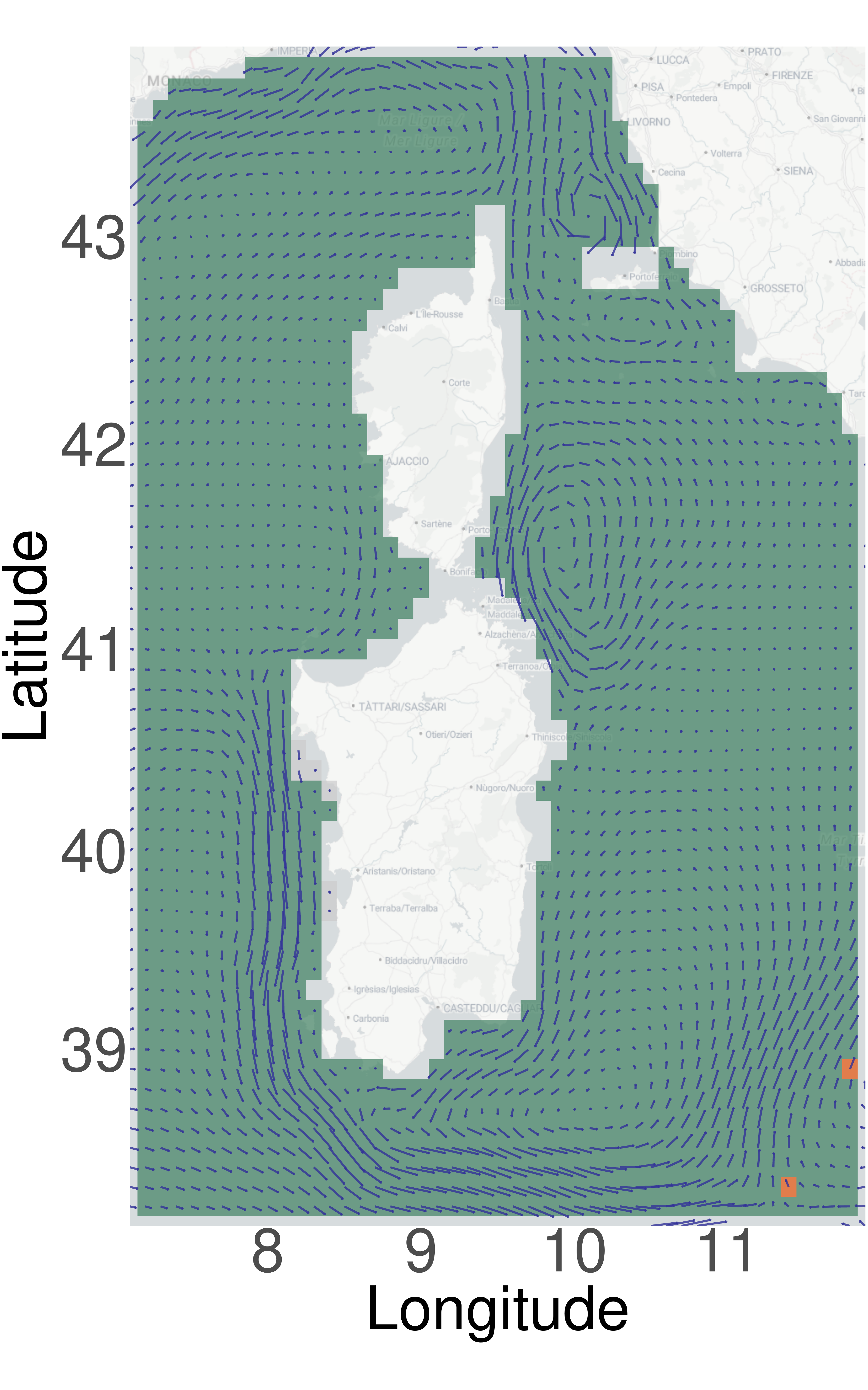}
    \vspace{0.2cm}
    \\ (b) $t = 27^{\circ}$
  \end{minipage}
  \hfill
  \begin{minipage}[b]{0.32\linewidth}
    \centering
    \includegraphics[width=0.7\linewidth]{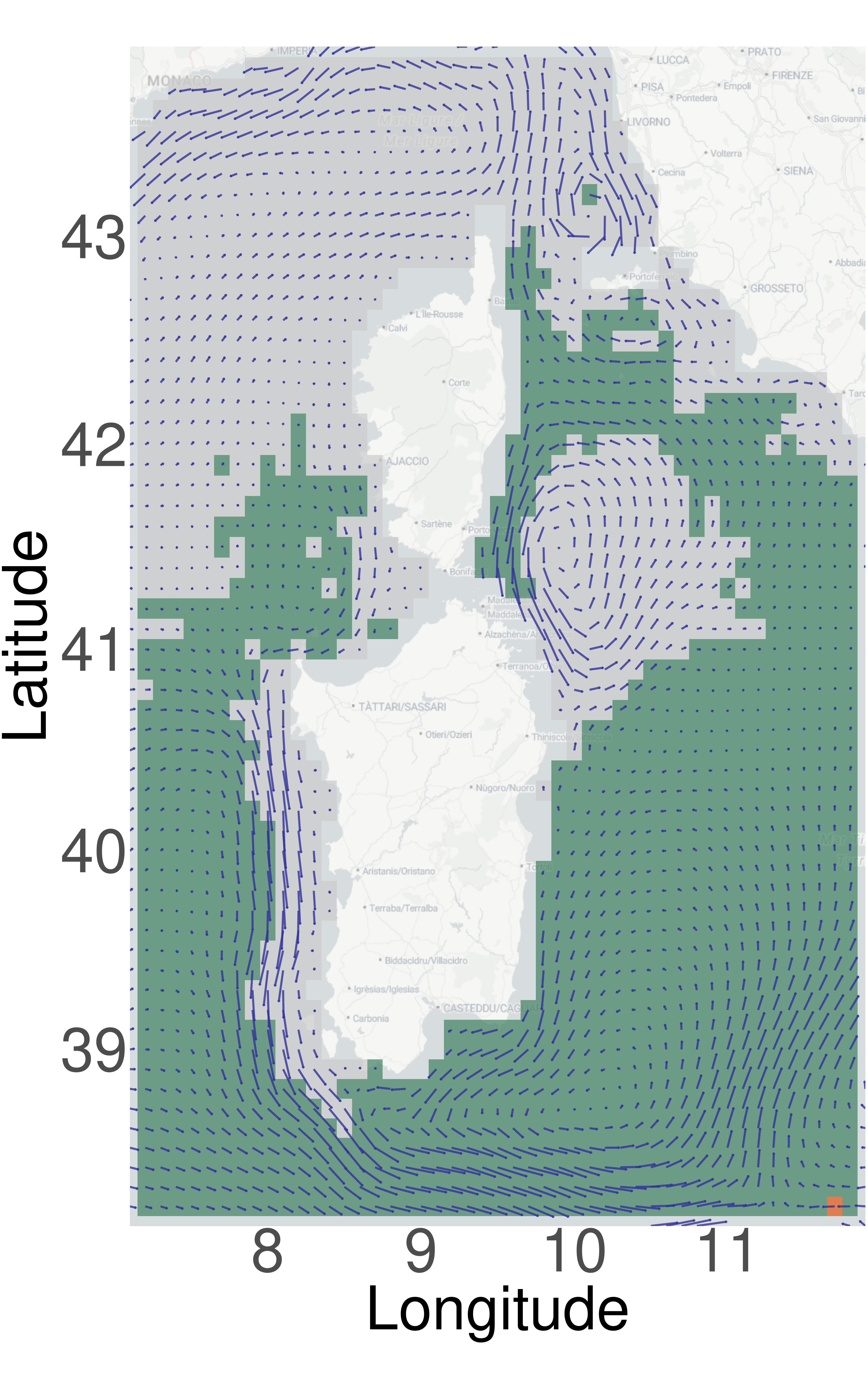}
    \vspace{0.2cm}
    \\ (c) $t = 30^{\circ}$
  \end{minipage}

  \vspace{0.5cm}

  \begin{minipage}[b]{0.32\linewidth}
    \centering
    \includegraphics[width=0.7\linewidth]{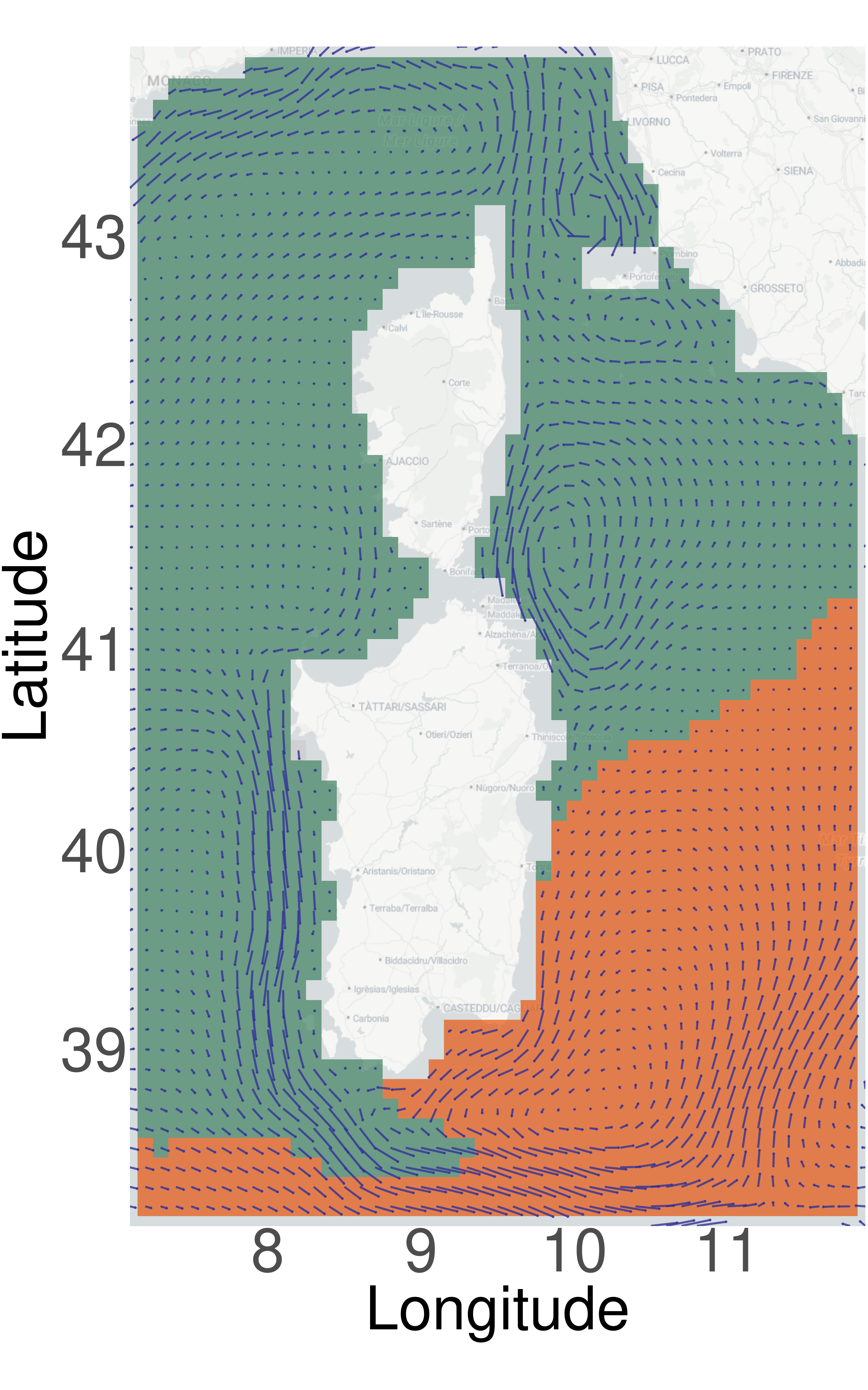}
    \vspace{0.2cm}
    \\ (d) $t = 25^{\circ}$ (Euclidean)
  \end{minipage}
  \hfill
  \begin{minipage}[b]{0.32\linewidth}
    \centering
    \includegraphics[width=0.7\linewidth]{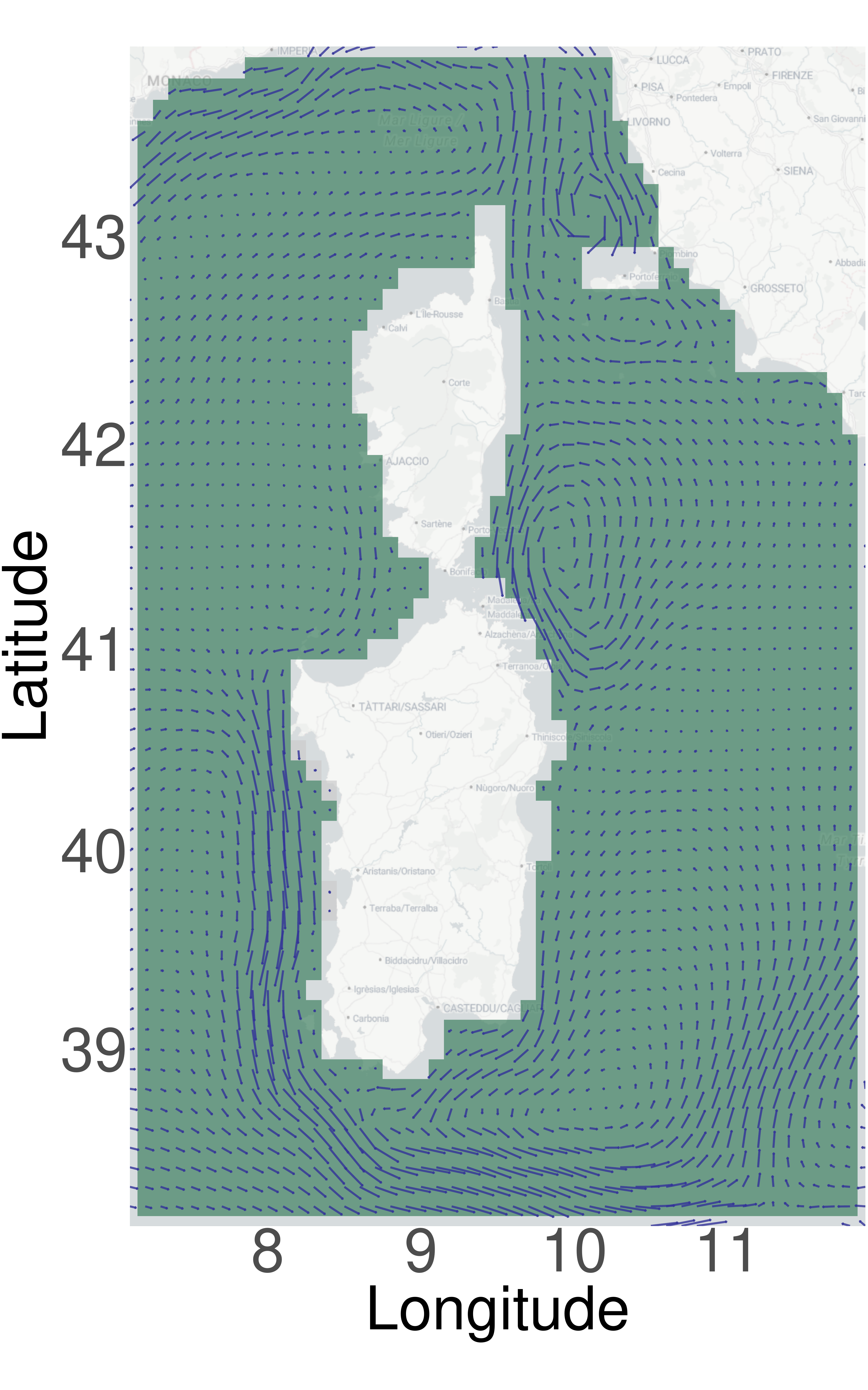}
    \vspace{0.2cm}
    \\ (e) $t = 27^{\circ}$ (Euclidean)
  \end{minipage}
  \hfill
  \begin{minipage}[b]{0.32\linewidth}
    \centering
    \includegraphics[width=0.7\linewidth]{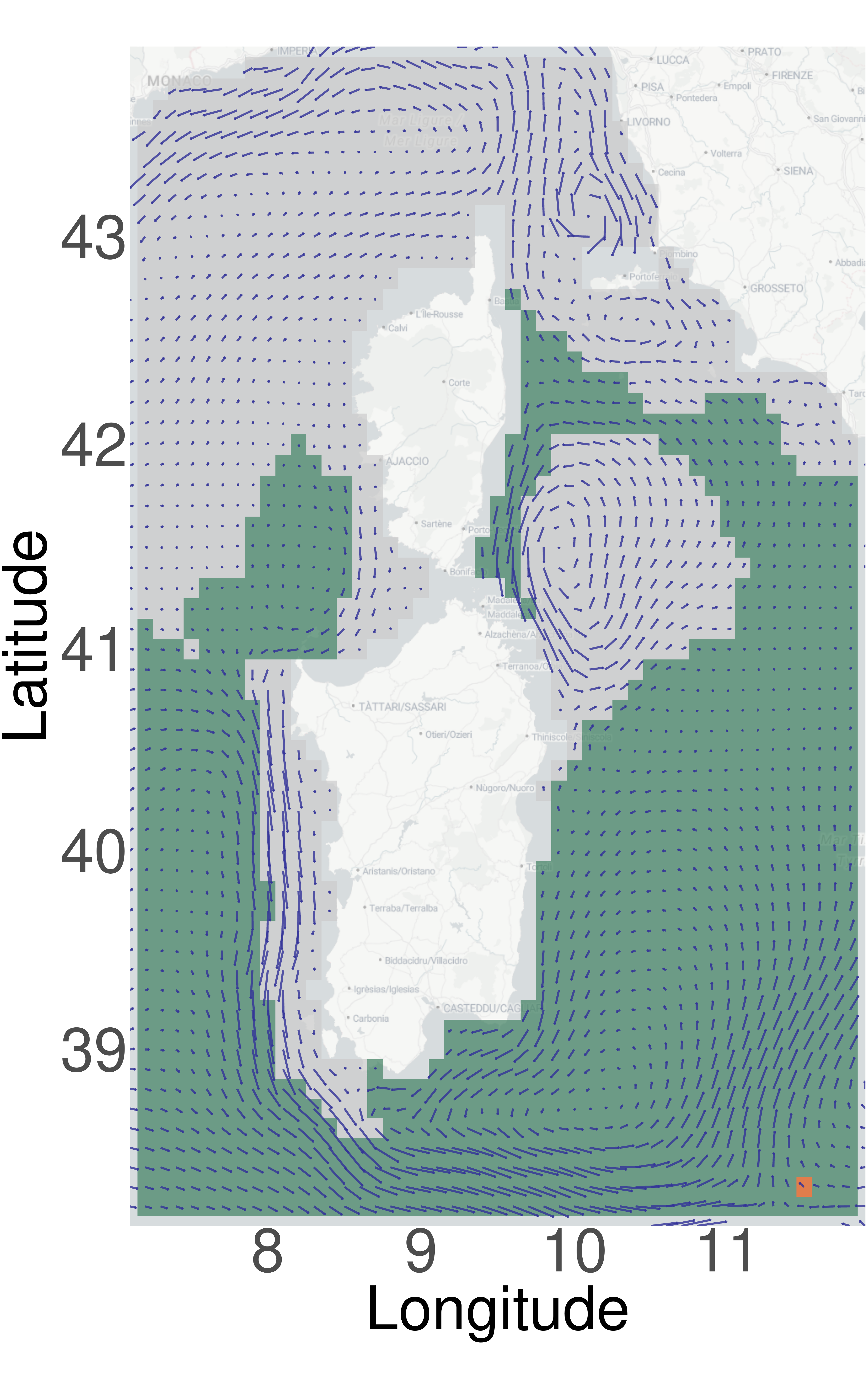}
    \vspace{0.2cm}
    \\ (f) $t = 30^{\circ}$ (Euclidean)
  \end{minipage}

  \vspace{0.5cm}

  \begin{tikzpicture}
    \fill[col3] (0,0) rectangle (0.4,0.4);
    \node[right] at (0.5,0.2) {$S_{t^+}$};

    \fill[col2] (4,0) rectangle (4.4,0.4);
    \node[right] at (4.5,0.2) {$S_{t^-}^C$};
  \end{tikzpicture}
  
  \caption{Hotspot estimates for RCP 4.5 -- Year 2050. Top row: New framework. Bottom row: Euclidean Framework.}
  \label{img:exceedance}
\end{figure}

%-----------------------------------------------------------------------------
% CONCLUSION
%-----------------------------------------------------------------------------

\section{Conclusions and discussion}

This work establishes a rigorous mathematical framework for modeling spatial processes governed by intrinsic directionality. By constructing valid covariance models on directed linear networks, we address the fundamental challenge of capturing flow-driven dependence. Central to our methodology is the integration of a Markovian dynamic with a specialized weighting procedure, which ensures both the positive definiteness of the covariance function and the stationarity of the process. Furthermore, the development of a penalized estimator tailored to this weighted setting resolves the instabilities inherent in network-based estimation.

The practical power of this framework is demonstrated through the analysis of projected water temperatures using the ERSEM model. By encoding ocean currents into the network topology, we derived a flow-informed covariance structure. This approach enabled a physics-consistent simulation of temperature fields, providing a probabilistic assessment of risk areas in the Mediterranean.

Beyond marine applications, the proposed framework offers a general solution for systems where topology and transport shape spatial dependence, such as river networks, urban traffic flows, and wind corridors. Its adaptability makes it a versatile tool for diverse domains requiring non-Euclidean geostatistical modeling.

Looking forward, several research avenues emerge to extend the flexibility and scope of the proposed framework.

A primary theoretical objective is to relax the transience assumption to accommodate self-contained (recurrent) networks. Generalizing the moving average rationale to closed systems—where mass is preserved rather than dissipated—would significantly broaden the model's applicability. Crucially, this extension would lay the groundwork for bridging our graph-based approach with state-of-the-art techniques based on Stochastic Partial Differential Equations (SPDEs) \citep{spdeLindgren} and spatial regression on manifolds \citep{Azzimonti_Sangalli_Secchi_Domanin_Nobile_2015}, effectively linking discrete network processing with continuous domain geostatistics.

In parallel, enhancing the physical fidelity of the model is a key priority. This includes incorporating a temporal dimension to model spatiotemporal evolution and exploring non-Markovian dynamics. Moving beyond the Markovian dynamic would allow for the representation of more complex physical phenomena, such as inertia or momentum in flow-driven systems, thereby offering a more realistic approximation of environmental processes.

Finally, the framework is well-positioned to integrate with Object-Oriented Spatial Statistics \citep{objectoriented}. Extending the domain from point-referenced observations to complex spatial objects—such as trajectories, curves, or sub-regions moving along the network—would open new frontiers for ecological and environmental studies involving structured data.

%% file: appendix.tex
\setcounter{theorem}{0}
\setcounter{lemma}{0}
\setcounter{proposition}{0}
\setcounter{equation}{0}
\setcounter{figure}{0}
\setcounter{algocf}{0}

\renewcommand{\thetheorem}{A\arabic{theorem}}
\renewcommand{\thelemma}{A\arabic{lemma}}
\renewcommand{\theproposition}{\arabic{proposition}}
\renewcommand{\thecorollary}{A\arabic{corollary}}
\renewcommand{\theequation}{A\arabic{equation}}
\renewcommand{\thefigure}{A\arabic{figure}}
\renewcommand{\thealgocf}{A\arabic{algocf}}

\newcommand{\cA}{\mathcal{A}}
\newcommand{\cC}{\mathcal{C}}
\newcommand{\cD}{\mathcal{D}}
\newcommand{\cF}{\mathcal{F}}
\newcommand{\cI}{\mathcal{I}}
\newcommand{\hDelta}{\hat{\Delta}}

\section{Proofs of the propositions}

\begin{proposition}
    The process $\{Z_x , x \in V\}$ is zero-mean: for $x \in V$,
    $
    \mathbb{E}\left[Z_x\right] = 0.
    $
\end{proposition}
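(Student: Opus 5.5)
The plan is to condition on the Markov chain and use that the Wiener integral of a deterministic square-integrable integrand has mean zero. Since $V$ is finite, linearity of expectation reduces the claim to showing that, for each fixed $v \in V$,
\[
I_v := \int_{L_{[v,X_1]}} \frac{g\bigl(dist_{T(v,x)}(u,x)\bigr)}{\sqrt{\beta_{T(v,x)}}}\, W(du)
\]
satisfies $\mathbb{E}[I_v]=0$.

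The first step is to use the independence of $W$ and $\{X_n\}_{n\geq 0}$ by conditioning on $\mathcal{F}_T := \sigma\bigl(T(v,x)\bigr)$. The random path $T(v,x)$ takes values in the countable set $\mathcal{P}(v,x)\cup\{\emptyset\}$ with the discrete $\sigma$-algebra. I will therefore decompose
\[
\mathbb{E}[I_v]=\sum_{p}\mathbb{P}\bigl(T(v,x)=p\bigr)\,\mathbb{E}\bigl[I_v \mid T(v,x)=p\bigr].
\]
On the event $\{T(v,x)=p\}$, the following quantities are all deterministic:
\begin{itemize}
\item the first edge $l_{[v,b_1]}$ of $p$, and hence the integration segment;
\item the kernel $u\mapsto g(|p_{(u,x)}|)$;
\item the constant $\beta_p>0$.
\end{itemize}
By independence, $I_v$ conditionally on $\{T(v,x)=p\}$ has the law of the Wiener integral $\int_{l_{[v,b_1]}} g(|p_{(u,x)}|)\,\beta_p^{-1/2}\,W(du)$. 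This is a centered Gaussian variable because the integrand is deterministic and square-integrable on a bounded segment. In the case $T(v,x)=\emptyset$ the distance is $+\infty$, and I will adopt the convention $g(+\infty)=0$ (implicit in the paper), so the contribution is identically zero.

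The main obstacle is justifying the interchange of expectation with the countable sum over paths, since cyclic paths make $\mathcal{P}(v,x)$ infinite. To handle it I will prove integrability. By the Itô isometry, $\mathbb{E}[I_v^2\mid T=p]=\beta_p^{-1}\int g^2(|p_{(u,x)}|)\,du \le \beta_p^{-1}\|g\|_2^2$. It then suffices to show $\sum_p \mathbb{P}(T=p)\,\beta_p^{-1}<\infty$, which is exactly the finiteness of the variance asserted after Proposition~\ref{prop:moments}. Granting this as the paper does, Cauchy--Schwarz gives $\mathbb{E}|I_v|<\infty$, and dominated convergence or Fubini legitimizes the decomposition.

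Combining these steps, each conditional mean is zero, so $\mathbb{E}[I_v]=0$ for every $v$. Summing over the finitely many $v\in V$ gives $\mathbb{E}[Z_x]=0$.
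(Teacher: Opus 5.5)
Your core argument is the paper's own: condition on $T(v,x)$, use the independence of $W$ from the chain so that the segment $L_{[v,X_1]}$, the kernel and $\beta_{T(v,x)}$ become deterministic, and conclude that each conditional Wiener integral is centred. The paper does this through the tower property and never addresses integrability. The step you add to justify exchanging expectation with the countable sum over paths is where your write-up goes wrong. The bound $\mathbb{E}[I_v^2\mid T(v,x)=p]\le\beta_p^{-1}\|g\|_2^2$ discards the fact that the integral only covers the window $[\,|p_{(b_1,x)}|,\,|p_{(v,x)}|\,]$ of $g^2$. The resulting condition $\sum_p\mathbb{P}(T(v,x)=p)\,\beta_p^{-1}<\infty$ is therefore \emph{not} the finiteness of the variance, and under \eqref{eq:betaexplicit} it can fail in networks with cycles. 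For example, take a directed triangle $a\to b\to c\to a$ with $\pi_{[a,b]}=\pi_{[a,S]}=1/2$ and $\pi_{[b,c]}=\pi_{[c,a]}=1$. Then $U(a)=1/2$, and the path in $\mathcal{P}(c,a)$ that goes around the triangle $k$ extra times has $\mathbb{P}(T(c,a)=p)=\beta_p=2^{-(k+1)}$. So your series is $\sum_{k\ge 0}1=\infty$, whereas $Var(Z_a)=\int_0^{\infty}g^2(r)\,dr$ is finite by Proposition~\ref{prop:variancestationary}.

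The repair is to keep the exact isometry instead of the crude bound. By Tonelli, $\mathbb{E}[I_v^2]=\sum_p\mathbb{P}(T(v,x)=p)\,\beta_p^{-1}\int_{l_{[v,b_1]}}g^2(|p_{(u,x)}|)\,du$. This is exactly the $v$-th summand of \eqref{eq:variance_moment}, hence at most $Var(Z_x)$. In the triangle example the windows drift off to infinity, and the series is a sum of integrals of $g^2$ over disjoint intervals. Under \eqref{eq:betaexplicit} this is bounded by $\int_0^\infty g^2(r)\,dr$, and Cauchy--Schwarz then gives $\mathbb{E}|I_v|<\infty$, so your decomposition and conclusion go through. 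For a completely arbitrary positive family $\beta_p$, which is the generality in which the proposition is stated, finiteness of \eqref{eq:variance_moment} is not automatic once cycles are present. It should be stated as a hypothesis rather than read off the remark following Proposition~\ref{prop:moments}; the paper's proof tacitly assumes it too.
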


\begin{proof}
By the tower property and independence between the Wiener process $W$ and the random variables $T$, we have
\begin{align*}
\mathbb{E}[Z_x]
&= \mathbb{E}\left[\sum_{v\in V}\int_{L_{[v,X_1]}}
\frac{g\big(dist_{T(v,x)}(u,x)\big)}{\sqrt{\beta_{T(v,x)}}}\, W(du)\right] \\
&= \sum_{v\in V}\mathbb{E}\left[\,
\mathbb{E}\left[\int_{L_{[v,X_1]}}
\frac{g\big(dist_{T(v,x)}(u,x)\big)}{\sqrt{\beta_{T(v,x)}}}\, W(du) \;\Bigg|\; T(v,x)\right]\right] \\
&= \sum_{v\in V}\mathbb{E}\left[\,
\int_{L_{[v,X_1]}}
\frac{g\big(dist_{T(v,x)}(u,x)\big)}{\sqrt{\beta_{T(v,x)}}}\, \mathbb{E}\left[ W(du) \;\Bigg|\; T(v,x)\right]\right]\\
&= \sum_{v\in V}\mathbb{E}\left[\,
\int_{L_{[v,X_1]}}
\frac{g\big(dist_{T(v,x)}(u,x)\big)}{\sqrt{\beta_{T(v,x)}}}\, \mathbb{E}\left[ W(du) \right]\right] \;=\; 0,
\end{align*}
since the Wiener process has zero mean.

\end{proof}

\begin{proposition}
    For $x, y \in V$,
    \begin{equation}
        Cov(Z_x, Z_y) = \sum_{v \in V} \mathbb{E}\left[ \int_{L_{\left[v,X_1\right]}}
        \frac{g\left(dist_{T(v,x)}\left(u,x\right)\right)
              g\left(dist_{T(v,y)}\left(u,y\right)\right)}
             {\sqrt{\beta_{T(v,x)}\,\beta_{T(v,y)}}} \, du \right].
    \end{equation}
    In particular,
    \begin{equation}
        Var(Z_x) = Cov(Z_x, Z_x) = \sum_{v \in V} \mathbb{E}\left[
        \int_{L_{\left[v,X_1\right]}}
        \frac{g\left(dist_{T(v,x)}\left(u,x\right)\right)^2}{\beta_{T(v,x)}} \, du
        \right].
    \end{equation}
\end{proposition}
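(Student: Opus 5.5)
Since the previous proposition gives $\mathbb{E}[Z_x]=0$ for every $x$, we have $Cov(Z_x,Z_y)=\mathbb{E}[Z_xZ_y]$, and it suffices to compute this cross moment. Write $Z_x=\sum_{v\in V} I_v^x$, where
\[
I_v^x=\int_{L_{[v,X_1]}} \frac{g\big(dist_{T(v,x)}(u,x)\big)}{\sqrt{\beta_{T(v,x)}}}\,W(du).
\]
For fixed $v$, the integration segment $l_{[v,X_1]}$ is shared by $Z_x$ and $Z_y$: it is the first edge of the trajectory of the chain started at $v$, and $T(v,x)$ and $T(v,y)$ are both read off this single trajectory. Expanding the product gives
\[
\mathbb{E}[Z_xZ_y]=\sum_{v,v'\in V}\mathbb{E}\big[I_v^x I_{v'}^y\big].
\]
We then condition on the $\sigma$-algebra $\mathcal{G}$ generated by all the Markov chains, i.e.\ by all the random paths $T(\cdot,\cdot)$ and first steps $X_1$. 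Because $W$ is independent of $\mathcal{G}$, conditionally on $\mathcal{G}$ each integrand is a deterministic, square-integrable function supported on a deterministic segment. The Itô isometry for Wiener integrals then gives
\[
\mathbb{E}\big[I_v^xI_{v'}^y\mid\mathcal{G}\big]=\int_{l_{[v,X_1^v]}\cap\, l_{[v',X_1^{v'}]}} f_v^x(u)\,f_{v'}^y(u)\,du ,
\]
where $f$ denotes the integrands.

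The key step is to show that the cross terms with $v\neq v'$ vanish. Distinct edges of the network meet in at most a vertex, so the intersection of the two segments has Lebesgue measure zero unless the edges coincide. For $v\neq v'$, the edges $l_{[v,\cdot]}$ and $l_{[v',\cdot]}$ start at different vertices. They could coincide as sets only if $l_{[v,v']}$ and $l_{[v',v]}$ were both edges, i.e.\ antiparallel edges. With the directed-edge interpretation, where $W(du)$ is the white noise attached to the directed edge, these are distinct edges carrying independent noise. I would make this explicit by viewing $W$ as independent Wiener processes indexed by the directed edges, parametrized by length. Either way the conditional cross term is $0$. This argument does not need the assumed independence of chains from different starting vertices: the zero arises at the level of the Wiener integral, pathwise in $\mathcal{G}$, which matches the paper's remark.

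For the diagonal terms $v=v'$, the isometry gives exactly
\[
\int_{l_{[v,X_1]}}\frac{g\big(dist_{T(v,x)}(u,x)\big)\,g\big(dist_{T(v,y)}(u,y)\big)}{\sqrt{\beta_{T(v,x)}\beta_{T(v,y)}}}\,du .
\]
If $T(v,x)=\emptyset$, the distance is $+\infty$. We adopt the convention $g(+\infty)=0$, consistent with the stated intent that unconnected vertices contribute nothing, so these cases cause no trouble. Taking the outer expectation by the tower property and summing over the finitely many $v$ yields \eqref{eq:moment_second}. Setting $y=x$ gives the variance formula.

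The remaining technical points are integrability and justifying the interchange of sums and expectations. Conditionally on $\mathcal{G}$, each term is bounded by $\|g\|_2^2/\beta$ by Cauchy--Schwarz, and the $\beta$'s are positive. Some care is needed because $T$ ranges over a countable set of paths and $\beta_T$ could be small along long cyclic paths, so $1/\beta$ is not uniformly bounded. I would therefore handle integrability by conditioning on each path realization and summing probabilities, noting that the sum over $v$ is finite. This is where I expect the main obstacle: making the conditioning rigorous (a random domain of integration, and integrands depending on the random path) and handling possibly unbounded $1/\beta$. The conditional Itô isometry combined with Fubini/Tonelli on nonnegative bounds for the second moments should settle it.
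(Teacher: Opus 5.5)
Your proposal is correct and follows the paper's proof. Like the paper, you condition on the Markov chains (independent of $W$), expand $Z_xZ_y$ into a double sum over starting vertices, use the white-noise isometry to kill the $v\neq v'$ terms because the first edges differ, and take the outer expectation of the diagonal terms. Your extra care fills in details the paper passes over silently: measure-zero intersections at shared vertices, the antiparallel-edge case resolved by attaching independent noise to directed edges, the convention $g(+\infty)=0$, and Tonelli for the interchange.
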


\begin{proof}

We condition on the joint realization of all random paths $\{T(v,x),\, T(v',y)\}_{v,v' \in V}$, which are independent of $W$. By the tower property,
\[
    \mathbb{E}[Z_x Z_y] = \mathbb{E}\!\left[\,
    \mathbb{E}\!\left[Z_x Z_y \;\middle|\;
    \{T(v,x),\,T(v',y)\}_{v,v' \in V}\right]\right].
\]

Expanding the product $Z_x Z_y$ yields a double sum over
$v, v' \in V$:

\begin{align*}
    \mathbb{E}\!\left[Z_x Z_y \;\middle|\; \{T(v,x),\,T(v',y)\}_{v,v' \in V}\right]&= \sum_{v \in V}\sum_{v' \in V}
    \mathbb{E}\!\left[
    \int_{L_{[v,X_1]}} \frac{g\!\left(dist_{T_{(v,x)}(u,x)}\right)}{\sqrt{\beta_{T_{(v,x)}}}} W(du)\right.\\ 
    & \hspace{-2 cm} \left.\int_{L_{[v',X'_1]}} \frac{g\!\left(dist_{T_{(v',y)}}(u',y)\right)}
                        {\sqrt{\beta_{T_{(v',y)}}}} W(du')
    \;\middle|\; \{T(v,x),\,T(v',y)\}_{v,v' \in V}
    \right].
\end{align*}

We now apply the white-noise isometry. Recall that for two square-integrable
deterministic functions $f_1, f_2$ and two edges $l_1, l_2$,
\[
    \mathbb{E}\!\left[\int_{l_1} f_1(u)\, W(du)\int_{l_2} f_2(u)\, W(du)\right]
    = \begin{cases} \displaystyle\int_{l} f_1(u)\,f_2(u)\,du & \text{if } l_1 = l_2 =: l,\\[6pt]
    0 & \text{if } l_1 \cap l_2 = \emptyset. \end{cases}
\]
Since $L_{[v,X_1]}$ and $L_{[v',X'_1]}$ are the first edges traversed by the Markov chains started at $v$ and $v'$ respectively, if $v \neq v'$  they are distinct edges of the network. Therefore, all cross terms with $v \neq v'$ vanish, and the double sum reduces to

\begin{align*}
    \mathbb{E}\!\left[Z_x Z_y \middle| \{T(v,x),T(v',y)\}_{v,v' \in V}\right] = \sum_{v \in V}
    \int_{L_{[v,X_1]}}\!
    \frac{g\!\left(dist_{T(v,x)}(u,x)\right)\,
          g\!\left(dist_{T(v,y)}(u,y)\right)}
         {\sqrt{\beta_{T(v,x)}\,\beta_{T(v,y)}}}du\\
\end{align*}
Taking the outer expectation with respect to the \(T\)'s yields \eqref{eq:moment_second}. Setting
$y = x$ gives \eqref{eq:variance_moment}.
\end{proof}

\begin{proposition}
Let $(\mathcal{L}, V)$ be a linear network, $\{Z_x, x \in V\}$ be the process defined in Equation (2) of the main text, and the normalization constants $\beta$'s be defined as in Equation (5) of the main text. Then for every $x\in V$ the marginal variance is constant and equals

\begin{equation}
    Var(Z_x) \;=\; \int_0^{+\infty} g^2(r)\,dr.
\end{equation}
\end{proposition}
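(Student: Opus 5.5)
The plan is to start from the variance formula \eqref{eq:variance_moment} of Proposition~\ref{prop:moments} and evaluate each expectation by conditioning on the value of the random path. For fixed $v$, the integrand is nonzero only when $T(v,x)=p$ for some $p=\{l_1,\dots,l_n\}\in\mathcal{P}(v,x)$. On that event the first edge is $l_1=l_{[v,X_1]}$. As $u$ runs over $l_1$, the distance $dist_{T(v,x)}(u,x)=\mathrm{length}([u,b_1])+|p_{(b_1,x)}|$ runs over the interval $I_p=[\,|p|-\mathrm{length}(l_1),\,|p|\,]$. Hence the change of variables $r=dist_{T(v,x)}(u,x)$ gives
\[
Var(Z_x)=\sum_{v\in V}\sum_{p\in\mathcal{P}(v,x)}\mathbb{P}\bigl(T(v,x)=p\bigr)\,\frac{1}{\beta_p}\int_{I_p} g^2(r)\,dr .
\]

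Next I insert \eqref{eq:probabilityT} and \eqref{eq:betaexplicit}. The factors $U(x)$ cancel, which is exactly why $U(x)$ was put into $\beta$. What remains is
\[
\frac{\mathbb{P}(T(v,x)=p)}{\beta_p}=\prod_{l_{[a,b]}\in p} q_{[a,b]},\qquad q_{[a,b]}:=\frac{\pi_{[a,b]}}{\sum_k \pi_{[k,b]}} .
\]
For every vertex $b$ with at least one incoming edge, $\sum_a q_{[a,b]}=1$, so $q$ is the transition kernel of a \emph{backward} Markov chain $\{Y_m\}$ on $V$ that moves from a vertex to one of its upstream neighbours. A path $p$ ending at $x$, read from $x$ backwards, is then a trajectory of $Y$ started at $Y_0=x$, and $\prod_{l\in p} q_l$ is the probability of that trajectory. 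The double sum over $v$ and $p\in\mathcal{P}(v,x)$ is therefore a sum over all finite backward trajectories from $x$, cycles included, matching the cyclic paths allowed in $\mathcal{P}(v,x)$.

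I then regroup this sum by the number of backward steps $m$. The interval $I_p$ for a trajectory of $m$ steps is $[R_{m-1},R_m]$, where $R_m$ is the cumulative length of the first $m$ backward edges. Summing over all trajectories of length $m$ yields $\mathbb{E}\bigl[\int_{R_{m-1}}^{R_m}g^2(r)\,dr\bigr]$, and summing over $m$ telescopes to
\[
\mathbb{E}\Bigl[\int_0^{R_\infty}g^2(r)\,dr\Bigr],\qquad R_\infty=\lim_m R_m .
\]
Each step contributes an edge of the grid, so its length is bounded below by the grid spacing. Consequently $R_\infty=+\infty$ whenever the backward chain never stops, and in that case the variance equals $\int_0^{+\infty}g^2(r)\,dr$. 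Exchanging sums with the expectation is justified because all terms are nonnegative (Tonelli).

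The main obstacle is the behaviour of the backward chain at sources, that is, vertices with no incoming edges. There the kernel $q$ is undefined and the backward walk stops after a finite length, so the telescoped integral would be truncated rather than running over all of $[0,+\infty)$. I would first check whether the unit-influx identity $\sum_k \pi_{[k,x]}/\beta_{p(k,x)}=1$, noted after the statement, can be propagated recursively so that the contribution of a vertex is $\int_{d}^{\infty}g^2$ rather than $\int_d^{d+\mathrm{length}}g^2$. Failing that, I would adopt the stream-network convention of \cite{VerHoef2006}: treat each source as fed by an infinitely long incoming edge (external inflow with probability one), so the last backward step carries the tail $\int_{R_m}^{\infty}g^2(r)\,dr$. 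I expect this point, rather than the algebraic cancellation, to require the most care, since it is where the open-system boundary assumptions of Section~\ref{subsec:lin_net} enter the result.
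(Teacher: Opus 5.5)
Your proposal is correct and is essentially the paper's proof. Both use the same conditioning on $T(v,x)$, the same change of variables along the first edge and the same cancellation of $U(x)$; the paper's key claim $S_x(r)\equiv 1$, proved by conservation of mass at junctions and induction on $r$, is equivalent to your statement that the backward chain with kernel $q$ never terminates, since $S_x(r)=\mathbb{P}(R_\infty>r)$, so your step-count telescoping is just a cleaner version of that induction that handles cycles transparently. The paper resolves the source issue exactly by your second option (a virtual upstream node pushed to infinity), and you can drop the first option: without that convention the identity fails, e.g. a source joined to an outlet $x$ by a single edge of length $\ell$ gives $Var(Z_x)=\int_0^{\ell}g^2(r)\,dr$.
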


\begin{proof}
Consistent with the framework established by \cite{VerHoef2006, rivers2010}, we address the unbounded domain of the moving average by introducing a virtual upstream node acting as a generic boundary condition. This ensures total probability mass preservation for the source nodes. Pushing the upstream boundary to infinity ensures that the influence of any specific external condition vanishes, effectively decoupling the internal covariance structure from unobserved upstream inputs.

The first edge of $T(v,x)$ is indicated with $L(v,X_1);$ if $T(v,x)=p(v,x),$ its first edge is therefore $l(v,x_1)$. By \Cref{prop:moments},
\begin{align}
Var(Z_x)
&= \sum_{v\in V}\,\mathbb{E}\!\left[\int_{L_{[v,X_1]}}
\frac{g\left(dist_{T(v,x)}(u,x)\right)^2}{\beta_{T(v,x)}}\,du\right]\nonumber\\
&= \sum_{v\in V}\sum_{p(v,x)\in\mathcal{P}(v,x)}
\mathbb{P}\left(T(v,x)=p(v,x)\right)\,
\frac{1}{\beta_{p(v,x)}}\int_{l_{[v,x_1]}} g\left(|p(u,x)|\right)^2\,du.
\label{eq:var-sum}
\end{align}
Along the edge $l_{[v,x_1]}$, the map $u\mapsto r:=|p(u,x)|$ is the arc-length parameter, so $dr=du$ while the integration range is $[\,|p(x_1,x)|,\,|p(v,x)|\,]$. Swapping summation and integral, one obtains:
\[
Var(Z_x)= \int_{0}^{\infty} g(r)^2\, S_x(r)\,dr,
\]
where the auxiliary function $S_x(r)$ is defined as:
\[
S_x(r)=\sum_{v \in V}\sum_{p(v,x)\in\mathcal{P}(v,x)}
\frac{\mathbb{P}(T(v,x)=p(v,x))}{\beta_{p(v,x)}}\,
\mathbb{I}_{\{|p(x_1,x)|\le r < |p(v,x)|\}}.
\tag{$\star$}\label{eq:Sx-def}
\]
We claim that $S_x(r)\equiv 1$ for all $r \in \mathbb{R}$. Note that $S_x(r)$ is a piecewise constant function by construction; discontinuities are possible only at values of $r$ corresponding to the length of a path ending in $x$. We demonstrate that the function is constant by analyzing the conservation of probability mass across a generic branching point, as illustrated in \Cref{img:schema}.

\begin{figure}[h]
    \centering
    \vspace{0.5 cm}
    \begin{tikzpicture}[scale=1]
        \coordinate (Join) at (1,0);
        \coordinate (V2) at (-1.5, -1.2);
        \coordinate (V1) at (-3, 1);
        \coordinate (SplitA) at (-4, -0.8);
        \coordinate (SplitB) at (-4, -1.6);
        \draw[thick,-] (V1) -- (Join);
        \draw[thick,-] (V2) -- (Join);
        \draw[thick,-] (SplitA) -- (V2);
        \draw[thick,-] (SplitB) -- (V2);
        \fill (Join) circle (2pt) node[above right]{$x$};
        \fill (V1) circle (2pt) node[above]{$v_1$};

        \fill (V2) circle (2pt) node[above left]{$v_2$};

        \fill (SplitA) circle (2pt) node[left]{$v_3$};
        \fill (SplitB) circle (2pt) node[left]{$v_4$};

        \draw[ultra thick, ->] (-5, -2.5) -- (-3, -2.5) node[right] {Flow};

        \draw[dashed, gray, thin] (Join) ++(0, 1.5) arc (90:270:1.5) node[midway, left, font=\footnotesize] {$r_1$};
        \draw[dashed, gray, thin] (Join) ++(140:3.5) arc (140:240:3.5) node[midway, left, font=\footnotesize] {$r_2$};

    \end{tikzpicture}
    \caption{Schematic representation of the network for the proof of mass conservation.}
    \label{img:schema}
\end{figure}

Consider the first radius $r_1 \leq |l_{\left[v_2,x\right]}|$ shown in the figure. The only paths for which the indicator function is non-zero are $p(v_1,x)$ and $p(v_2,x)$. In this scenario, the normalization constants \(\beta\)'s are given, up to the multiplying probability $U(x)$, by 
\[
\beta_{p(v_1,x)} = \beta_{p(v_2,x)} \propto \pi_{[v_1,x]} + \pi_{[v_2,x]} \coloneqq A.
\]

Thus, the non-zero probabilities and normalization constants are:

\begin{itemize}
    \item \(\mathbb{P}(p(v_1, x)) = \mathbb{P}(T(v_1,x)=p(v_1, x)) = U(x) \, \cdot \, \pi_{\left[v_1,x\right]}\),
    \item \(\mathbb{P}(p(v_2, x)) =  \mathbb{P}(T(v_1,x)=p(v_2, x)) = U(x) \, \cdot \, \pi_{\left[v_2,x\right]}\),
    \item \(\beta_{p(v_1,x)} = \beta_{p(v_2,x)} = U(x) \, \cdot \, A\).
\end{itemize}

The expression for $S_x(r_1)$ becomes:
\[
S_x(r_1) = \frac{\pi_{[v_1,x]}}{A} + \frac{\pi_{[v_2,x]}}{A} = \frac{\pi_{[v_1,x]} + \pi_{[v_2,x]}}{\pi_{[v_1,x]} + \pi_{[v_2,x]}} = 1.
\]

We now show that $S_x(r)$ remains equal to $1$ when moving to a radius $r_2 > |l_{\left[v_2,x\right]}|$, hence crossing the junction $v_2$. The path $p(v_2,x)$ is no longer included in the sum, but is replaced by the upstream paths originating from $v_3$ and $v_4$. The paths contributing to $S_x(r_2)$ are $p(v_1,x), \, p(v_3,x), \, p(v_4,x)$. Let $B = (\pi_{[v_3,v_2]} + \pi_{[v_4,v_2]})$. The relevant probabilities and normalization constants \(\beta\)'s are:

\begin{itemize}
    \item \(\mathbb{P}(p(v_3, x)) =  \mathbb{P}(T(v_3,x)=p(v_3, x)) = U(x) \, \cdot \, \pi_{[v_3,v_2]} \, \pi_{[v_2,x]}\),
    \item \(\mathbb{P}(p(v_4, x)) =  \mathbb{P}(T(v_4,x)=p(v_4, x)) = U(x) \, \cdot \, \pi_{[v_4,v_2]} \, \pi_{[v_2,x]}\),
    \item \(\beta_{p(v_3,x)} = \beta_{p(v_4,x)} = U(x) \, \cdot \, A \cdot B\).
\end{itemize}

Substituting these into the expression for $S_x(r_2)$, one obtains:
\begin{align*}
S_x(r_2) &= \frac{\mathbb{P}(p(v_1,x))}{\beta_{p(v_1,x)}} + \frac{\mathbb{P}(p(v_3,x))}{\beta_{p(v_3,x)}} + \frac{\mathbb{P}(p(v_4,x))}{\beta_{p(v_4,x)}}\\
&= \frac{\pi_{[v_1,x]}}{A} + \frac{\pi_{[v_3,v_2]} \, \pi_{[v_2,x]}}{A \cdot B} + \frac{\pi_{[v_4,v_2]} \, \pi_{[v_2,x]}}{A \cdot B}\\
&= \frac{\pi_{[v_1,x]}}{A} + \frac{\pi_{[v_2,x]}}{A} \, \left(\frac{\pi_{[v_3,v_2]} + \pi_{[v_4,v_2]}}{B}\right)\\
&=  \frac{\pi_{[v_1,x]}}{A} + \frac{\pi_{[v_2,x]}}{A}\\
&=1.
\end{align*}

This conservation of mass applies to any bifurcation. Hence, by induction on $r$, $S_x$ is a constant function equal to 1,
completing the proof.
\end{proof}

\begin{proposition}
Let $(\mathcal{L}, V)$ be a linear network, $\{Z_x, x \in V\}$ be the process defined in Equation (2) of the main text, and the normalization constants $\beta$'s be defined as in Equation (5) of the main text. Let \(x,y \in V\), \(x \neq y.\) 

\begin{itemize}
\item[(i)] If $\, \, \tilde{\mathcal{P}}(x,y) \bigcup \tilde{\mathcal{P}}(y,x) \neq \emptyset$, 
    \begin{equation*}
        Cov(Z_x, Z_y) = \sum_{p \in \tilde{\mathcal{P}}(x,y) \bigcup \tilde{\mathcal{P}}(y,x)} w_{p} C(|p|),
    \end{equation*}
where, for any two vertices \(v_1,v_2 \in V,\)
    \begin{equation}
    w_{p{(v_1,v_2)}} = \left(\prod_{l_{\left[a,b\right]} \in p{(v_1,v_2)}} \frac{\pi_{\left[a,b\right]}}{\sqrt{\sum_{k} \pi_{\left[k,b\right]}}}\right) \frac{U(v_2,v_1)}{\sqrt{U(v_1)\, U(v_2)}},
    \end{equation}
    and $C(\cdot)$ is the spatial covariance function defined, for $h \geq 0,$ as $$C(h) = \int_0^{+\infty} g(r)g(r + h) dr.$$    
\item[(ii)]
If $\, \, \tilde{\mathcal{P}}(x,y) \bigcup \tilde{\mathcal{P}}(y,x) = \emptyset$,
\begin{equation*}
        Cov(Z_x, Z_y) = 0.
    \end{equation*}
\end{itemize}    

\end{proposition}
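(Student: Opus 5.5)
Starting point: the general second-moment formula of Proposition~\ref{prop:moments},
\[
Cov(Z_x,Z_y)=\sum_{v\in V}\mathbb{E}\Big[\int_{L_{[v,X_1]}}\frac{g(dist_{T(v,x)}(u,x))\,g(dist_{T(v,y)}(u,y))}{\sqrt{\beta_{T(v,x)}\beta_{T(v,y)}}}\,du\Big].
\]
For a fixed upstream vertex $v$, the integrand is non-zero only on the event that both $T(v,x)\neq\emptyset$ and $T(v,y)\neq\emptyset$, i.e.\ the chain started at $v$ visits both $x$ and $y$.

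\textbf{Step 1: coupling through the single chain.} The two random paths come from the same trajectory $\{X_n\mid X_0=v\}$ and share the first edge $L_{[v,X_1]}$. On the event above, suppose the last visit to $y$ occurs after the last visit to $x$. Then $T(v,y)$ is the concatenation of $T(v,x)$ with a path $q$ from $x$ to $y$. Because $x$ is never revisited after its last visit, $q$ contains no cycle through its initial vertex, so $q\in\tilde{\mathcal P}(x,y)$. In that case
\[
dist_{T(v,y)}(u,y)=dist_{T(v,x)}(u,x)+|q|,
\]
and symmetrically with $x$ and $y$ exchanged. This is exactly where $\tilde{\mathcal P}$ enters. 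It also gives part (ii) directly: if $\tilde{\mathcal P}(x,y)\cup\tilde{\mathcal P}(y,x)=\emptyset$, no trajectory visits both $x$ and $y$, every integrand vanishes, and the covariance is $0$.

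\textbf{Step 2: factorising the path probability.} Using the Markov property at the last visit to $x$, I would write
\[
\mathbb{P}\big(T(v,x)=p,\;T(v,y)=p\cup q\big)=\Big(\prod_{l\in p}\pi_l\Big)\Big(\prod_{l\in q}\pi_l\Big)\,U(y,x).
\]
Here $U(y,x)$, started from $y$, is the probability of never returning to $x$ or $y$; the definition of $U(\cdot,\cdot)$ in the paper seems intended to play this role, and I would read its argument order so that it matches~\eqref{eq:weightexplicit}. For the normalising constants, \eqref{eq:betaexplicit} gives $\beta_{p\cup q}=\beta_p\,\big(\prod_{l_{[a,b]}\in q}\sum_k\pi_{[k,b]}\big)\,U(y)/U(x)$. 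Substituting, the ratio of probability to $\sqrt{\beta_p\beta_{p\cup q}}$ splits as
\[
\frac{\prod_{l\in p}\pi_l\,U(x)}{\beta_p}\cdot w_q,
\]
with $w_q$ exactly as in~\eqref{eq:weightexplicit}. The first factor is the variance summand from Proposition~\ref{prop:variancestationary}.

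\textbf{Step 3: unit-influx and the kernel $C$.} Sum over $v$ and over $p\in\mathcal P(v,x)$, and change variables to arc length $r=dist(u,x)$ as in the proof of Proposition~\ref{prop:variancestationary}. This yields
\[
\sum_q w_q\int_0^\infty g(r)\,g(r+|q|)\,S_x(r)\,dr .
\]
Since $S_x\equiv1$, each term equals $w_q\,C(|q|)$. Adding the symmetric contributions from $q\in\tilde{\mathcal P}(y,x)$ then gives part (i).

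\textbf{Main obstacle.} I expect Step 2 to be the hardest: the exact decomposition of the joint law of the two last-exit paths when cycles exist. The difficulty is showing that the events ``last $x$-visit before last $y$-visit'' and the reverse are disjoint, and that the post-$x$ segment ranges exactly over $\tilde{\mathcal P}(x,y)$ with terminal factor $U(y,x)$. I would try a last-exit decomposition: condition on the final visit to $x$ and use the strong Markov property at that time, checking carefully that the pre-segment law is $\prod\pi\cdot(\text{no-return})$ and not double-counted. If the concatenation bookkeeping for $\beta$ creates mismatched $U$ factors, I would absorb them into the $U(v_2,v_1)/\sqrt{U(v_1)U(v_2)}$ factor of $w_p$.
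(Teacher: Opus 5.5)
Your proposal is correct and follows the paper's proof essentially step for step. You split each $v$-term according to whether $L_x<L_y$ or $L_x>L_y$, write the joint law as $\Pi(p)\,\Pi(q)\,U(y,x)$ with $q\in\tilde{\mathcal{P}}(x,y)$, and use $B(p\cup q)=B(p)B(q)$ to pull out $w_q$; the remaining sum over $v$ and $p$ then collapses into $S_x(r)\equiv 1$ from the constant-variance proposition. The obstacles you anticipate are benign ($L_x\neq L_y$ simply because $x\neq y$, and concatenation with $q\in\tilde{\mathcal{P}}(x,y)$ is a bijection onto the relevant realizations), and your Step 2 already produces the factor $U(y,x)/\sqrt{U(x)U(y)}$ exactly, so no stray $U$ factors need absorbing.
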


\begin{proof}

As before, let $L(v,X_1)$ be the first edge of $T(v,x),$ and \(l(v,x_1)\) be the first edge of the realization $T(v,x)=p(v,x)$. Note that two simultaneous realizations $T(v,x)=p(v,x)$ and $T(v,y)=p(v,y)$ must share the first edge \(l(v,x_1)\).

By \Cref{prop:moments},

\begin{align*}
Cov(Z_x,Z_y)\\
&= \sum_{v \in V} \sum_{p(v,x) \in \mathcal{P}(v,x)} \sum_{p(v,y) \in \mathcal{P}(v,y)} \frac{\mathbb{P}(T(v,x) = p(v,x), T(v,y) = p(v,y))}{\sqrt{\beta_{p(v,x)} \, \beta_{p(v,y)}}}\\
&\int_{l_{[v,x_1]}} g(|p(u,x)|) \, g(|p(u,y)|) du.
\end{align*}

To simplify notation, for any path $p$, let $\Pi(p) = \prod_{l \in p} \pi_l$ and $B(p) = \prod_{l_{[a,b]} \in p} \sum_k \pi_{[k,b]}$, so that $\mathbb{P}(p) = \Pi(p)\,U(x)$ and $\beta_p = B(p)\,U(x)$, where $x$ is the terminal node of $p$.

We now turn to (i). Given the initial state $v$ of the Markov chain, $L_x$ and $L_y$ are both finite with probability one, because $x$ and $y$ are transient. Moreover they are different, since $x\not =y$. In the representation of $Cov(Z_x,Z_y)$, we partition each contribution to the external sum according to the order of the last visits: $I_v(x,y)$ will denote the contribution from realizations such that $L_x <L_y$, and $I_v(y,x)$ the contribution from realizations such that $L_x > L_y$.

Let's focus on $I_v(x,y)$. Since $L_x < L_y,$ the trajectory $T(v,y)$ can be decomposed as the concatenation of $T(v,x)$ and $\tilde{T}(x,y) = (l_{[X_{L_x},X_{L_x+1}]}, \dots, l_{[X_{L_y-1}, X_{L_y}]})$, that is the trajectory of the chain from the last visit to $x$ to the last visit to $y$. Note that $\tilde{T}(x,y)$ cannot revisit $x$; hence $\tilde{T}(x,y) \in \tilde{\mathcal{P}}(x,y).$ Moreover:
\begin{align*}
    &\mathbb{P}\!\left(T(v,x) = p(v,x), T(v,y) = p(v,y)\right)\\
    &=\mathbb{P}\!\left(T(v,x) = p{(v,x)},\; \tilde{T}(x,y) = p{(x,y)}\right)
    = \Pi(p{(v,x)})\cdot\Pi(p{(x,y)})\cdot U(y,x),
    \label{eq:jointprob}
\end{align*}
where $U(y,x)$ is the probability of never returning to either $x$ or $y$ after reaching $y$, as defined in the main text. The path $p(x,y) \in \tilde{\mathcal{P}}(x,y)$ is the path realizing $\tilde{T}(x,y),$ consistent with $p(v,x)$ and $p(v,y)$.  Conversely, any \(p(x,y) \in \tilde{\mathcal{P}}(x,y),\) and any \( p(v,x) \in \mathcal{P}(v,x)\) identify a \(p(v,y) \in \mathcal{P}(v,y).\)

Now note that
\(B(p{(v,y)}) = B(p{(v,x)})\,B(p{(x,y)})\), and therefore
\[
    \beta_{p{(v,x)}}\,\beta_{p{(v,y)}}
    = B(p{(v,x)})\,U(x)\cdot B(p{(v,y)})\,U(y)
    = B(p{(v,x)})^2\,B(p{(x,y)})\,U(x)\,U(y).
\]

We now turn to the integral over $l_{[v,x_1]}$ appearing in the term $I_v(x,y)$. Setting $h_p = |p{(x,y)}|$ and $r = |p(v,x)|$, we have \(|p(v,y)| = r + h_p\).

Summing all up, we obtain
\begin{align*}
    I_v(x,y) = 
    &\sum_{p{(v,x)} \in \mathcal{P}(v,x)} \sum_{p{(x,y)} \in \tilde{\mathcal{P}}(x,y)} \frac{\Pi(p{(v,x)})\,\Pi(p{(x,y)})\,U(y,x)}{B(p{(v,x)}) \, \sqrt{U(x)\,U(y)\,B(p{(x,y)})}}\\
    &\int_{|p{(x_1,x)}|}^{|p{(v,x)}|}
    g(r)\,g(r+h_p)\,dr.
\end{align*}

Rearranging, we isolate the dependence on $p{(x,y)}$ to factor out the weight $w_{p{(x,y)}}$ as defined in \eqref{eq:weightexplicit}, 

\begin{align*}
    I_v(x,y)\\
    &= \sum_{p{(x,y)} \in \tilde{\mathcal{P}}(x,y)}\frac{\Pi(p{(x,y)})\,U(y,x)}{\sqrt{U(x)\,U(y)\,B(p{(x,y)})}} \\ &
    \int_{0}^{\infty} 
    \sum_{p{(v,x)} \in \mathcal{P}(v,x)} \frac{\Pi(p{(v,x)})}{B(p{(v,x)})} \mathbb{I}_{\{|p{(x_1,x)}| \leq r < |p{(v,x)}|\}} g(r)\,g(r+h_p) \,dr\\
    &= \sum_{p{(x,y)} \in \tilde{\mathcal{P}}(x,y)} w_{p{(x,y)}} \\
    &\int_{0}^{\infty} \sum_{p{(v,x)} \in \mathcal{P}(x,y)} \frac{\Pi(p{(v,x)})}{B(p{(v,x)})} \mathbb{I}_{\{|p{(x_1,x)}| \leq r < |p{(v,x)}|\}} g(r)\,g(r+h_p) \,dr.
\end{align*}

Therefore, 
\begin{align*}
\sum_{v \in V} I_v(x,y)\\
&= \sum_{p{(x,y)} \in \tilde{\mathcal{P}}(x,y)}w_{p(x,y)}\\
    &\int_{0}^{\infty} 
    \sum_{v \in V}\sum_{p{(v,x)} \in \mathcal{P}(x,y)} \frac{\Pi(p{(v,x)})}{B(p{(v,x)})} \mathbb{I}_{\{|p{(x_1,x)}| \leq r < |p{(v,x)}|\}} g(r)\,g(r+h_p) \,dr\\
   &=\sum_{p{(x,y)} \in \tilde{\mathcal{P}}(x,y)} w_{p{(x,y)}}
   \int_0^{+\infty} S_x(r)\,g(r)\,g(r+h_p)\,dr,
\end{align*}

where, as in the proof of \Cref{prop:variancestationary},
\[
    S_x(r) = \sum_{v \in V}\sum_{p{(v,x)}}
    \frac{\Pi(p{(v,x)})}{B(p{(v,x)})}
    \,\mathbb{I}_{\{|p{(b_1,x)}| \leq r < |p{(v,x)}|\}}.
\]
However, within the proof of \Cref{prop:variancestationary}, we already proved that $S_x(r) \equiv 1$ for all $r \geq 0$, so the integral reduces to $C(h_p) = \int_0^{+\infty} g(r)\,g(r+h_p)\,dr$. 

The same argument applies symmetrically to $I_v(y,x)$, with the last sum running over $\tilde{\mathcal{P}}(y,x)$. 
Summing $\sum_{v\in V}I_v(x,y)$ to $\sum_{v \in V}I_v(y,x)$ completes the proof of part (i).

To prove (ii), note that if $\tilde{\mathcal{P}}(x,y) \bigcup \tilde{\mathcal{P}}(y,x) = \emptyset,$ then $x$ and $y$ are not connected on the network \(\mathcal{L}.\) Hence $Cov(Z_x,Z_y)=0$. 

\end{proof}

\section{Data Management and Algorithm}
\label{sec:dataandalgo}

\subsection{Data Preprocessing and Network Construction}
\label{subsec:preprocessing_construction}

As detailed in Section 2 of the main text, the analysis relies on sea surface temperature and velocity fields from the CMEMS and C3S databases. The spatial domain is restricted to the bounding box $38^\circ\text{N}$--$44^\circ\text{N}$ and $7^\circ\text{E}$--$12^\circ\text{E}$. To ensure physical consistency when computing the Euclidean distance matrix and paths' lengths for the covariance models, the original geographic coordinates (WGS84) are projected onto the ETRS89-extended / LAEA Europe coordinate reference system (EPSG:3035). A land-mask is applied by pruning any edges connected to vertices with missing oceanographic data, strictly confining the network to valid water regions.

The network topology and transition probabilities are constructed by decomposing the observed velocity field onto the regular grid. For a given vertex $a$ with a valid velocity vector $\boldsymbol{v}_a$, we identify the two valid neighbors, say $e$ and $f$, whose connecting unit vectors $\boldsymbol{d}_{[a,e]}$ and $\boldsymbol{d}_{[a,f]}$ most closely align with $\boldsymbol{v}_a$ in terms of angle, following the procedure outlined in Sectin 2.2 of the main text.

Let $v_a^n$ and $v_a^e$ denote the northward and eastward components of $\boldsymbol{v}_a$. Similarly, let $d_{[a,e]}^n$ and $d_{[a,e]}^e$ be the corresponding components of the unit vector $\boldsymbol{d}_{[a,e]}$. We define a weighted adjacency matrix $\tilde{M}$ containing the flow magnitudes along the directed edges. In fact, $\tilde{M}_{[a,e]}$ and $\tilde{M}_{[a,f]}$, are obtained by solving the exact linear system:
\begin{equation}
\begin{bmatrix}
    d_{[a,e]}^n & d_{[a,f]}^n \\
    d_{[a,e]}^e & d_{[a,f]}^e
\end{bmatrix}
\begin{bmatrix}
    \tilde{M}_{[a,e]} \\
    \tilde{M}_{[a,f]}
\end{bmatrix}
=
\begin{bmatrix}
    v_a^n \\
    v_a^e
\end{bmatrix}.
\end{equation}
This decomposition guarantees that $\tilde{M}_{[a,e]}\boldsymbol{d}_{[a,e]} + \tilde{M}_{[a,f]}\boldsymbol{d}_{[a,f]} = \boldsymbol{v}_a$. The operation is repeated for all nodes of the grid \(\mathcal{G}\). The magnitudes of the two components in $\tilde{M}$, scaled by their sum, populate the weighted adjacency matrix $M$. Indeed, $M$ collects the inputs to build the transition matrix $\pi$, as detailed in Section 2.3 of the main text.

Specific software execution steps, including raw variable extraction and data formatting pipelines, are provided in the repository's README.

\subsection{Computation of the Non-Return Probabilities}
\label{subsubsec:computeG}

The covariance derivations in the main text (Section 3) require the non-return probability $U(x,y)$, defined as the probability that the Markov chain, starting at node $x\in V$, never returns to the set $A = \{x,y\} \subset V$. 

Let $H^A = \inf \{ n \ge 0 : X_n \in A \}$ denote the first hitting time of the set $A$. By conditioning on the first step of the chain ($X_1 = x_1$), $U(x,y)$ is expressed as:
\begin{equation}
    U(x,y) = \sum_{x_1 \notin A} \pi_{[x,x_1]} \left( 1 - \mathbb{P}(H^A < \infty| X_0=x_1) \right).
\end{equation}

Thus, the problem reduces to computing the hitting probabilities $\mathbb{P}(H^A < \infty| X_0=x_1)$ for all neighbors \(x_1\) of \(x\) different from \(y\).

Directly simulating or enumerating paths to solve for hitting probabilities is computationally prohibitive. Instead, we leverage the matrix $\mathbf{G}$, whose generic entry $G_{[v_1,v_2]}$ is indexed by \(v_1,v_2 \in V\) and is equal to the expected number of visits to $v_2$ when  $X_0=v_1$:
\begin{equation}
    \mathbf{G} = \mathbb{E} \left[ \sum_{n=0}^{\infty} \mathbb{I}_{\{X_n=v_2\}} \right] = (\mathbf{I} - \pi_V)^{-1},
\end{equation}
where \(\pi_V\) is the sub matrix of the transition matrix \(\pi\) restricted to the vertices in \(V,\) i.e. excluding the absorbing state \(S\). Since all states in \(V\) are transient, the spectral radius of \(\pi_V\) is less than 1, and thus$(\mathbf{I} - \pi_V)$ is invertible.

Applying the Strong Markov Property to the hitting time $H^A$, the expected number of visits to a target $v \in A$ from any starting node $x_1$ can be decomposed by conditioning on the specific node $k \in A$ where the chain first hits the set: $$G_{[x_1,v]} = \sum_{k \in A} \mathbb{P}(X_{H^A} = k | X_0=x_1) \, G_{[k,v]}.$$ Indeed, since $A = \{x,y\}$, we can express this relationship in matrix form for the unknowns $\mathbb{P}(X_{H^A} = x|X_0 =x_1)$ and $\mathbb{P}(X_{H^A} = y|X_0=x_1)$:
\begin{equation}
    \begin{bmatrix}
        G_{[x_1,x]} & G_{[x_1,y]}
    \end{bmatrix} 
    = 
    \begin{bmatrix}
        \mathbb{P}(X_{H^A} = x|X_0=x_1) & \mathbb{P}(X_{H^A} = y|X_0=x_1)
    \end{bmatrix}
    \begin{bmatrix}
        G_{[x,x]} & G_{[x,y]}\\ 
        G_{[y,x]} & G_{[y,y]}
    \end{bmatrix}. \nonumber
\end{equation}

Since $\mathbb{P}(H^A<\infty|X_0=x_1) = \mathbb{P}(X_{H^A} = x|X_0=x_1) + \mathbb{P}(X_{H^A} = y|X_0=x_1)$, we obtain:
\begin{equation}
    \mathbb{P}(H^A < \infty|X_0=x_1) = 
    \begin{bmatrix}
        G_{[x_1,x]} & G_{[x_1,y]}
    \end{bmatrix}
    \begin{bmatrix}
        G_{[x,x]} & G_{[x,y]}\\ 
        G_{[y,x]} & G_{[y,y]}
    \end{bmatrix}^{-1}
    \begin{bmatrix}
        1 \\
        1
    \end{bmatrix}.
\end{equation}

To evaluate $U(x,y)$ efficiently across the domain, we implement the following procedure using sparse linear algebra:
\begin{enumerate}
    \item Extract the $2 \times 2$ submatrix $\mathbf{G}_{AA}=\begin{bmatrix}
        G_{[x,x]} & G_{[x,y]}\\ 
        G_{[y,x]} & G_{[y,y]}
    \end{bmatrix} $ and compute the coefficient vector $$\boldsymbol{b} = \mathbf{G}_{AA}^{-1} \mathbf{1}$$
    \item For any downstream neighbor $x_1 \notin A$ connected to $x$, compute $h_{x_1}^A = [G_{[x_1, x]} \;G_{[x_1, y]}] \boldsymbol{b}$.
    \item Evaluate the final probability using the transition probabilities from $x$: $$U(x,y) = \sum_{x_1 \notin A} \pi_{[x,x_1]} (1 - h_{x_1}^A).$$
\end{enumerate}
Note that, once $\mathbf{G}$ has been evaluated, the previous procedure is totally parallelizable, making the algorithm efficient and scalable.

We also note that, for all \(x \in V,\) the probability of not returning to $U(x)$ is simply the reciprocal of \(G_{[x,x]}.\)

\subsection{Covariance Matrix Evaluation via the Exponential Kernel}
\label{subsec:exponential}

A significant computational advantage arises when adopting the Exponential covariance function, $C(h) = \theta_s \exp(-h/\theta_r)$. The exponential function possesses the semigroup property:
\begin{equation}
    \exp\left(-\frac{h_1 + h_2}{\theta_r}\right) = \exp\left(-\frac{h_1}{\theta_r}\right) \cdot \exp\left(-\frac{h_2}{\theta_r}\right). \nonumber
\end{equation}
This property implies that the covariance contribution of any path factors into the product of the contributions of its individual edges. Consequently, the summation over all possible paths (including cyclic ones) can be computed exactly via matrix inversion, utilizing the Von Neumann series expansion. This completely bypasses the need for explicit path enumeration or heuristic pruning.

Let $\mathbf{P}$ be the matrix encoding the transition probabilities between vertices in \(V,\) weighted by the normalization coefficients; that is the generic element of $\mathbf{P}$  is $$p{[v_1,v_2]} = \pi_{[v_1,v_2]} / \sqrt{\sum_{k \in V} \pi_{[k,v_2]}},$$ for \(v_1,v_2 \in V\). Let $\mathbf{D}$ be the matrix of Euclidean distances between connected vertices. We construct a decayed transition matrix $\mathbf{R}$ via the Hadamard (element-wise) product:
\begin{equation}
    R_{[v_1,v_2]} = p{[v_1,v_2]} \cdot \exp\left(-\frac{D_{[v_1,v_2]}}{\theta_r}\right). \nonumber
\end{equation}
The entry $R_{[v_1,v_2]}$ represents the transition weight from $v_1$ to $v_2$ discounted by the spatial correlation decay associated with that single step.

The total covariance accumulated along all paths composed of $k$ steps is given by $\mathbf{R}^k$. Summing over all possible path lengths $k \in \{0, 1, \dots, \infty\}$ yields the geometric series:
\begin{equation}
    \sum_{k=0}^{\infty} \mathbf{R}^k = (\mathbf{I} - \mathbf{R})^{-1}.
\end{equation}
Because all vertices \(v \in V\) are transient and the spatial decay strictly bounds the entries of $\mathbf{R}$ below the corresponding ones of the transition matrix \(\pi\), the spectral radius of $\mathbf{R}$ is strictly less than 1, guaranteeing the convergence of the series.

The raw inverse matrix accumulates the effect of self-loops (recirculation starting and ending at the origin node). To derive the correct covariance constrained to acyclic paths $\tilde{\mathcal{P}}(x,y) \bigcup \tilde{\mathcal{P}}(x,y)$ as defined in the main text, the exact final covariance matrix is computed through the following algebraic operations:

\begin{enumerate}
    \item \textbf{Neumann Inversion:} Compute the raw accumulation matrix $\mathbf{S} = (\mathbf{I} - \mathbf{R})^{-1}$. The diagonal entry $S_{[v,v]}$ represents the infinite sum of all cyclic paths starting and ending at $v \in V$.
    \item \textbf{Source Cycle Elimination:} Normalize the matrix to remove the inflation caused by self-loops at the origin. We divide each row by its corresponding diagonal entry to yield the normalized matrix $\mathbf{S}^*$, where $S^*_{[v_1,v_2]} = S_{[v_1,v_2]} / S_{[v_1,v_1]},$ for \(v_1,v_2 \in V.\)
    \item \textbf{Non-return probabilities weighting:} Apply the matrix of pairwise non-return probabilities $\mathbf{U}$ and scale by the marginal non-return probabilities. For \(x,y \in V,\) let $$U_{[x,x]} = U(x),\;U_{[x,y]} = U(y,x)$$ and define \(\mathbf{U}\) to be the matrix whose generic entry, indexed by the vertices in \(V,\) is \(U_{[x,y]}.\) Then set
    \begin{equation}
        \tilde{\mathbf{\Sigma}} = \mathbf{S}^* \mathbf{U}. \nonumber
    \end{equation}
    \item \textbf{Symmetrization and Scaling:} The final spatial covariance matrix is symmetrized and scaled by the marginal variance $\theta_s$, with the diagonal explicitly reset to $\theta_s$ to ensure exact variance matching:
    \begin{equation}
        \mathbf{\Sigma}_{final} = \theta_s \cdot (\tilde{\mathbf{\Sigma}} + \tilde{\mathbf{\Sigma}}^T), \quad \text{diag}(\mathbf{\Sigma}_{final}) = \theta_s. \nonumber
    \end{equation}
\end{enumerate}
The matrix \( \mathbf{\Sigma}_{final}\) collects the covariances \(Cov(Z_x,Z_y)\) of the process \(\{Z_x, x \in V\},\) when the assumptions of Propostion 4 are satisfied and the covariance function is Exponential.

\section{Simulation study}
\label{sec:simulation}

\subsection{Simulation setting}

The objective of this simulation study is to evaluate the differences between the proposed framework and the conventional Euclidean approach. Specifically, we examine the ability of the new framework to reconstruct the spatial domain after generating synthetic data from its covariance structure.

We use a real velocity field as the basis for our simulations, specifically the data collected in 2022. Once the velocity field and the spatial points at which water temperature was observed were obtained, we constructed a network (using the procedure outlined in the main text) and performed all simulations on this network. \Cref{img:lin_net_sim} illustrates the network used for the simulation study. It is coarser than the network employed for the case study illustrated in the main text. This choice stems from the computational complexity of running simulations on a highly detailed domain.

\begin{figure}[t]
    \centering
    \includegraphics[width=0.5\linewidth]{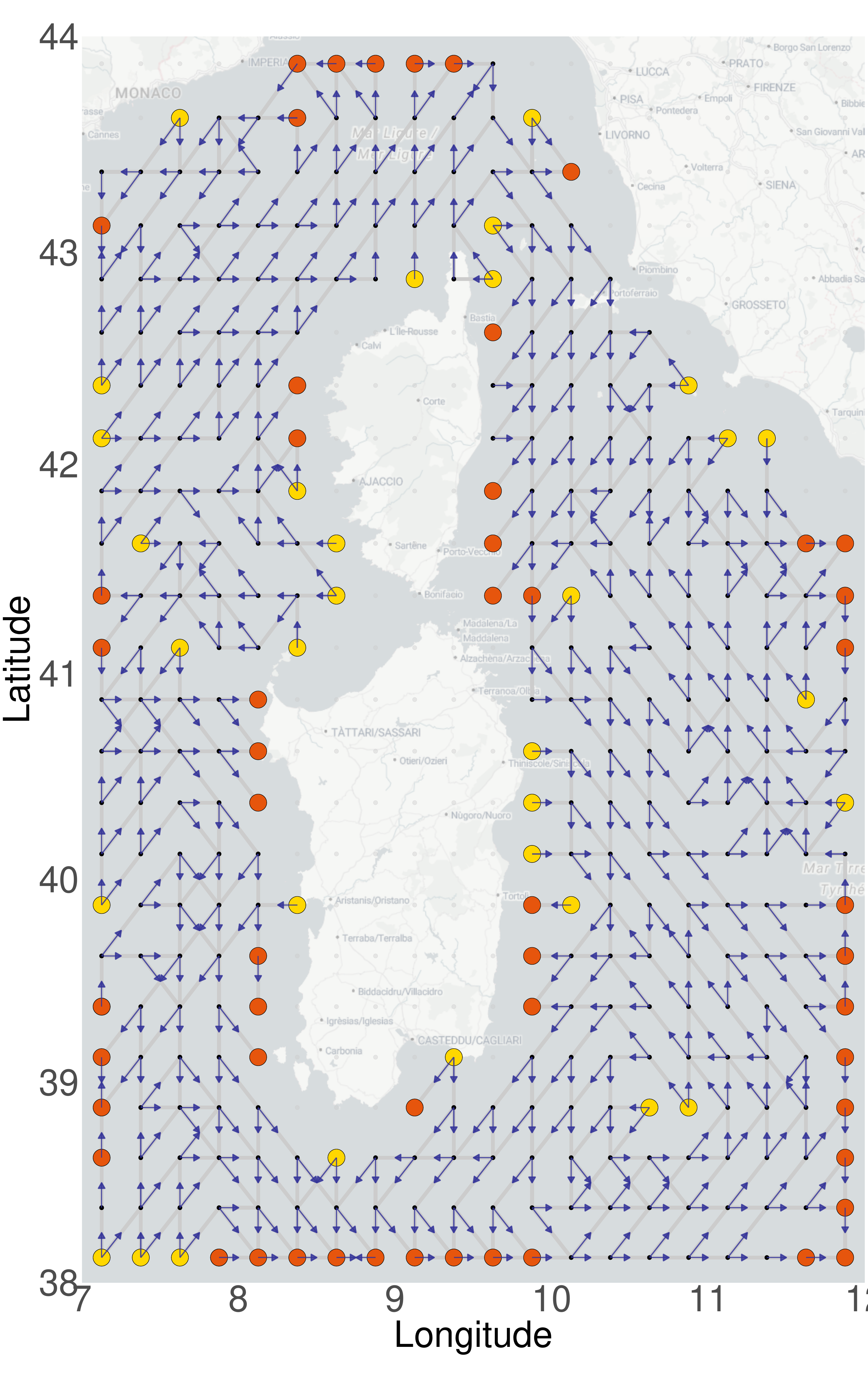}
    \caption{Linear network used in the simulation.}
    \label{img:lin_net_sim}
\end{figure}

For the simulation, $M = 100$ replicates of the random field $\boldsymbol{Y} \sim \mathcal{N}(\boldsymbol{0}, \Sigma)$ were generated, with $\Sigma$ following the covariance structure defined by the proposed framework. The covariance function is set to be Exponential: the sill parameter $\theta_s$ was fixed equal to 1 throughout the study to facilitate interpretability in terms of scale. For the range parameter $\theta_r$, we tested the newly defined process under five different range values (50, 87, 125, 162, 200 km in network distance) to assess how this parameter affects the behavior of the process.

Once the synthetic data were simulated, each iteration involved splitting the field into training and test sets, with the test set comprising one-fifth of the observations. This partitioning was randomized across all iterations to ensure robustness.

To quantify the predictive gain yielded by the network-based topology, we benchmarked the proposed method against a standard geostatistical alternative. For each replicate, we fitted a stationary, isotropic Gaussian process based on standard Euclidean distances. This model represents the baseline geostatistical approach. Consequently, it explicitly ignores the complex non-convexity of the domain (e.g., correlations crossing landmasses) and treats the space as homogeneous, disregarding the directional transport induced by the velocity field. 

\subsection{Covariance Parameters Estimation}

We first assess the estimators' ability to recover the parameters governing the random field. The estimation procedure follows the two-step algorithm detailed in Section 4 of the main text.

Figure \ref{img:sills} displays the kernel density estimates of the sill parameter ($\hat{\theta}_s$) obtained over the $M = 100$ replicates. Both the proposed physics-informed framework and the classical Euclidean approach yield unbiased distributions centered around the true value ($\theta_s = 1$). This result confirms that the total process variability is correctly identified by both methods, as it is primarily driven by the marginal variance of the data rather than the spatial dependence structure.

\begin{figure}[htbp]
    \centering
    \includegraphics[width=0.5\linewidth]{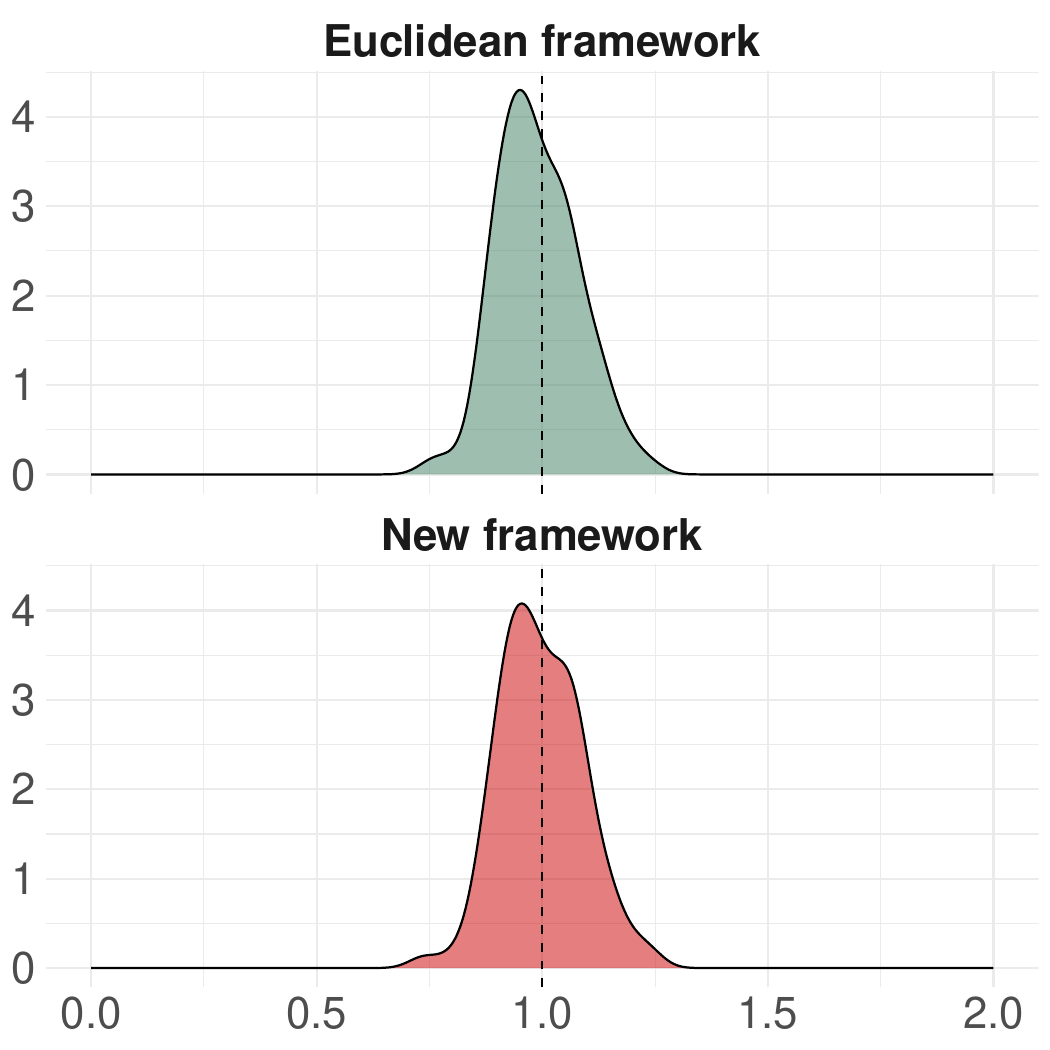}
    \caption{Empirical density of sill parameter estimates ($\hat{\theta}_s$) over 100 replicates. The vertical line indicates the true value.}
    \label{img:sills}
\end{figure}

A sharp contrast emerges when examining the spatial dependence structure. Figure \ref{img:variograms_eucl} presents the empirical covariance functions estimated under the Euclidean framework. Crucially, the Euclidean model fails to detect any significant spatial correlation, collapsing to a pure nugget effect at a significantly small range in the vast majority of replicates. This failure stems from a topological mismatch: pairs of points that are spatially proximal in Euclidean space (e.g., separated by a landmass) may be distant in the hydrographic network. The Euclidean metric "short-circuits" these connections, interpreting the high dissimilarity between these points as noise at short lags, thereby masking the true underlying signal.

\begin{figure}
    \centering
    \begin{minipage}[b]{0.45\linewidth}
        \centering
        \includegraphics[width=\linewidth]{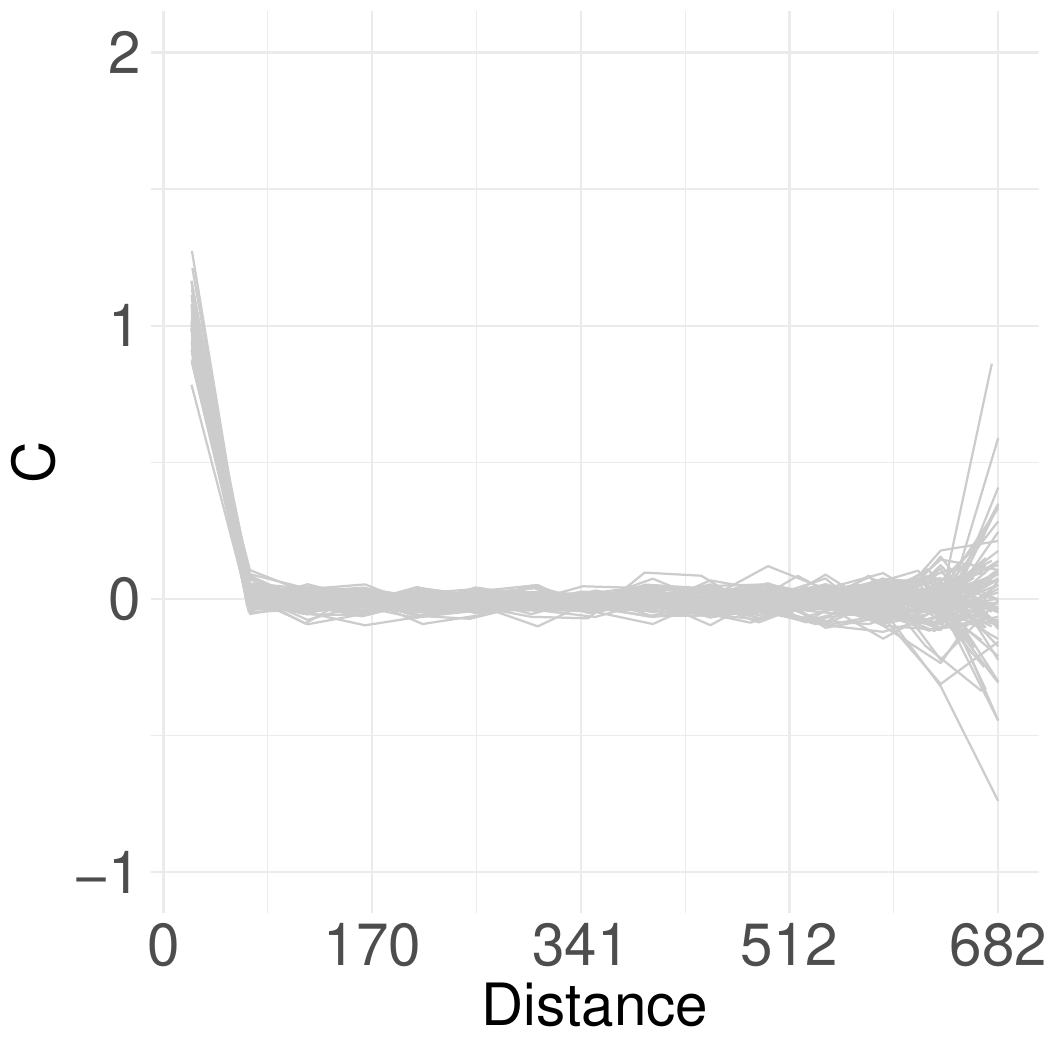}
        \vspace{0.2cm}
        \\ (a) True Range = 50 km
    \end{minipage}
    \hfill
    \begin{minipage}[b]{0.45\linewidth}
        \centering
        \includegraphics[width=\linewidth]{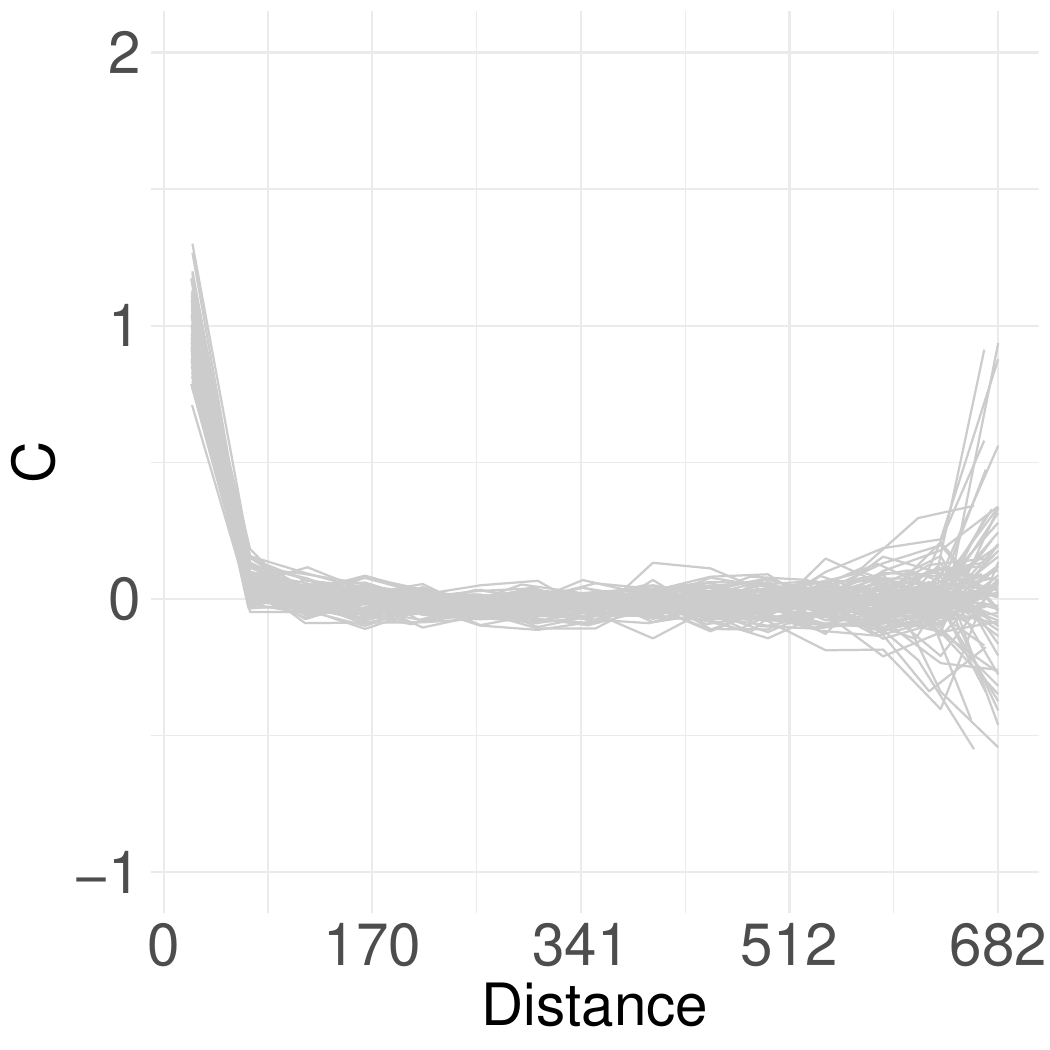}
        \vspace{0.2cm}
        \\ (b) True Range = 162 km
    \end{minipage}
    
    \caption{Empirical covariance functions estimates (gray lines) under the Euclidean framework. The lack of structure indicates a failure to capture the process dependence.}
    \label{img:variograms_eucl}
\end{figure}

Conversely, the proposed framework (Figure \ref{img:covariances}) successfully reconstructs the decaying profile of the covariance function, confirming its ability to capture the complex connectivity encoded in the velocity field. However, we observe a systematic negative bias in the estimation of the range parameter $\hat{\theta}_r$. This underestimation is a known phenomenon in network geostatistics \citep{torgegram}, often attributed to the confounding between network topology and metric distance. In our framework, this bias is further exacerbated by the regularization term $\lambda$ in the penalized least squares objective. Despite this bias in the magnitude of $\theta_r$, the framework correctly identifies the \textit{existence} and \textit{shape} of the spatial decay, which—as shown in the following section—is sufficient to outperform the Euclidean benchmark in predictive tasks.

\begin{figure}
    \centering
    \begin{minipage}[b]{0.45\linewidth}
        \centering
        \includegraphics[width=\linewidth]{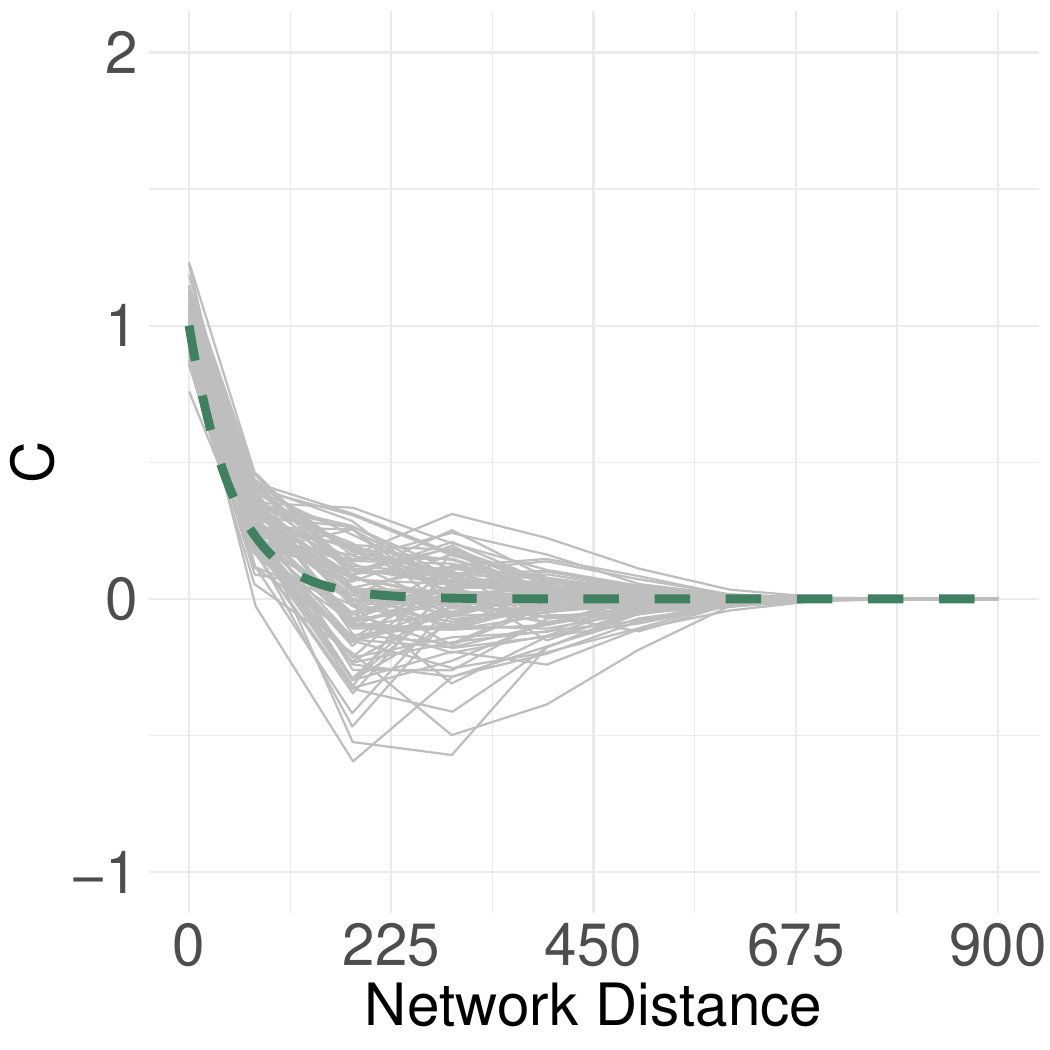}
        \vspace{0.2cm}
        \\ (a) True Range = 50 km
    \end{minipage}
    \hfill
    \begin{minipage}[b]{0.45\linewidth}
        \centering
        \includegraphics[width=\linewidth]{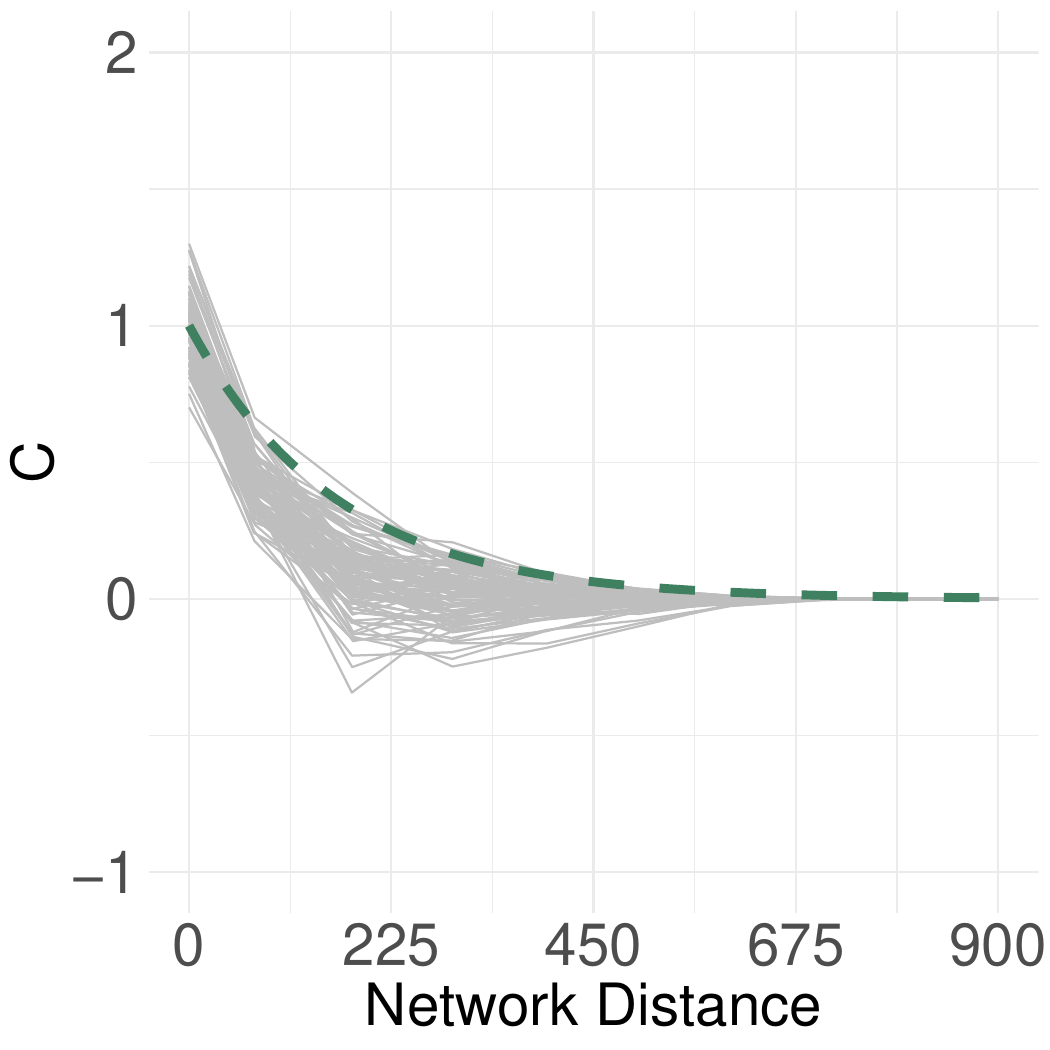}
        \vspace{0.2cm}
        \\ (b) True Range = 162 km
    \end{minipage}
    
    \caption{Empirical covariance function estimates under the proposed framework (gray lines) compared to the true generating function (green dashed line).}
    \label{img:covariances}
\end{figure}

To quantify the visual findings, we evaluated the Mean Squared Error (MSE) associated with the recovery of the exponential covariance structure. For the proposed framework, the MSE was computed as the average squared difference between the estimated covariance function $\hat{C}(h)$ and the true generating function $C(h)$.

Defining a comparable metric for the Euclidean framework is non-trivial, as the Euclidean distance $h_{eucl}$ does not translate directly into network distance. However, since the Euclidean model estimates a pure nugget effect, we compare the true covariance function against the following surrogate:
\begin{align*}
    \hat{C}_{eucl}(h) =
    \begin{cases}
        \hat{\theta}_{s,eucl} & \text{if } h = 0,\\
        0 & \text{if } h > 0.
    \end{cases}
\end{align*}

This function mimics the behavior of the Euclidean framework, under which no spatial correlation is detected. We then compute the MSE by comparing the true generating function $C(\cdot)$ against the two estimated covariance functions. \Cref{img:mse_estimation} reports the distribution of the MSE values across the replicates.

\begin{figure}
    \centering
    \includegraphics[width=0.6\linewidth]{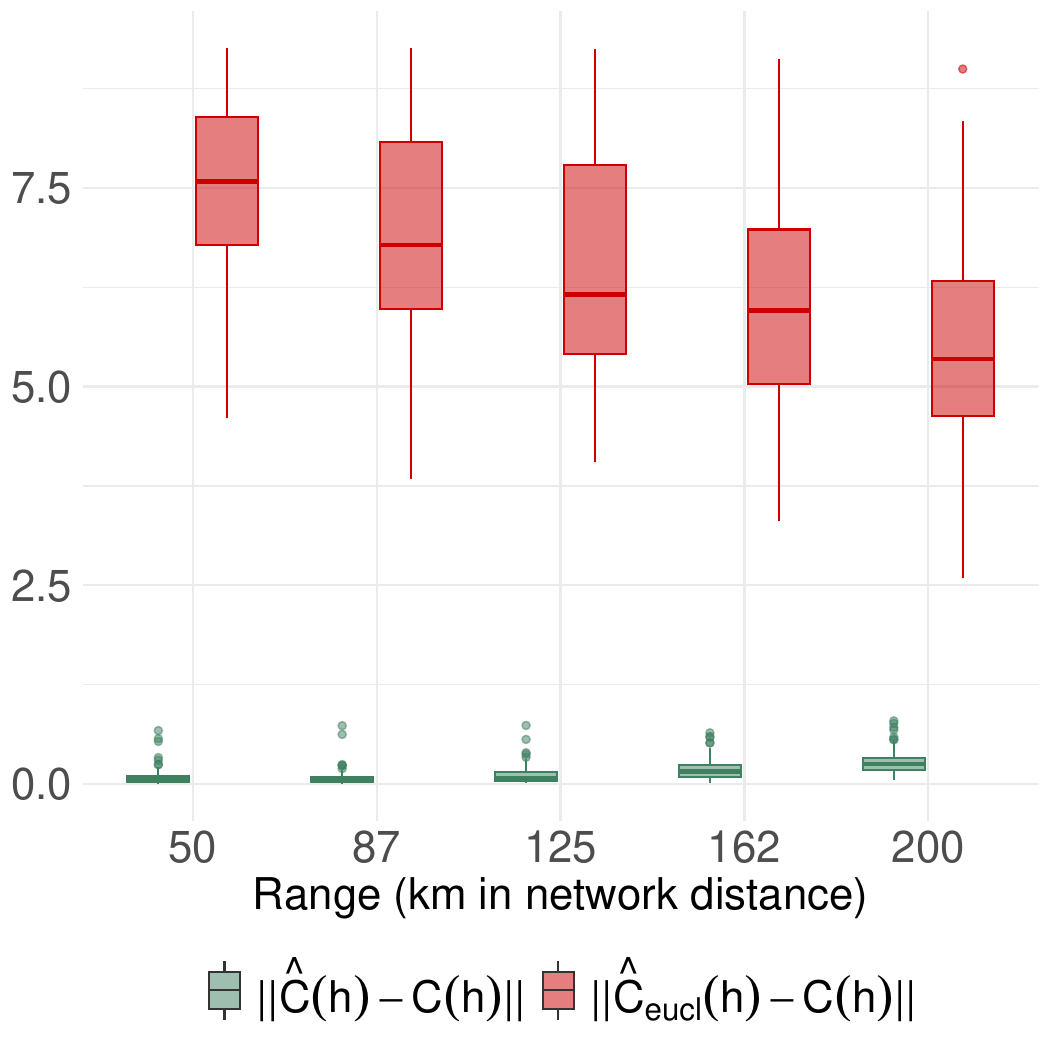}
    \caption{Distribution of Mean Squared Error (MSE) in covariance function estimation. The proposed framework, left/green, is compared with the euclidean benchmark, right/red.}
    \label{img:mse_estimation}
\end{figure}

To further dissect the source of the range underestimation observed in Figure \ref{img:covariances}, we performed a diagnostic experiment by fixing the sill parameter to its true value ($\hat{\theta}_s \equiv \theta_s$) during the optimization. This allows us to isolate the estimation of the range $\theta_r$ from potential identifiability issues between the sill and the range. \Cref{img:covariances_correctsill} presents the ensemble of estimated covariance functions under this constrained setting. Even with the sill correctly specified, a negative bias in the covariance magnitude persists at short-to-medium lags. This indicates that the underestimation is not merely a byproduct of sill-range correlation, but is largely attributable to the weighting in the covariance structure, as noted in \cite{torgegram, Barbi_Menafoglio_Secchi_2023}.

Nevertheless, comparing these curves with the Euclidean "flat line," it is evident that the proposed methodology—despite the regularization bias—effectively captures the functional form and the decay scale of the process, which is the primary requirement for accurate spatial interpolation.

\begin{figure}
    \centering
    \begin{minipage}[b]{0.45\linewidth}
        \centering
        \includegraphics[width=\linewidth]{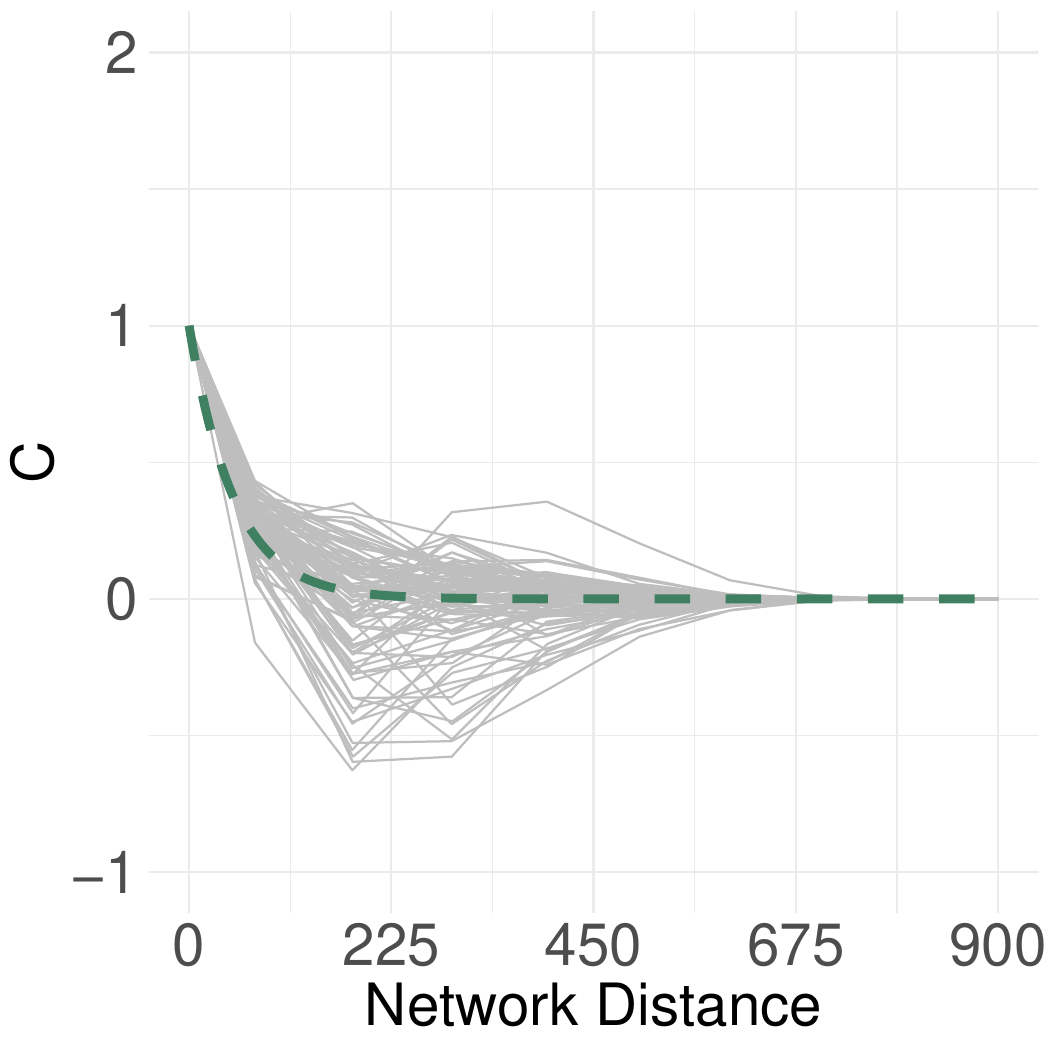}
        \vspace{0.2cm}
        \\ (a) True Range = 50 km
    \end{minipage}
    \hfill
    \begin{minipage}[b]{0.45\linewidth}
        \centering
        \includegraphics[width=\linewidth]{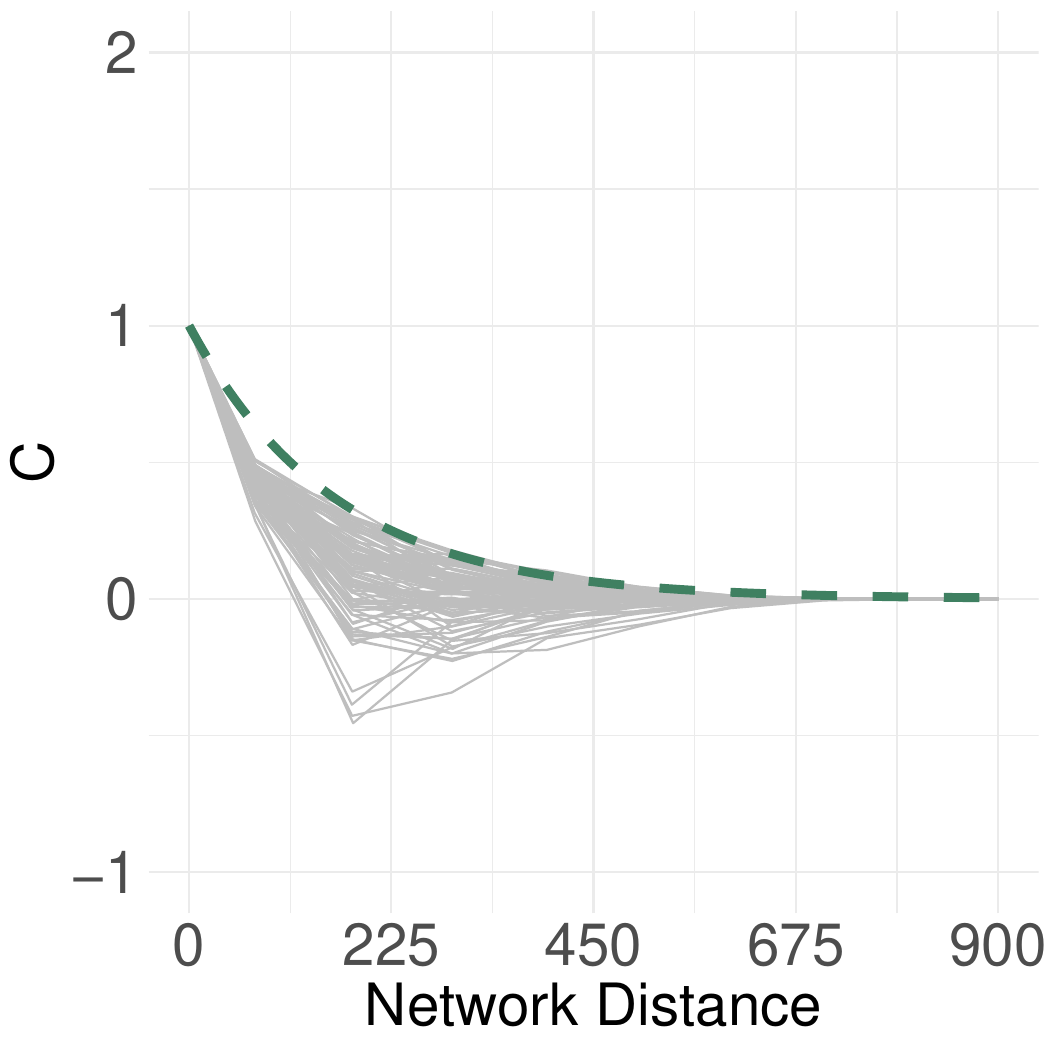}
        \vspace{0.2cm}
        \\ (b) True Range = 162 km
    \end{minipage}
    
    \caption{Diagnostic estimation with fixed sill ($\hat{\theta}_s = \theta_s$). The gray lines represent the empirical estimates, while the green dashed line is the truth. The persistence of the downward bias highlights the effect of the regularization penalty.}
    \label{img:covariances_correctsill}
\end{figure}

\subsection{Global Reconstruction of the Covariance Structure}

Beyond the estimation of individual parameters, valid spatial prediction (Kriging) requires that the entire covariance matrix $\Sigma$ is accurately reconstructed. To assess the global goodness-of-fit, we compared the covariance matrices implied by the estimated parameters against the true data-generating matrix. For the proposed framework, the matrix $\hat{\Sigma}$ was constructed using Equation (9) of the main text with the estimated $\hat{\theta}_s$ and $\hat{\theta}_r$. For the Euclidean benchmark, $\hat{\Sigma}_{eucl}$ was constructed using the standard isotropic exponential kernel.

We evaluated the reconstruction accuracy using two complementary metrics:
\begin{enumerate}
    \item \textbf{Frobenius Norm:} Measures the element-wise distance between matrices: $\|\Sigma - \hat{\Sigma}\|_F = \sqrt{\sum_{i,j} |\Sigma_{\left[i,j\right]} - \hat{\Sigma}_{\left[i,j\right]}|^2}$.
    \item \textbf{Kullback–Leibler (KL) Divergence:} Measures the information loss when approximating the true multivariate Gaussian distribution $\mathcal{N}(\boldsymbol{0}, \Sigma)$ with the estimated one $\mathcal{N}(\boldsymbol{0}, \hat{\Sigma})$.
\end{enumerate}

Figure \ref{img:Norms} displays the box-plots of these errors across the simulation replicates. The proposed framework yields reconstruction errors that are consistently lower and less dispersed than those of the Euclidean model.
Specifically, the Euclidean approach exhibits high KL divergence, indicating a significant distortion of the probabilistic structure of the field—a direct consequence of its inability to model the physical barriers and flow-directed correlations. Conversely, our framework maintains a low divergence across all range scenarios, confirming that the estimated covariance matrix preserves the essential information required for reliable inference.

\begin{figure}
    \centering
    \begin{minipage}[b]{0.45\linewidth}
        \centering
        \includegraphics[width=\linewidth]{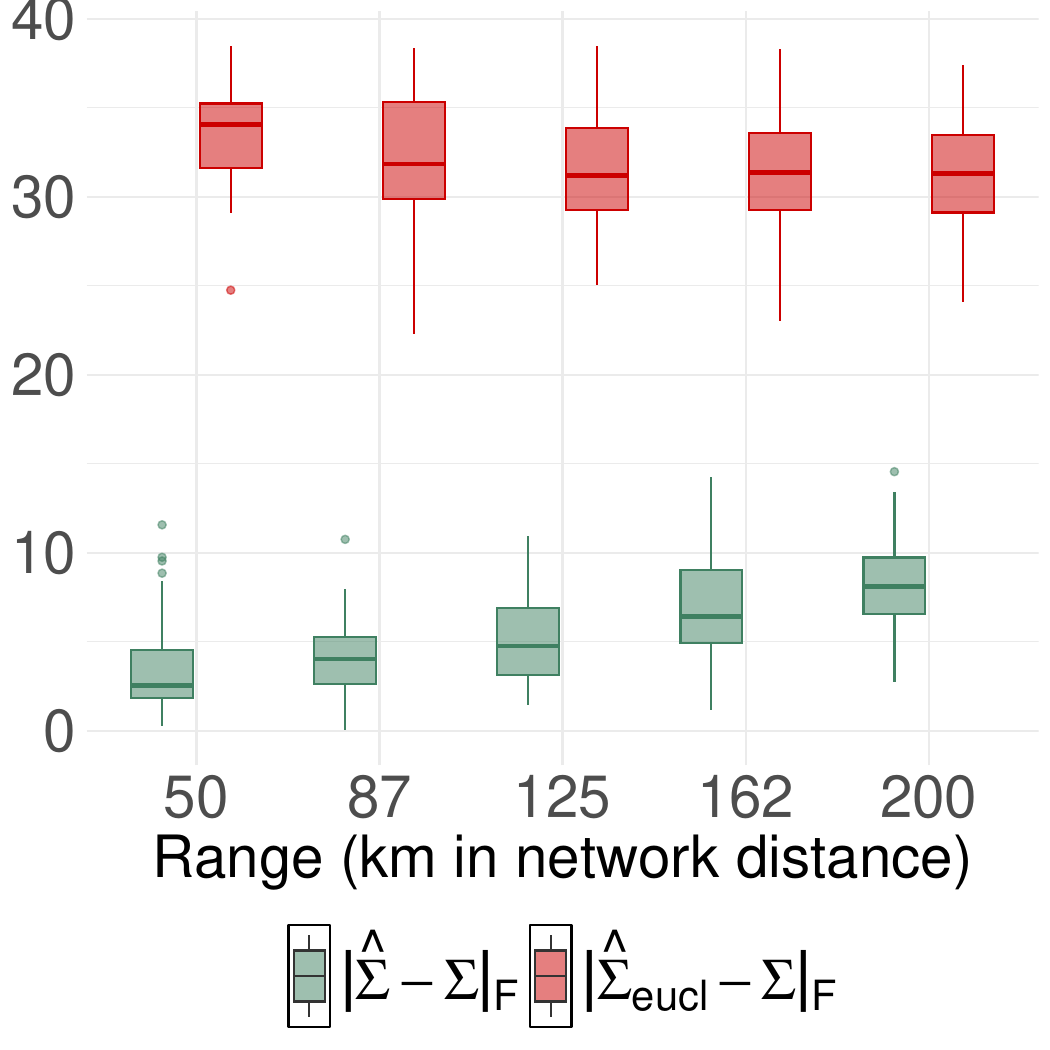}
        \vspace{0.2cm}
        \\ (a) Frobenius Norm
        \label{img:frobenius}
    \end{minipage}
    \hfill
    \begin{minipage}[b]{0.45\linewidth}
        \centering
        \includegraphics[width=\linewidth]{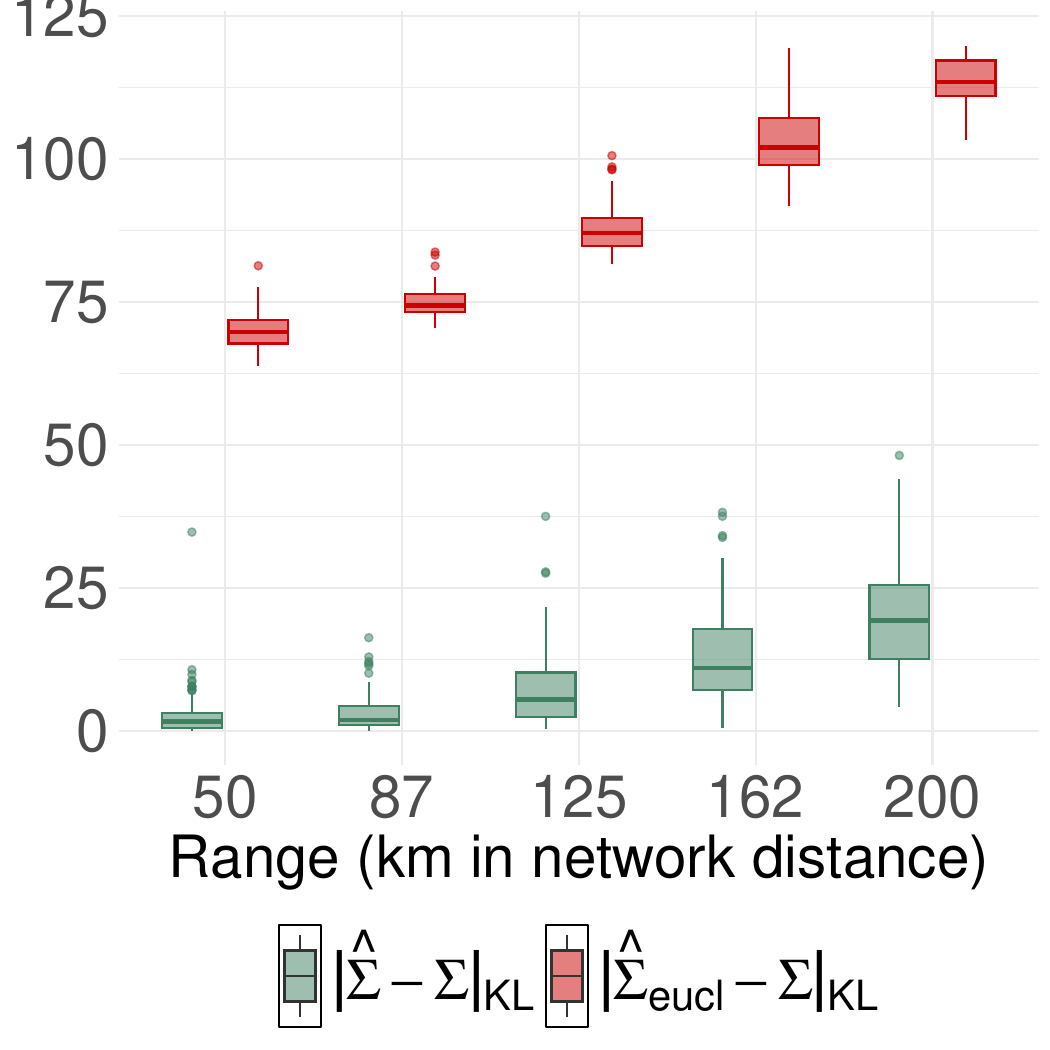}
        \vspace{0.2cm}
        \\ (b) Kullback–Leibler Divergence
        \label{img:kl_div}
    \end{minipage}
    
    \caption{Distribution of reconstruction errors between the true covariance matrix and the estimates. Left: Frobenius Norm (element-wise accuracy). Right: Kullback–Leibler Divergence (distributional accuracy).}
    \label{img:Norms}
\end{figure}

\subsection{Out-of-Sample Predictive Performance}

Finally, we assessed the ultimate goal of the geostatistical analysis: the ability to reconstruct unobserved locations via spatial interpolation.
For each simulation replicate, we performed Simple Kriging on the hold-out test set (20\% of sites), using the parameters and covariance structures estimated in the training phase. We quantified the accuracy using the Mean Squared Error (MSE).

Figure \ref{img:MSE} displays the distribution of MSE values across the $M = 100$ replicates. The results demonstrate the significance of the difference between the two approaches.
The Euclidean model consistently yields an MSE approximately equal to the process variance (fixed at $\theta_s = 1$). This confirms that, having estimated a pure nugget effect (as shown in Figure \ref{img:variograms_eucl}), the Euclidean Kriging collapses to the trivial predictor (i.e., predicting the global mean everywhere), failing to exploit any spatial information from neighboring data points.

In contrast, the proposed physics-informed framework achieves substantially lower prediction errors across all scenarios. By correctly modeling the connectivity induced by the currents, the network-based Kriging effectively leverages information from upstream and downstream neighbors, significantly reducing the predictive uncertainty.
Furthermore, the relative advantage of the proposed method scales with the spatial range: as the true correlation length increases, the potential for information transfer across the domain grows. Our framework successfully captures this long-range dependence, leading to a widening performance gap relative to the Euclidean baseline, which remains blind to the spatial structure regardless of the theoretical range.

\begin{figure}[t]
    \centering
    \includegraphics[width=0.6\linewidth]{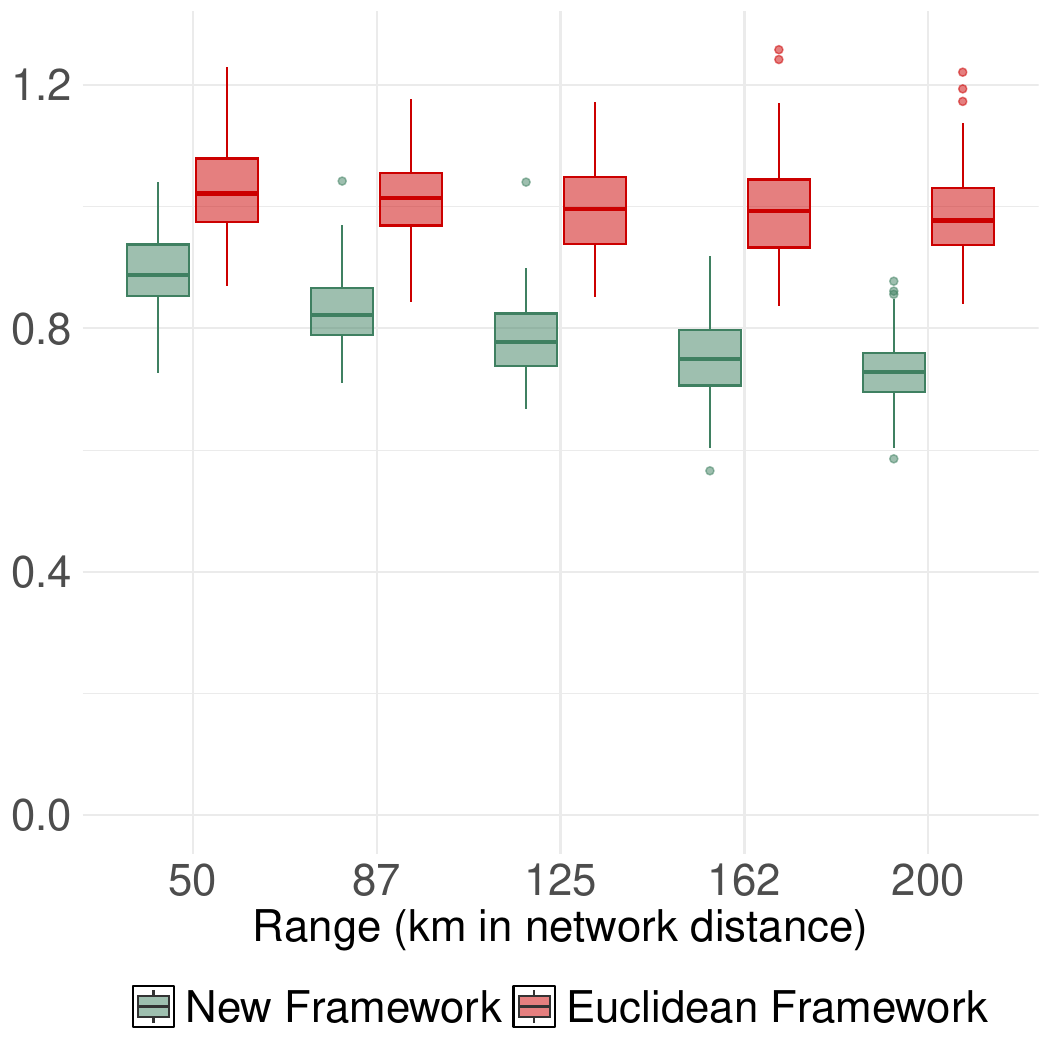}
    \caption{Distribution of Mean Squared Prediction Error (MSE) on the hold-out test set. The Euclidean model (red/right boxplots) stays close to the process variance (1.0), indicating no predictive gain over the mean. The proposed framework (blue/left boxplots) significantly reduces the error.}
    \label{img:MSE}
\end{figure}

\section{A particular case: convolution process over stream networks}
\label{sec:processinstream}

This section demonstrates analytically that the proposed physics-informed framework encompasses the classical stream network models as a particular case. Specifically, we show that when the domain is restricted to a simple dendritic river network (a directed binary tree), our formulation is consistent with the \textit{tail-up} model introduced by \cite{VerHoef2006} and further developed by \cite{rivers2010}.

\subsection{The Tail-Up Stream Topology}
\label{subsec:streamprocess}

A stream network is topologically defined as a directed acyclic graph where flow direction is unambiguous. Let $(V,\cal L)$ be this network. Water flows downstream from multiple sources towards a single outlet without bifurcating. The fundamental distinction between this dendritic structure and the marine environment (defined in Section 3 of the main text) is the property of path uniqueness: for any pair of hydrologically connected points $x$ and $y$, there exists exactly one directed path $\mathcal{P}(x,y) = \{p(x,y)\}$ connecting them. In the formulation of \cite{VerHoef2006}, the spatial process relies on a weighting scheme to ensure stationarity. Let $\nu_{[a,b]} \in (0,1]$ be the weighting corresponding to the edge $l_{[a,b]}$ of the network. The stationarity condition requires that the weights of all upstream segments merging at $b$ sum to unity:

\begin{equation}
\sum_{a \in V} \nu_{[a,b]} = 1.
\label{eq:stationaritystream}
\end{equation}

The resulting "tail-up" spatial covariance between two connected points is given by the product of the square roots of these stream weights along the unique path, multiplied by a valid 1D covariance function $C_0(h)$: for two points $x,y$, connected by the path $p(x,y)$,

\[
Cov_{stream}(x, y) = \left( \prod_{l \in p(x,y)} \sqrt{\nu_l} \right) C_0(|p(x,y)|).
\]

\subsection{Mapping the Proposed Framework to the Stream Case}

In our random walk interpretation, dynamic is governed by the transition probability matrix $\pi$. In a stream network where flow does not split (no bifurcations downstream), each vertex $a$ has at most one outgoing edge to a vertex $b$. Consequently, the transition probability becomes deterministic:
\begin{equation}
    \pi_{\left[a,b\right]} = 
    \begin{cases}
        1 & \text{if } a \to b \text{ is the unique downstream edge,} \\
        0 & \text{otherwise.}
    \end{cases}
    \label{eq:piinstream}
\end{equation}
This implies that the random variable $T$ introduced in the main text to handle directional uncertainty at junctions becomes degenerate (i.e., the path is deterministic). Notably, we have that $U(x) = 1$ for all $x \in V$. Hence we omit the quantity in the covariance formulation.

Let $\nu_{[a,b]} = 1/(\sum_{k \in V} \pi_{[k,b]})$, and note that the stationarity condition is satisfied. The normalization constants $\beta$'s defined in Equation (5) of the main text can be thus expressed in terms of $\nu$:

\[
\beta_{p(v,x)} = \prod_{l_{[a,b]} \in p(v,x)} \sum_k \pi_{[k,b]} = \prod_{l_{[a,b]} \in p(v,x)} \frac{1}{\nu_{[a,b]}}.
\]

Given the specification of the transition probability and of the normalization constants, the weight in the covariance model for a specific path $p(x,y)$ is
\begin{equation}
    w_{p(x,y)} = \frac{\mathbb{P}(T(v,x) = p(v,x))}{\beta_{p(v,x)}} = \prod_{l \in p(x,y)} \sqrt{\nu_l}.
\end{equation}

Since $\mathcal{P}(x,y)$ contains at most one path $p(x,y)$, the sum for the covariance model reduces to a single term. Thus, for connected points $x$ and $y$, the covariance becomes:
\begin{equation}
    Cov(Z_x,Z_y) = \left(\prod_{l \in p(x,y)} \sqrt{\nu_l}\right) C(|p(x,y)|),
    \label{eq:CovarianceVerHoefasmine}
\end{equation}
which is identical to the tail-up covariance structure of \cite{VerHoef2006}.

This derivation confirms that the proposed framework is a consistent generalization of the stream network methodology. The key advancement lies in the ability to handle scenarios where $\pi_{\left[a,b\right]} < 1$ (flow splitting) and $|\mathcal{P}(x,y)| > 1$ (multiple paths/cycles), features that are essential for marine domains but absent in river systems.

\subsection{The Tail-Down Stream Topology}

The stream models of \citet{VerHoef2006, rivers2010} provide a double interpretation and a consequent different model when looking at the flow direction in the opposite sense, yielding the so called "Tail-Down" model. In this section, we demonstrate that our proposed framework yields a consistent covariance structure regardless of the chosen flow direction.

While the physical topology of the domain remains unchanged, the representation of the network dynamics is reversed. The property of path uniqueness between connected locations still holds. However, because flow can diverge when moving in the upstream direction (i.e., network splits), the transition probability matrix now contains entries strictly less than 1.

Let $\pi$ be the transition probability matrix for the Markov chain with state space $V$, and let the entries of $\pi$ be defined as $\pi_{[a,b]} = \nu_{[b,a]}$. Note that this is a probability matrix, given the condition of stationarity enforced in \Cref{eq:stationaritystream}.

Consider an upstream vertex of a given stream segment. In a strictly branching stream network, this vertex can be reached from exactly one downstream node. Hence, $\sum_k \pi_{[k,b]} = \pi_{[a,b]}$ where $a \in V$ is the only downstream vertex of $b$ connected with it. The full specification of the $\beta$s is

\[
\beta_{p(x,y)} = \prod_{l_{[a,b]} \in p(x,y)} \sum_k \pi_{[k,b]} = \prod_{l_{[a,b]} \in p(x,y)} \pi_{[a,b]}
\]

Summing all up, the weight in the covariance model is

\[
w_{p(x,y)} = \prod_{l_{[a,b]} \in p(v,x)} \frac{\pi_{[a,b]}}{\sqrt{\pi_{[a,b]}}} = \prod_{l_{[a,b]} \in p(v,x)} \sqrt{\pi_{[a,b]}}
\]

Substituting $\pi_{[a,b]} = \nu_{[b,a]} = \nu_l$, the full covariance mode is expressed as

\begin{equation}
    Cov(Z_x, Z_y) = \left(\prod_{l \in p(v,x)} \sqrt{\nu_l} \right) C(|p(x,y)|).
\end{equation}

This confirms that the covariance model remains structurally identical under both interpretations of flow direction, unifying the tail-up and tail-down paradigms within our generalized spatial framework.

%% file: bibliography.bib
@misc{Cressie_1993, title={Statistics for Spatial Data}, journal = {Wiley Series in Probability and Statistics}, DOI={10.1002/9781119115151}, author={Cressie, Noel}, year={1993}, publisher={Wiley}, location={New York}, isbn={9780471002550}}

@article{Curriero_2006, title={On the Use of Non-Euclidean Distance Measures in Geostatistics}, volume={38}, ISSN={1573-8868}, DOI={10.1007/s11004-006-9055-7}, number={8}, journal={Mathematical Geology}, author={Curriero, Frank C.}, year={2006}, pages={907–926}, language={en} }

@article{hydroProximityMeasure, title={Extraction of hydrological proximity measures from DEMs using parallel processing}, volume={26}, ISSN={1364-8152}, DOI={10.1016/j.envsoft.2011.07.018}, author={Tesfa, Teklu K. and Tarboton, David G. and Watson, Daniel W. and Schreuders, Kimberly A. T. and Baker, Matthew E. and Wallace, Robert M.}, journal = {Environmental Modelling \& Software}, year={2011}, pages={1696–1709} }

@article{VerHoef2006, title={Spatial statistical models that use flow and stream distance}, url={https://digitalcommons.unl.edu/usdeptcommercepub/185}, journal={United States Department of Commerce: Staff Publications}, DOI = {10.1007/s10651-006-0022-8}, author={Ver Hoef, Jay M. and Peterson, Erin and Theobald, David}, year={2006} }

@misc{copernicus, title = {Multi Observation Global Ocean 3D Temperature Salinity Height Geostrophic Current and MLD}, author = {E.U. Copernicus Marine Service Information, (CMEMS)}, year = {2020}, howpublished = {Marine Data Store (MDS)}, doi = {https://doi.org/10.48670/moi-00052}}

@article{Dinf, title={A new method for the determination of flow directions and upslope areas in grid digital elevation models}, volume={33}, rights={Copyright 1997 by the American Geophysical Union.}, ISSN={1944-7973}, DOI={10.1029/96WR03137}, number={2}, journal={Water Resources Research}, author={Tarboton, David G.}, year={1997}, pages={309–319}, language={en} }

@article{cressie2006, title={Spatial prediction on a river network}, volume={11}, ISSN={1537-2693}, DOI={10.1198/108571106X110649}, number={2}, journal={Journal of Agricultural, Biological, and Environmental Statistics}, author={Cressie, Noel and Frey, Jesse and Harch, Bronwyn and Smith, Mick}, year={2006}, pages={127}, language={en} }

@article{torgegram, title={The Torgegram for Fluvial Variography: Characterizing Spatial Dependence on Stream Networks}, volume={26}, DOI={10.1080/10618600.2016.1247006}, journal={Journal of Computational and Graphical Statistics}, author={Zimmerman, Dale and Ver Hoef, Jay}, year={2016} }

@article{peterson2007, title={Geostatistical modelling on stream networks: Developing valid covariance matrices based on hydrologic distance and stream flow}, volume={52}, DOI={10.1111/j.1365-2427.2006.01686.x}, journal={Freshwater Biology}, author={Peterson, Erin and Theobald, David and Ver Hoef, Jay}, year={2007}, pages={267–279} }

@article{rivers20102, title={A mixed-model moving-average approach to geostatistical modeling in stream networks}, volume={91}, ISSN={0012-9658}, number={3}, journal={Ecology}, publisher={Ecological Society of America}, author={Peterson, Erin E. and Ver Hoef, Jay M.}, year={2010}, pages={644–651} }

@article{clemente2026, title={Nonparametric estimators over metric graphs}, ISSN={1464-3510}, DOI={10.1093/biomet/asag029}, journal={Biometrika}, author={Clemente, Aldo and Arnone, Eleonora and Mateu, Jorge and Sangalli, Laura M}, year={2026}, month={apr}, pages={asag029} }

@article{Tomasetto_Arnone_Sangalli_2024, title={Modeling Anisotropy and Non-Stationarity Through Physics-Informed Spatial Regression}, volume={35}, rights={© 2024 The Author(s). Environmetrics published by John Wiley & Sons Ltd.}, ISSN={1099-095X}, DOI={10.1002/env.2889}, number={8}, journal={Environmetrics}, author={Tomasetto, Matteo and Arnone, Eleonora and Sangalli, Laura M.}, year={2024}, pages={e2889}, language={en} }

@article{rivers2010, title={A Moving Average Approach for Spatial Statistical Models of Stream Networks}, volume={105}, DOI={10.1198/jasa.2009.ap08248}, journal={Journal of the American Statistical Association}, author={Ver Hoef, Jay and Peterson, Erin}, year={2010}, pages={6–18} }

@article{heatwaves3, title={Marine heat waves in the Mediterranean Sea: An assessment from the surface to the subsurface to meet national needs}, volume={10}, ISSN={2296-7745}, url={https://www.frontiersin.org/journals/marine-science/articles/10.3389/fmars.2023.1045138/full}, DOI={10.3389/fmars.2023.1045138}, journal={Frontiers in Marine Science}, publisher={Frontiers}, author={Dayan, Hugo and McAdam, Ronan and Juza, Mélanie and Masina, Simona and Speich, Sabrina}, year={2023}, language={English} }

@article{hotspot, title={Climate change hot-spots}, volume={33}, rights={Copyright 2006 by the American Geophysical Union.}, ISSN={1944-8007}, url={https://onlinelibrary.wiley.com/doi/abs/10.1029/2006GL025734}, DOI={10.1029/2006GL025734}, number={8}, journal={Geophysical Research Letters}, author={Giorgi, F.}, year={2006}, language={en} }

@article{heatwaves2, title={Biological Impacts of Marine Heatwaves}, volume={15}, ISSN={1941-1405, 1941-0611}, DOI={10.1146/annurev-marine-032122-121437}, journal={Annual Review of Marine Science}, publisher={Annual Reviews}, author={Smith, Kathryn E. and Burrows, Michael T. and Hobday, Alistair J. and King, Nathan G. and Moore, Pippa J. and Gupta, Alex Sen and Thomsen, Mads S. and Wernberg, Thomas and Smale, Dan A.}, year={2023}, pages={119–145}, language={en} }

@article{ERSEM, title={ERSEM 15.06: a generic model for marine biogeochemistry and the ecosystem dynamics of the lower trophic levels}, volume={9}, ISSN={1991-959X}, DOI={10.5194/gmd-9-1293-2016}, number={4}, journal={Geoscientific Model Development}, publisher={Copernicus GmbH}, author={Butenschön, Momme and Clark, James and Aldridge, John N. and Allen, Julian Icarus and Artioli, Yuri and Blackford, Jeremy and Bruggeman, Jorn and Cazenave, Pierre and Ciavatta, Stefano and Kay, Susan and Lessin, Gennadi and van Leeuwen, Sonja and van der Molen, Johan and de Mora, Lee and Polimene, Luca and Sailley, Sevrine and Stephens, Nicholas and Torres, Ricardo}, year={2016}, pages={1293–1339}, language={English} }

@article{Barbi_Menafoglio_Secchi_2023, title={An object-oriented approach to the analysis of spatial complex data over stream-network domains}, volume={58}, ISSN={2211-6753}, DOI={10.1016/j.spasta.2023.100784}, journal={Spatial Statistics}, author={Barbi, Chiara and Menafoglio, Alessandra and Secchi, Piercesare}, year={2023}, pages={100784} }

@misc{copernicusprojections, title = {Marine biogeochemistry data for the Northwest European Shelf and Mediterranean Sea from 2006 up to 2100 derived from climate projections}, author = {Copernicus Climate Change Service, (C3S)}, year = {2020}, howpublished = {Copernicus Climate Change Service (C3S) Climate Data Store (CDS)}, doi = {10.24381/cds.dcc9295c}}

@article{spdeLindgren, title={Spatial Models Generated by Nested Stochastic Partial Differential Equations, with an Application to Global Ozone Mapping}, volume={5}, ISSN={1932-6157}, number={1}, journal={The Annals of Applied Statistics}, publisher={Institute of Mathematical Statistics}, author={Bolin, David and Lindgren, Finn}, year={2011}, pages={523–550} }

@article{spdeLindgren-2, title={An explicit link between Gaussian fields and Gaussian Markov random fields: the stochastic partial differential equation approach}, volume={73}, rights={© 2011 Royal Statistical Society}, ISSN={1467-9868}, DOI={10.1111/j.1467-9868.2011.00777.x}, number={4}, journal={Journal of the Royal Statistical Society: Series B (Statistical Methodology)}, author={Lindgren, Finn and Rue, Håvard and Lindström, Johan}, year={2011}, pages={423–498}, language={en} }

@article{clarotto, title={The SPDE approach for spatio-temporal datasets with advection and diffusion}, volume={62}, ISSN={2211-6753}, DOI={10.1016/j.spasta.2024.100847}, journal={Spatial Statistics}, author={Clarotto, Lucia and Allard, Denis and Romary, Thomas and Desassis, Nicolas}, year={2024}, month={aug}, pages={100847} }

@article{ramsay2002, title={Spline Smoothing over Difficult Regions}, volume={64}, ISSN={1369-7412}, DOI={10.1111/1467-9868.00339}, number={2}, journal={Journal of the Royal Statistical Society Series B: Statistical Methodology}, author={Ramsay, Tim}, year={2002}, pages={307–319} }

@article{objectoriented, title={Statistical analysis of complex and spatially dependent data: A review of Object Oriented Spatial Statistics}, volume={258}, ISSN={0377-2217}, DOI={10.1016/j.ejor.2016.09.061}, number={2}, journal={European Journal of Operational Research}, author={Menafoglio, Alessandra and Secchi, Piercesare}, year={2017}, pages={401–410} }

@Manual{R, title = {R: A Language and Environment for Statistical Computing}, author = {{R Core Team}}, organization = {R Foundation for Statistical Computing}, address = {Vienna, Austria}, year = {2023}, url = {https://www.R-project.org/}}

@article{extremeeventsHuser, title={Estimating high-resolution Red Sea surface temperature hotspots, using a low-rank semiparametric spatial model}, volume={15}, ISSN={1932-6157, 1941-7330}, DOI={10.1214/20-AOAS1418}, number={2}, journal={The Annals of Applied Statistics}, publisher={Institute of Mathematical Statistics}, author={Hazra, Arnab and Huser, Raphaël}, year={2021}, month={jun}, pages={572–596} }

@article{ExtremeeventsBolin, title={Excursion and Contour Uncertainty Regions for Latent Gaussian Models}, volume={77}, ISSN={1369-7412}, DOI={10.1111/rssb.12055}, number={1}, journal={Journal of the Royal Statistical Society Series B: Statistical Methodology}, author={Bolin, David and Lindgren, Finn}, year={2015}, month={jan}, pages={85–106} }

@article{ExtremeeventsFrench, title={Spatio-temporal exceedance locations and confidence regions}, volume={7}, ISSN={1932-6157, 1941-7330}, DOI={10.1214/13-AOAS631}, number={3}, journal={The Annals of Applied Statistics}, publisher={Institute of Mathematical Statistics}, author={French, Joshua P. and Sain, Stephan R.}, year={2013}, month={sep}, pages={1421–1449} }

@article{CMIP5, title={An Overview of CMIP5 and the Experiment Design}, url={https://journals.ametsoc.org/view/journals/bams/93/4/bams-d-11-00094.1.xml}, DOI={10.1175/BAMS-D-11-00094.1}, journal = {Bulletin of the American Meteorological Society}, author={Taylor, Karl E. and Stouffer, Ronald J. and Meehl, Gerald A.}, year={2012}, month={apr}, language={en} }

@article{SSTGaussian, title={Linear Gaussian state-space model with irregular sampling: application to sea surface temperature}, volume={25}, ISSN={1436-3259}, DOI={10.1007/s00477-010-0442-8}, number={6}, journal={Stochastic Environmental Research and Risk Assessment}, author={Tandeo, Pierre and Ailliot, Pierre and Autret, Emmanuelle}, year={2011}, month={aug}, pages={793–804}, language={en} }

@article{anderes2020, title={Isotropic covariance functions on graphs and their edges}, volume={48}, ISSN={0090-5364, 2168-8966}, DOI={10.1214/19-AOS1896}, number={4}, journal={The Annals of Statistics}, publisher={Institute of Mathematical Statistics}, author={Anderes, Ethan and Møller, Jesper and Rasmussen, Jakob G.}, year={2020}, month={aug}, pages={2478–2503} }

@article{bolin2024, title={Gaussian Whittle–Matérn fields on metric graphs}, volume={30}, ISSN={1350-7265}, DOI={10.3150/23-BEJ1647}, number={2}, journal={Bernoulli}, publisher={Bernoulli Society for Mathematical Statistics and Probability}, author={Bolin, David and Simas, Alexandre B. and Wallin, Jonas}, year={2024}, month={may}, pages={1611–1639} }

@article{funcboxplot, title={Functional Boxplots}, volume={20}, ISSN={1061-8600}, DOI={10.1198/jcgs.2011.09224}, number={2}, journal={Journal of Computational and Graphical Statistics}, publisher={Taylor & Francis}, author={Sun, Ying and Genton, Marc G.}, year={2011}, month={jan}, pages={316–334} }

@article{rcp, title={The representative concentration pathways: an overview}, volume={109}, ISSN={1573-1480}, DOI={10.1007/s10584-011-0148-z}, number={1}, journal={Climatic Change}, author={van Vuuren, Detlef P. and Edmonds, Jae and Kainuma, Mikiko and Riahi, Keywan and Thomson, Allison and Hibbard, Kathy and Hurtt, George C. and Kram, Tom and Krey, Volker and Lamarque, Jean-Francois and Masui, Toshihiko and Meinshausen, Malte and Nakicenovic, Nebojsa and Smith, Steven J. and Rose, Steven K.}, year={2011}, month={aug}, pages={5}, language={en} }

@article{watertemperature, title={Climate change and freshwater biodiversity: detected patterns, future trends and adaptations in northern regions}, volume={84}, rights={© 2008 The Authors Journal compilation © 2008 Cambridge Philosophical Society}, ISSN={1469-185X}, DOI={10.1111/j.1469-185X.2008.00060.x}, number={1}, journal={Biological Reviews}, author={Heino, Jani and Virkkala, Raimo and Toivonen, Heikki}, year={2009}, pages={39–54}, language={en} }

@article{Azzimonti_Sangalli_Secchi_Domanin_Nobile_2015, title={Blood Flow Velocity Field Estimation Via Spatial Regression With PDE Penalization}, volume={110}, ISSN={0162-1459}, DOI={10.1080/01621459.2014.946036}, number={511}, journal={Journal of the American Statistical Association}, publisher={Taylor & Francis}, author={Azzimonti, Laura and Sangalli, Laura M. and Secchi, Piercesare and Domanin, Maurizio and Nobile, Fabio}, year={2015}, pages={1057–1071} }
